\documentclass[11pt]{article}
\usepackage{fullpage}
\usepackage{amsmath,amsthm,amsfonts,amssymb,amscd}
\usepackage{enumitem}
\usepackage{color}
\usepackage
[
colorlinks=true,
urlcolor=blue,
linkcolor=blue,
citecolor=blue,
]
{hyperref}

\newcommand{\rparagraph}[1]{ \paragraph{#1}}

\newtheorem{lemma}{Lemma}
\newtheorem{theorem}[lemma]{Theorem}

\newtheorem{definition}[lemma]{Definition}
\newtheorem{proposition}[lemma]{Proposition}

\newtheorem{claim}[lemma]{Claim}

\numberwithin{lemma}{section}
\numberwithin{question}{section}

\newtheorem{informal theorem}[lemma]{Informal Theorem}
\newtheorem{open problem}[lemma]{Open Problem}

\newcommand{\remph}[1]{\textsf{#1}}

\newcommand{\set}[1]{\left \{{#1} \right \}}
\newcommand{\poly}{\operatorname{poly}}

\newcommand{\remove}[1]{}

\newcommand{\B}{\{0,1\}}
\newcommand{\ar}{\rightarrow}

\newcommand{\eps}{\epsilon}
\newcommand{\from}{\leftarrow}

\DeclareMathOperator*{\Exp}{\mathbb{E}}

\newcommand{\Hi}{H_{\infty}}

\newcommand{\half}{\frac{1}{2}}

\newcommand{\Con}{\mathsf{Con}}
\newcommand{\Red}{\mathsf{Red}}

\newcommand{\PRG}{\mathsf{PRG}}
\newcommand{\HF}{\mathsf{hard}\mbox{-}\mathsf{function}}
\newcommand{\PEG}{\mathsf{PEG}}

\newcommand{\PRGB}{\mathsf{PRG}\mbox{-}\mathsf{breaks}}
\newcommand{\HFB}{\mathsf{hard}\mbox{-}\mathsf{function}\mbox{-}\mathsf{breaks}}
\newcommand{\PEGB}{\mathsf{PEG}\mbox{-}\mathsf{breaks}}

\newcommand{\HFtoPRG}{\HF\Rightarrow\PRG}
\newcommand{\rhoHFtoPRG}[1]{{#1}\mbox{-}\HF\Rightarrow\PRG}
\newcommand{\HFtoepsPRG}[1]{\HF\Rightarrow{#1}\mbox{-}\PRG}
\newcommand{\rhoHFtoepsPRG}[2]{{#1}\mbox{-}\HF\Rightarrow{#2}\mbox{-}\PRG}

\newcommand{\HFtoPEG}{\HF\Rightarrow\PEG}

\newcommand{\rhoHFtokepsPEG}[3]{{#1}\mbox{-}\HF\Rightarrow({#2},{#3})\mbox{-}\PEG}

\newcommand{\HFtorhoHF}[1]{\HF\Rightarrow{#1}\mbox{-}\HF}
\newcommand{\rhoHFtorhoHF}[2]{{#1}\mbox{-}\HF\Rightarrow{#2}\mbox{-}\HF}
\newcommand{\rhoHF}[1]{{#1}\mbox{-}\HF}

\newcommand{\Ti}{\operatorname{Time}}
\newcommand{\Tit}{\Ti_{\operatorname{all}}}
\newcommand{\Fix}{\operatorname{Fix}}

\newcommand{\Func}{\mathcal{F}}

\newcommand{\RndE}{\mathsf{RndE}}

\newcommand{\Free}{\operatorname{Free}}
\newcommand{\Fill}{\operatorname{Fill}}
\newcommand{\Select}{\operatorname{Select}}
\newcommand{\Assign}{\operatorname{Assign}}

\newcommand{\Rest}{\mathsf{R}}
\newcommand{\wt}{\operatorname{weight}}

\newcommand{\Enc}{\operatorname{Enc}}
\newcommand{\Dec}{\operatorname{Dec}}
\newcommand{\shared}{\operatorname{shared}}
\newcommand{\List}{\operatorname{List}}

\newcommand{\RNSY}{\operatorname{RNSY}}

\title{On Hardness Assumptions Needed for ``Extreme High-End'' PRGs and Fast Derandomization}
\author{Ronen Shaltiel\thanks{Department of computer science, University of Haifa. E-mail: \texttt{ronen@cs.haifa.ac.il}. This research was supported by ISF grants 1628/17 and 1006/23.} \and Emanuele Viola\thanks{Khoury College of Computer Sciences, Northeastern University, E-mail: \texttt{viola@ccs.neu.edu}. Supported by NSF CCF award 1813930 and NSF CCF award 2114116.}}

\begin{document}

\begin{titlepage}
\maketitle

\begin{abstract}
The hardness vs.~randomness paradigm aims to explicitly construct pseudorandom generators $G:\{0,1\}^r \to \{0,1\}^m$ that fool circuits of size $m$, assuming the existence of explicit hard functions. A ``high-end PRG'' with seed length $r=O(\log m)$ (implying BPP=P) was achieved in a seminal work of Impagliazzo and Wigderson (STOC 1997), assuming \textsc{the high-end hardness assumption}: there exist constants $0<\beta < 1< B$, and functions computable in time $2^{B \cdot n}$ that cannot be computed by circuits of size $2^{\beta \cdot n}$.

Recently, motivated by fast derandomization of randomized algorithms, Doron et al.~(FOCS 2020) and Chen and Tell (STOC 2021), construct ``extreme high-end PRGs'' with seed length $r=(1+o(1))\cdot \log m$, under qualitatively stronger assumptions.

We study whether extreme high-end PRGs can be constructed from the following scaled version of the assumption which we call \textsc{the extreme high-end hardness assumption}, and
in which $\beta=1-o(1)$ and $B=1+o(1)$. We give a partial negative answer:

\begin{itemize}
\item Doron et al. compose a PEG (pseudo-entropy generator) with an extractor.  The PEG is based on a hardness assumption for MA-type circuits.  We show that black-box PEG constructions from \textsc{the extreme high-end hardness assumption} must have large seed length (and so cannot be used to obtain extreme high-end PRGs by applying an extractor).

To prove this, we establish a new property of (general) black-box PRG constructions from hard functions: it is possible to fix many output bits of the construction while fixing few bits of the hard function. This property distinguishes PRG constructions from typical extractor constructions, and this may explain why it is difficult to design PRG constructions.

\item Chen and Tell compose two PRGs: $G_1:\{0,1\}^{(1+o(1)) \cdot \log m} \to \{0,1\}^{r_2=m^{\Omega(1)}}$ and $G_2:\{0,1\}^{r_2} \to \{0,1\}^m$.  The first PRG is based on \textsc{the extreme high-end hardness assumption}, and the second PRG needs to run in time $m^{1+o(1)}$, and is based on one way functions. We show that in black-box proofs of hardness amplification to $\frac{1}{2}+1/m$, reductions must make $\Omega(m)$ queries, even in the extreme high-end. Known PRG constructions from hard functions are black-box and use (or imply) hardness amplification, and so cannot be used to construct a PRG $G_2$ from \textsc{the extreme high-end hardness assumption}.

The new feature of our hardness amplification result is that it applies even to the extreme high-end setting of parameters, whereas past work does not. Our techniques also improve recent lower bounds of Ron-Zewi, Shaltiel and Varma (ITCS 2021) on the number of queries of local list-decoding algorithms.
\end{itemize}
\end{abstract}
\thispagestyle{empty}
\end{titlepage}


\section{Introduction}

\subsection{Background}
\label{sec:intro:background}

The hardness vs. randomness paradigm (initiated in \cite{Yao82,BM,NW} and followed up by a long sequence of work \cite{BFNW,I95,IW97,STV,KvM,MV99,ISW99,ISWJournal,SU01,Umans02,Umans05,SU05,AIKS16,DMOZ,CT1,CT2}) aims to explicitly construct pseudorandom generators (PRGs) from explicit hard functions.

\begin{definition}[PRGs]
\label{dfn:PRG}
A function $G:\B^r \to \B^m$ is an \remph{$\eps$-PRG} for a function $D:\B^m \ar \B$, if
\[ |\Pr[D(G(U_r))=1]-\Pr[D(U_m)=1]| \le \eps. \]
$G$ is an $\eps$-PRG for a class $\mathcal{D}$ of functions $D:\B^m \to \B$, if for every $D$ in $\mathcal{D}$, $G$ is an $\eps$-PRG for $D$.
If we omit $\eps$ or $\mathcal{D}$, the default choices are $\eps=1/10$, and $\mathcal{D}$ is the class of circuits of size $m$.
\end{definition}

Explicit pseudorandom generators have many applications in computer science. The signature application of PRGs is to derandomize randomized algorithms (by running the algorithm using all outputs of the PRG). This is quantitatively specified in the proposition below.

\begin{proposition}[standard]
\label{thm:PRG=>derandomiztion}
If $G:\B^r \to \B^m$ is a PRG, then every randomized algorithm running in time $m$ can be simulated by a deterministic algorithm in time $\Tit(G) + 2^r \cdot m$, where $\Tit(G)$ is the time it takes to compute the output of $G$ on all $2^r$ inputs, and is obviously upper bounded by $2^r \cdot \Ti(G)$, where $\Ti(G)$ is the time it takes to compute $G$ on a given input.
\end{proposition}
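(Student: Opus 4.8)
The plan is to run the standard ``enumerate all seeds and take a majority vote'' derandomization. Fix a randomized algorithm $A$ running in time $m$ that decides a language $L$ with two-sided error at most $1/3$, and fix an input $x$. Since $A$ runs in time $m$, on input $x$ it reads at most $m$ random bits, so hard-wiring $x$ yields a test $D_x:\B^m\to\B$ given by $D_x(\rho)=A(x;\rho)$; converting the time-$m$ computation (with $x$ fixed) into a circuit puts $D_x$ in the class of size-$m$ circuits that $G$ is assumed to fool, up to the customary constant-factor slackness between running time and circuit size. Hence $G$ is a $(1/10)$-PRG for $D_x$.

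The key step is then immediate. By correctness of $A$ we have $\Pr[D_x(U_m)=1]\ge 2/3$ if $x\in L$ and $\Pr[D_x(U_m)=1]\le 1/3$ if $x\notin L$. The PRG guarantee $|\Pr[D_x(G(U_r))=1]-\Pr[D_x(U_m)=1]|\le 1/10$ then yields $\Pr[D_x(G(U_r))=1]\ge 2/3-1/10>1/2$ when $x\in L$ and $\Pr[D_x(G(U_r))=1]\le 1/3+1/10<1/2$ when $x\notin L$. Equivalently, $x\in L$ if and only if $A(x;G(s))=1$ for strictly more than half of the $2^r$ seeds $s\in\B^r$, and this is the deterministic decision rule.

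It remains to account for the running time. First compute $G(s)$ for all $2^r$ seeds: by definition this costs $\Tit(G)$, which is at most $2^r\cdot\Ti(G)$. Then, for each of the $2^r$ outputs $y=G(s)$, run $A(x;y)$ once; each run costs time $m$, for a total of $2^r\cdot m$, while maintaining a counter of accepting runs. Finally compare the counter to $2^{r-1}$ and output accordingly; this bookkeeping costs $O(2^r)$ and is absorbed into the $2^r\cdot m$ term (as $m\ge 1$). The total is $\Tit(G)+2^r\cdot m$, as claimed.

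There is essentially no obstacle here — this is a textbook argument, which is why the statement is labeled ``standard.'' The only two points deserving a word of care are: (i) the passage from a time-$m$ algorithm with its input hard-wired to a size-$m$ test, which relies on the usual slackness between time and circuit size and on the convention that ``fools size-$m$ circuits'' is meant up to such adjustments; and (ii) the fact that the constant error $1/3$ of $A$ must be bounded far enough below $1/2$ relative to the PRG error $1/10$ (indeed $1/6>1/10$) for the majority vote to be decisive — had $A$ only been assumed to have error close to $1/2$, one would first need to amplify, at a cost to the running time.
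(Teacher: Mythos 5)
Your proof is correct and is exactly the standard seed-enumeration-plus-majority-vote argument that the paper has in mind (the paper labels the proposition ``standard'' and gives no proof of its own). The running-time accounting matches the stated bound $\Tit(G)+2^r\cdot m$, and your two caveats (time-to-circuit-size slackness and the $1/3$ vs.\ $1/10$ error margins) are the right ones.
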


\rparagraph{High-end PRGs that imply BPP=P.}
A corollary of Proposition \ref{thm:PRG=>derandomiztion} is that a PRG $G:\B^{r=O(\log m)} \to \B^m$  with $\Ti(G)=\poly(m)$ implies that BPP=P.\footnote{Note that in this range of parameters there is no reason to distinguish between $\Ti(G)$ and $\Tit(G)$, as $\Tit(G)=\poly(m)$ if and only if $\Ti(G)=\poly(m)$, and this is why past work is stated in terms of $\Ti(G)$ and not $\Tit(G)$.}

Such PRGs are often referred to as \emph{``high-end PRGs"}. Historically, this name aims to distinguish them from weaker \emph{"low-end PRGs''} which have $r=m^{o(1)}$, and $\Ti(G)=2^{O(r)}$, which in turn imply the weaker conclusion that BPP is in subexponential time, see \cite{ISWJournal,SU01} for a discussion.

\rparagraph{Extreme high-end PRGs and fast derandomization.}
Recently, Doron et al. \cite{DMOZ} asked whether it is possible to obtain a faster derandomization. Here, the goal is to show that a randomized algorithm running in time $m$ can be simulated by a deterministic algorithm running in time $O(m^c)$ for the smallest possible constant $c$.

The time of the deterministic simulation of Theorem \ref{thm:PRG=>derandomiztion} depends on both the seed length $r$, and $\Tit(G)$. Note that even if we take $r$ to the extreme\footnote{A PRG must have $r \ge \log m - O( \log \log m)$ as otherwise, a circuit of size $m$ could be hardwired with prefixes of length $r+1$ for all $2^r$ pseudorandom strings, and distinguish a uniform string from a pseudorandom string.}, and have a PRG with $r=1 \cdot \log m$, then the time of the simulation is at least $2^r \cdot m=m^2$. This time can be achieved if furthermore, $\Tit(G)=O(2^r \cdot m)=O(m^2)$ (which follows if $\Ti(G)=O(m)$). This means that we can hope to achieve $c=2$ (that is, a quadratic time simulation) if we have such PRGs, which we will call \emph{``extreme high-end''} PRGs.

\begin{definition}[Extreme high-end PRGs]
\label{dfn:extreme PRGs}
$G:\B^r \to \B^m$ is an \remph{extreme high-end} PRG if:
\begin{description}[noitemsep]
\item{Pseudorandomness:} $G$ is a PRG with seed length $r=(1+o(1)) \cdot \log m$.
\item{Explicitness:} $\Tit(G)=m^{2+o(1)}$ (which follows if $\Ti(G)=m^{1+o(1)}$).
\end{description}
\end{definition}

These parameters are chosen so that an extreme high-end PRG implies that randomized algorithms running in time $m$ can be simulated deterministically in time $m^{2+o(1)}$.

There are reasons to think that this quadratic slowdown is the best possible if one seeks the smallest possible $c$ such that every randomized algorithm running in time $m$ can be simulated deterministic time $m^c$. More precisely, the problem of ``univariate identity testing'' is in BPTIME$(\Tilde{O}(n))$ but not in DTIME$(n^{2-o(1)})$, under certain assumptions on ``fine grained complexity'' introduced by Carmosino et al. \cite{CarmosinoGIMPS16}. See \cite{DMOZ} for details and a discussion.

We remark that this lower bound still allows a deterministic simulation that runs in time $O(m \cdot n)$ (where $n$ is the input length) and a simulation that approaches this time bound was obtained by Chen and Tell \cite{CT1} under certain hardness assumptions. See \cite{CT1} for details and a discussion.

\rparagraph{Hardness implied by PRGs.}
PRGs immediately imply circuit lower bounds that are beyond our current ability.
Consequently, constructing explicit PRGs, requires circuit lower bounds (namely the existence of explicit functions that cannot be computed by small circuits). In particular, high-end PRGs imply the existence of functions $f:\B^{\ell} \to \B$ which cannot be computed by circuits of size $2^{\Omega(\ell)}$, and $\Ti(f)=2^{O(\ell)}$.\footnote{More specifically, Impagliazzo, Shaltiel and Wigderson \cite{ISW99} showed that if $G:\B^r \to \B^m$ is a PRG then for $\ell=r+1$, there is a function $f:\B^\ell \to \B$ (defined by checking whether an input of length $\ell$ is a prefix of an output of $G$) such that $f$ cannot be computed by circuits of size $m$, and $\Tit(f) \approx \Tit(G)$. Note that $\Tit(f)=2^{O(\ell)}$ iff $\Ti(f)=2^{O(\ell)}$.}

\subsubsection{The Impagliazzo-Wigderson (high-end) hardness assumption}
\label{sec:intro:background:IW assumption}

The goal of the hardness vs. randomness program is to construct PRGs based on lower bounds that are as strong (or almost as strong) as the ones implied by the PRG. A major milestone in this program was achieved by Impagliazzo and Wigderson \cite{IW97}.

\begin{theorem}[\cite{IW97}]
\label{thm:IW97}
A high-end PRG follows from the following assumption:
\smallskip
\begin{description}[noitemsep]
\item{} \textsc{The high-end hardness assumption}:  There exist constants $0<\beta< 1< B$, and a function $f:\B^{\ell} \to \B$ that satisfies:
    \begin{description}[noitemsep]
    \item{Hardness:} $f$ cannot be computed by circuits of size $2^{\beta \cdot \ell}$.
    \item{Explicitness:} $\Ti(f) \le 2^{B \cdot \ell}$.
    \end{description}
\end{description}
\end{theorem}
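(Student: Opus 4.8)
The plan is to instantiate the classical hardness-to-randomness pipeline --- worst-case-to-average-case amplification followed by the Nisan--Wigderson generator --- with every parameter tuned so that \emph{exponential} hardness survives each stage. Fix the function $f:\B^\ell\to\B$ from the assumption, with hardness $2^{\beta\ell}$ and $\Ti(f)\le 2^{B\ell}$, and set the output length to $m=2^{c\ell}$ for a sufficiently small constant $c=c(\beta)>0$. With this choice $\ell=\Theta(\log m)$, so $\Ti(f)=2^{B\ell}=\poly(m)$; the goal then becomes to show that any would-be distinguisher for $G$ yields a circuit of size $2^{O(c\ell)}\cdot 2^{o(\ell)}$ computing $f$, which is below $2^{\beta\ell}$ once $c$ is small enough --- a contradiction.

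\emph{Step 1 (worst-case to average-case).} Encode the truth table of $f$ by a locally list-decodable binary code --- e.g.\ a Reed--Muller code concatenated with the Hadamard code in the style of Sudan--Trevisan--Vadhan, or a mild worst-case-to-average-case step followed by the derandomized XOR lemma of Impagliazzo and Wigderson. This produces $f':\B^{\ell'}\to\B$ with $\ell'=O(\ell)$ such that any circuit of size $s'$ predicting $f'$ with advantage $\eps'$ can be converted, using the query-efficient decoder plus $O(\ell)$ bits of nonuniform advice (the index of the correct codeword in the list and any internal coins of the decoder, fixed nonuniformly), into a circuit of size $s'\cdot\poly(\ell,1/\eps')$ computing $f$ exactly. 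I will apply this with $\eps'=2^{-\Omega(\ell)}$ chosen so that $\poly(1/\eps')=2^{O(c\ell)}$, and so that $f'$ is $(\half+\eps')$-hard for circuits of size $2^{\Omega(\ell)}$.

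\emph{Step 2 (the generator and its explicitness).} Take an efficiently constructible combinatorial design $S_1,\dots,S_m\subseteq[d]$ with $|S_i|=\ell'$ and $|S_i\cap S_j|\le\log m$; such designs exist with $d=O(\ell'^2/\log m)$. Define $G:\B^d\to\B^m$ by $G(x)=f'(x_{S_1})f'(x_{S_2})\cdots f'(x_{S_m})$. Since $\ell'=O(\ell)$ and $\log m=c\ell$, the seed length is $d=O(\ell'^2/\log m)=O(\ell^2/(c\ell))=O(\ell)=O(\log m)$, with the hidden constant depending on $\beta$ --- the high-end regime. Each output bit of $G$ is one evaluation of $f'$, i.e.\ one symbol of the code, which --- after writing the $2^\ell$-entry truth table of $f$ in time $2^{(B+1)\ell}$ --- costs $\poly(2^\ell)$ further work; hence $\Ti(G)=2^{O(B\ell)}=\poly(m)$.

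\emph{Step 3 (reconstruction, and the main obstacle).} If $G$ is not a PRG, some circuit $D$ of size $m$ distinguishes $G(U_d)$ from $U_m$ with advantage $>1/10$. The Nisan--Wigderson hybrid argument then yields an index $i$ and a fixing of the seed coordinates outside $S_i$ so that $D$ composed with the maps $x\mapsto f'(x_{S_j})$, $j\ne i$, predicts $f'$ on its $\ell'$ input bits with advantage $>1/(10m)$; after the fixing each such map depends on only $|S_i\cap S_j|\le\log m$ free bits and so is hardwired as a table of $\le m$ entries, making the predictor a circuit of size $\poly(m)=2^{O(c\ell)}$ with advantage $\eps'>2^{-\Omega(\ell)}$. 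Feeding this into Step 1 gives a circuit for $f$ of size $2^{O(c\ell)}\cdot 2^{o(\ell)}<2^{\beta\ell}$, the desired contradiction; hence $G$ is a PRG, and by Step 2 a high-end one. I expect the real work to lie entirely in Step 1: amplifying worst-case hardness all the way to $\half+2^{-\Omega(\ell)}$ average-case hardness while simultaneously keeping the input length linear in $\ell$ \emph{and} keeping the circuit-size overhead of the reconstruction down to $2^{o(\ell)}$, so that $2^{\Omega(\ell)}$-hardness is not spoiled. This is exactly what Impagliazzo's hardcore-set lemma together with a derandomized XOR lemma (via expander walks or a second design), or the list-decoding viewpoint, are needed for, and the delicate bookkeeping of advice length and of how $\eps'$ enters the size blow-up is what lets the single constant $c=c(\beta)$ be chosen to beat $2^{\beta\ell}$.
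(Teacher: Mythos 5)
Your proposal is correct and follows essentially the same route the paper attributes to this result (it is cited from [IW97] and described in Section \ref{sec:intro:bb proofs:hybrid}): worst-case to average-case amplification to hardness $\half+\Theta(1/m)$ via a locally list-decodable encoding (or the derandomized XOR lemma, as in the original proof), composed with the Nisan--Wigderson generator, with $\ell'=O(\ell+\log m)$, seed length $O(\ell'^2/\log m)=O(\log m)$, and a hybrid-argument reconstruction whose $\poly(m)=2^{O(c\ell)}$ size overhead is beaten by choosing $m=2^{c\ell}$ for $c$ small relative to $\beta$. Your parameter bookkeeping (advantage $\eps'\approx 1/m$, $\poly(1/\eps')=2^{O(c\ell)}$ reconstruction cost, $\Ti(G)=\poly(m)$ via writing out the truth table in time $2^{(B+1)\ell}$) matches the standard high-end instantiation.
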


Theorem \ref{thm:IW97} converts hardness into pseudorandomness at close to the ``correct rate'' (as in the converse direction) if one does not care about the precise values of the constants $\beta,B$, and the constant hidden in the $O(\cdot)$ notation in the seed length of the high-end PRG.

\subsubsection{Scaling the Impagliazzo-Wigderson assumption to the extreme high-end}
\label{sec:intro:background:scaled IW assumption}

In the case of extreme high-end PRGs, we insist on seed length $r \approx 1 \cdot \log m$ and the constants $\beta,B$ from the \textsc{The high-end hardness assumption} do matter. Assuming that we don't expect to ``improve'' the hardness of the assumed explicit hard function, we must have $\beta \ge 1-o(1)$ and $B \le 1+o(1)$. Thus, imitating the approach of Impagliazzo and Wigderson \cite{IW97} for the extreme high-end, leads to the following open problem.

\begin{open problem}
\label{open problem:extreme}
Show that an extreme high-end PRG follows from the following assumption:
\begin{description}[noitemsep]
\item{} \textsc{The extreme high-end hardness assumption}: There exists a function $f:\B^{\ell} \to \B$ that satisfies:
    \begin{description}[noitemsep]
    \item{Hardness:} $f$ cannot be computed by circuits of size $2^{(1-o(1)) \cdot \ell}$.
    \item{Explicitness:} $\Ti(f) \le 2^{(1+o(1)) \cdot \ell}$.\footnote{In the case of the extreme high-end, it does make sense to distinguish between $\Ti(f)$ and $\Tit(f)$, and one can consider starting from a weaker explicitness condition in which it is required that $\Tit(f) \le 2^{(2+o(1)) \cdot \ell}$.}
    \end{description}
\end{description}
\end{open problem}

The proof techniques of Impagliazzo and Wigderson \cite{IW97} (as well as of later works \cite{STV,SU01,Umans02}) do not solve Open Problem \ref{open problem:extreme}. As we explain in Section \ref{sec:intro:bb proofs:hybrid}, these proofs rely on the ``hybrid argument'' of \cite{Yao82,GM84}, and even assuming \textsc{the extreme high-end hardness assumption}, one can at best obtain a PRG $G:\B^r \to \B^m$ with seed length $r \ge A \cdot \log m$, where $A > 3$, and actual proofs do worse.

\rparagraph{Recent work on extreme high-end PRGs.}
Recently, Doron et al. \cite{DMOZ}, and Chen and Tell \cite{CT1} gave conditional constructions of extreme high-end PRGs, however, in both cases the assumption used is \emph{stronger} than \textsc{the extreme high-end hardness assumption}. We will elaborate on these results later on.\footnote{Both these papers aim for a slightly weaker goal. Rather than requiring a single PRG with seed length $r=(1+o(1)) \cdot \log m$ and explicitness $m^{1+o(1)}$ as in Definition \ref{dfn:extreme PRGs}, their constructions show that for every $\gamma>0$ there exists a PRG with seed length $r=(1+\gamma) \cdot \log m$ and explicitness $m^{1+\gamma}$. We will not distinguish these two goals in the informal discussion in the introduction.}
An incomparable assumption was very recently used by Chen and Tell \cite{CT2} for constructing ``targeted PRGs'' (which are weaker than PRGs and yet suffice for derandomizing randomized algorithms).

\rparagraph{Goals of this paper.}
In this paper, we investigate the problem of constructing explicit PRGs from explicit hard functions, focusing on open problem \ref{open problem:extreme}. More specifically, we investigate the power of ``black-box proofs'' that convert explicit hard functions into PRGs and related objects. We show limitations on certain recent approaches to solve Open Problem \ref{open problem:extreme}, and hope that this may help to point us in the direction of better constructions. A secondary goal of this paper is to survey recent work and point out the relationship between parameters, and potential barriers for improvement.

\subsection{Black-box proofs}
\label{sec:intro:bb proofs}

\subsubsection{Black-box proofs for PRG constructions and hardness amplification}
\label{sec:intro:bb proofs:PRG}

A black-box proof that converts hard functions into PRGs consists of two parts:
\begin{itemize}[noitemsep]
\item A \emph{construction map}. This is a map that given a candidate function $f:\B^{\ell} \to \B$ produces a candidate PRG $\Con(f)$. (To avoid clutter, we will denote the function $\Con(f)$ by $\Con_f:\B^r \to \B^m$).
\item A \emph{reduction} establishing the correctness of the construction. This is an oracle procedure $\Red^{(\cdot)}$ which given oracle access to an adversary $D$ that breaks the security of $\Con_f$, implements an adversary $C$ that breaks the security of $f$.
\end{itemize}
As we explain in Section \ref{sec:intro:bb proofs:applications}, because of their combinatorial properties, reductions of this type must be \emph{nonuniform} and receive a ``nonuniform advice string'' $\alpha$ (that may depend on the candidate function $f$ and the adversary $D$). This is reflected in the formal definition below.

We will consider ``black-box $\mathsf{A} \Rightarrow \mathsf{B}$ proofs'' for several other choices of primitives $\mathsf{A,B}$ (and not just hard functions and PRGs). One such primitive is functions that are hard on average (meaning that small circuits cannot compute $f$ correctly with high probability on a uniformly chosen input). Such functions can serve both as $\mathsf{A}$ and $\mathsf{B}$ in black-box $\mathsf{A} \Rightarrow \mathsf{B}$ proofs. In order to capture all various scenarios in one definition,
we will use a terminology that will describe a primitive by what it means to ``break the security'' of the primitive.

\begin{definition}
Let $\Func_{n,m}$ denote the set of all functions from $n$ bits to $m$ bits.
\begin{itemize}[noitemsep]
\item For  $G \in \Func_{r,m}$ and $D \in \Func_{m,1}$, we say that $D$ $\eps$-$\PRGB$ $G$, if $G$ is not an $\eps$-PRG for $D$.
\item For $f,C \in \Func_{\ell,1}$, we say that $C$ $\rho$-$\HFB$ $f$, if $\Pr_{x \from U_{\ell}}[C(x)=f(x)] \ge \rho$. We say that $C$ $\HFB$ $f$ if $C$ $1$-$\HFB$ $f$ (meaning that $C=f$). We say that $f$ is a $\rhoHF{\rho}$ for $C$ if $C$ does not $\rho$-$\HFB$ $f$.
    \end{itemize}
\end{definition}

A $\rhoHF{\rho}$ for circuits of a certain size, is an average-case hard function, and the case where $\rho=1$ captures the previously considered notion of worst-case hard functions.

We now formally define black-box $\rhoHFtoepsPRG{\rho}{\eps}$, and $\rhoHFtorhoHF{\rho}{\rho'}$ proofs.

\begin{definition}[Black-box proofs]
\label{dfn:bb}
Given parameters $\ell,r,m,a,\rho,\eps$ (resp. $\ell,\ell',a,\rho,\rho'$) a \remph{black-box $\rhoHFtoepsPRG{\rho}{\eps}$ proof} (resp. a \remph{black-box $\rhoHFtorhoHF{\rho}{\rho'}$ proof}) is a pair $(\Con,\Red)$ of:
\begin{itemize}[noitemsep]
\item A construction map $\Con: \Func_{\ell,1} \to \Func_{r,m}$ (resp. $\Con: \Func_{\ell,1} \to \Func_{\ell',1}$). (We use $\Con_f$ to denote the function $\Con(f)$).
\item An oracle procedure $\Red^{(\cdot)}(x,\alpha)$ such that:
\begin{description}[noitemsep]
\item For every $f \in \Func_{\ell,1}$ and for every $D \in \Func_{m,1}$ such that $D$ $\eps$-$\PRGB$ $\Con_f$ \\ (resp. for every $D \in \Func_{\ell',1}$ such that $D$ $\rho'$-$\HFB$ $\Con_f$),
\item there exists $\alpha \in \B^a$, such that the function $C \in \Func_{\ell,1}$ defined by $C(x)=\Red^D(x,\alpha)$, $\rho$-$\HFB$ $f$.
\end{description}
\end{itemize}
If we omit $\rho$, we mean $\rho=1$. If we omit $\eps$, we mean $\eps=1/10$.
We say that $\Red$ makes $q$ queries, if for every $D \in \Func_{m,1}$, $\alpha \in \B^a$, and $x \in \B^{\ell}$, $\Red^D(x,\alpha)$ makes at most $q$ oracle queries.
\end{definition}

\subsubsection{Parameters of black-box proofs}
\label{sec:intro:bb proofs:parameters}

To the best of our knowledge all hardness vs. randomness proofs of PRG constructions are black-box (or rely on components which are black-box).
In a black-box proof, the advice length $a$ and the number of queries $q$ determine the ``hardness loss'' in the tradeoff. More specifically:

\begin{proposition}[Number of queries determines hardness loss in black-box proofs] $\mbox{}$
\label{prop:bb=>object}
Let $(\Con,\Red)$ be a black-box $\HFtoPRG$ (resp. $\HFtorhoHF{\rho'}$) proof in which $\Red$ makes $q$ queries, and has advice length $a$. If we start from a function $f:\B^{\ell} \to \B$ that cannot be computed by circuits of size $s$, and apply the black-box proof, one can at best obtain that $\Con_f$ is a $\PRG$ (resp. $\rho'$-$\HF$) for circuits of size $m \le \frac{s-a}{q} \le \frac{s}{q}$.
\end{proposition}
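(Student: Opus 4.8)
The plan is to argue by contrapositive: assuming that the conclusion fails, i.e.\ that $\Con_f$ is *not* a $\PRG$ (resp.\ not a $\rho'$-$\HF$) for circuits of size $m$, I will use the reduction $\Red$ to build a circuit of size at most $q \cdot m + a$ (roughly) that computes $f$, contradicting the hardness of $f$ against circuits of size $s$. First I would unpack the failure hypothesis: if $\Con_f$ is not a $\PRG$ for circuits of size $m$, then by Definition \ref{dfn:PRG} there is some circuit $D$ of size at most $m$ with $D$ $\eps$-$\PRGB$ $\Con_f$ (resp.\ a circuit $D$ of size $m$ with $D$ $\rho'$-$\HFB$ $\Con_f$). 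Then by the defining property of a black-box proof (Definition \ref{dfn:bb}), there exists an advice string $\alpha \in \B^a$ such that the function $C(x) = \Red^D(x,\alpha)$ satisfies $C$ $1$-$\HFB$ $f$, i.e.\ $C = f$ as functions on $\B^\ell$.

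Next I would bound the circuit size of $C$. The circuit for $C$ hard-wires the advice string $\alpha$ (contributing at most $a$ to the size, and in any case this is a nonuniform constant we may absorb), runs the oracle procedure $\Red$, and answers each of its oracle queries by simulating the circuit $D$, which has size $m$. Since $\Red$ makes at most $q$ queries on every input and every advice string, the total size of the resulting circuit for $f$ is at most $q \cdot m + a + (\text{size of }\Red\text{'s own computation})$. The cleanest way to phrase the ``at best'' claim — which is what the proposition actually asserts — is to observe that the reduction machinery itself is being treated as free (this is the standard convention in black-box lower bounds: one counts only the query cost and the advice, since the reduction is a fixed procedure), so the effective size is $q \cdot m + a$. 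For this to contradict the assumed hardness of $f$ we need $q \cdot m + a > s$, equivalently $m > (s-a)/q$. Taking the contrapositive, if $m \le (s-a)/q$ then no such $D$ can exist, so $\Con_f$ must be a $\PRG$ (resp.\ $\rho'$-$\HF$) for circuits of size $m$; and since $(s-a)/q \le s/q$, the stated bound $m \le (s-a)/q \le s/q$ is exactly the largest $m$ for which the argument goes through, which is the meaning of ``at best.''

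The only subtlety — and the part I would be most careful to state precisely rather than contradicting — is the phrase ``one can at best obtain.'' This is not a theorem that the construction *fails* for larger $m$; it is the statement that the black-box *proof technique*, as captured by Definition \ref{dfn:bb}, cannot certify security for circuits larger than $(s-a)/q$, because the only circuit lower bound the reduction produces is one of size $q\cdot m + a$ for $f$, and we have only assumed $f$ is hard for size $s$. So the main obstacle is expository rather than mathematical: making clear that we are quantifying the guarantee extractable from the proof, not proving an unconditional limitation on $\Con$. I would also note in passing that the same counting applies verbatim in the $\HFtorhoHF{\rho'}$ case, with ``$\PRGB$'' replaced by ``$\rho'$-$\HFB$'' and ``$\PRG$'' by ``$\rho'$-$\HF$'' throughout, since Definition \ref{dfn:bb} treats the two cases in parallel and the query/advice accounting is identical.
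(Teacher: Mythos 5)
Your proof is correct and follows essentially the same route as the paper's own (brief) justification: the paper likewise argues that a size-$m$ distinguisher $D$ together with the advice string yields a circuit $C=\Red^{D}(\cdot,\alpha)$ of size $q\cdot m + a$ computing $f$, so the guarantee only holds when $q \cdot m + a \le s$, i.e.\ $m \le \frac{s-a}{q} \le \frac{s}{q}$. Your added remarks on interpreting ``at best'' and on treating the reduction's own computation as free are consistent with the paper's accounting, so there is nothing to correct.
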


Loosely speaking, Proposition \ref{prop:bb=>object} follows because, when measuring the size $s$ of the circuit $C=\Red^D(\cdot,\alpha)$ that is implied by the reduction for a circuit $D$ of size $m$, then this circuit is of size $q \cdot m +a$. This gives that $s \ge q \cdot m +a$, implying the proposition.

Every function $f:\B^{\ell} \to \B$ has circuits of size $2^{\ell}$. This means that in order for reductions to be useful in transforming hard-functions into PRGs (or average case hard functions) they must make $q \le \frac{s-a}{m} \le \frac{2^{\ell}-a}{m}$ queries, and are useless for this purpose, if $q \ge 2^{\ell}$. Moreover, in the extreme high-end, $m=2^{(1-o(1)) \cdot \ell}$ and so it is critical that $q \le \frac{s}{m} \le \frac{2^{\ell}}{m} = 2^{o(\ell)} = m^{o(1)} \ll m$.

\rparagraph{Our notion of black-box does not guarantee explicitness.}
We place no limitation on the map $\Con$, and so, the notion of black-box that we use, \emph{does not} enforce that if $f$ can be computed efficiently, then $\Con_f$ can be computed efficiently. This notion of black-box does not imply the \emph{explicitness} of the constructed function $\Con_f$. We make this choice, because we want to show impossibility results on black-box proofs, and this choice makes our results stronger.\footnote{One way to preserve efficiency is to require that there is an oracle machine $A^{(\cdot)}$ (in some complexity class) such that for every $f \in \Func_{\ell,1}$, $A^f$ implements $\Con_f$. See e.g., \cite{ViolaPH}.} We also remark that the terms "black-box" and ``non-black-box'' are used to mean many different things in the literature.\footnote{For example, sometimes the term ``non-black-box'' is used to denote a deterministic simulation of randomized algorithms that is tailored for the specific input supplied to the algorithm. This notion is used for example in the recent work of Chen and Tell \cite{CT2} which constructs ``targeted-PRGs'' that are targeted to the given input, and is unrelated to the notion of ``black-box'' used here.}

\rparagraph{Parameters for black-box proofs for the extreme high-end.}
As a consequence of Proposition \ref{prop:bb=>object}, if we assume \textsc{the extreme high-end hardness assumption}, to obtain an extreme high-end PRG (as in Open Problem \ref{open problem:extreme}) using a black-box $\HFtoPRG$ proof, we first need to solve the following open problem:

\begin{open problem}
\label{open problem:exist proof?}
Does there exist a black-box $\HFtoPRG$ proof with:
\begin{description}[noitemsep]
\item \textbf{Seed length:} $r=(1+o(1)) \cdot \ell$. (Any black-box proof must have $r \ge \ell$).
\item \textbf{Output length:} $m=2^{(1-o(1))\cdot \ell}$. (Any black-box proof must have $m \le 2^{\ell}$).
\item \textbf{Advice string length:} $a=m^{1+o(1)}=2^{(1-o(1))\cdot \ell}$. (Any black-box proof must have $a \ge m$).
\item \textbf{Number of queries:} $q=m^{o(1)}=2^{o(\ell)}$.
\end{description}
\end{open problem}

We stress again that a positive answer to Open Problem \ref{open problem:exist proof?} is a \emph{necessary} condition for using a black-box proofs to construct an extreme high-end PRG from \textsc{the extreme high-end hardness assumption}, however, it is not a \emph{sufficient} condition.

We do not know whether black-box $\HFtoPRG$ proofs as in Open Problem \ref{open problem:exist proof?} exist.\footnote{In fact, the only known lower bound on black-box $\rhoHFtoepsPRG{\rho}{\eps}$ proofs by Shaltiel, Grinberg and Viola \cite{GSV18} shows that if $a \le 2^{\nu \cdot \ell}$ for some constant $\nu>0$, and $\rho<1-2^{-\ell}$ is sufficiently larger than $\half+\eps$ then $q \ge \Omega(\frac{\log(1/(1-\rho))}{\eps^2})$. In the case of interest where $\rho=1$ and $\eps$ is constant, this gives a weak bound of $q \ge \Omega(\ell)$, and even this does not apply in the extreme high-end where $a=2^{(1-o(1)) \cdot \ell}$. Moreover, if we start from average-case hardness, it is open to prove that $q>1$ for a black-box $\rhoHFtoepsPRG{\rho}{\eps}$ proofs with $\rho \le \half+\eps$, even for small~$a$.}
In Section \ref{sec:intro:results} we show obstacles on certain approaches to design a black-box $\HFtoPRG$ proof, meeting the parameters of Open Problem \ref{open problem:exist proof?}. More specifically, we show that certain approaches cannot yield reductions with few queries. Before we describe our results, we give more background on black-box proofs.

\subsubsection{Black-box proofs as codes, extractors, and other applications}
\label{sec:intro:bb proofs:applications}

Black-boxness is often helpful in PRG constructions (as demonstrated in \cite{NW,KvM,MV99,SU01}) as such proofs readily extend to other computational models (e.g, bounded depth circuits, or nondeterministic circuits). There are other motivations to study black-box proofs (in addition to PRG constructions and hardness amplification). In fact, in the connections and applications below, the ``black-boxness'' of the proofs is crucial and helpful.

\rparagraph{List-decodable codes.} Following Sudan, Trevisan and Vadhan \cite{STV}, a Black-box $\HFtorhoHF{\rho'}$ proof $(\Con,\Red)$ yields a ``list-decodable code'' $E:\B^{2^{\ell}} \to \B^{2^{\ell'}}$, defined by $E(f)_y=\Con_{f}(y)$. A consequence of lower bounds on the rate of such codes is that $\ell' \ge \ell + 2 \cdot \log(1/\rho')$, and that $a \ge 2 \cdot \log(1/\rho')$. Viewing the reduction as a ``list'' of $2^a$ procedures (one for every advice string $\alpha \in \B^a$) yields a variant of a ``local list-decoding algorithm'' for the defined code, with the same number of queries.  Techniques developed for black-box proofs \cite{SV08,GSV18,AASY15} have been useful in proving lower bounds on the number of queries of such codes \cite{RSV21}.

\rparagraph{Randomness Extractors:} Following Trevisan \cite{Tre99}, a black-box $\HFtoepsPRG{\eps}$ proof $(\Con,\Red)$ yields a ``randomness extractor'' $E:\B^{2^{\ell}} \times \B^r \to \B^{m}$, defined by $E(f,y)=\Con_f(y)$, and extracting randomness from sources with min-entropy $k=a+\log(1/\eps)+O(1)$. A consequence of lower bounds on extractors \cite{RT} is that $r \ge \ell + 2 \cdot \log(1/\eps)$, and that $a \ge m+\log(1/\eps) - O(1)$. Continuing the analogy of the previous item, the reduction can be viewed as a local list-decoding algorithm for an ``extractor-code'' \cite{TZ}.  Local list-decoding algorithms for (standard) codes, and for extractor-codes are closely related to ``hard-core bits'' for cryptographic primitives (see e.g. \cite{RSV21} for a discussion).

\rparagraph{Other applications.}
In recent years, black-box $\HFtoPRG$ proofs have found numerous applications in areas that are not directly related to pseudorandomness, and rely on ``black-boxness''. Some examples are: Learning and compression algorithms (Carmosino et al. \cite{CarmosinoEtAl} and subsequent work), worst-case to average-case reductions within NP (Hirahara \cite{Hirahara18} and subsequent work), and Kolmogorov Complexity (Allender et al. \cite{power} and subsequent work).

\subsubsection{Black-box proofs and the hybrid argument}
\label{sec:intro:bb proofs:hybrid}

\paragraph{The hybrid argument cannot be used in the extreme high-end.}
Most known black-box $\HFtoPRG$ proofs in the literature rely on \emph{hardness amplification} in order to use the \emph{hybrid argument} of \cite{Yao82,GM84}. That is, to achieve an $\eps$-PRG, the construction is a sequence of two black-box proofs: $\HF\Rightarrow(\half+\frac{\eps}{m})\mbox{-}\HF\Rightarrow\eps\mbox{-}\PRG$. Thus, even for constant~$\eps$, hardness amplification must be performed to $\rho' \le (\half+\frac{1}{m})$. By the bounds in Section \ref{sec:intro:bb proofs:applications}, in the first step, a function $f_1$ with input length $\ell$ must be transformed into a function $f'$ with input length $\ell' \ge \ell+2 \cdot \log \frac{1}{\rho'} \ge \ell+2\log m$.
The final PRG construction will have seed length $r \ge \ell' \ge \ell+2\log m \ge 3 \log m$. This is too large in the extreme high end, where we want $r=(1+o(1)) \cdot \log m$.

\rparagraph{The hybrid argument suffices for the high-end.}
We remark that taking $\ell'=O(\ell + \log m)$ and $r=O(\ell')$ does suffice for the (non-extreme) high-end, and this is how known constructions for the high-end \cite{IW97,STV,SU01,Umans02} are achieved.\footnote{More specifically, hardness amplification can be performed (by a black-box proof) using ``local list-decodable codes'' \cite{STV}, and the second $\rhoHFtoPRG{(\half+\frac{\eps}{m})}$ step is done using the Nisan-Wigderson generator \cite{NW}, which is a black-box proof. Shaltiel and Umans \cite{SU01} and Umans \cite{Umans02} gave an alternative direct transformation from worst-case hard function into PRGs which achieves a better seed length  in the ``low-end''. However, as it also relies on the hybrid argument (and implies hardness amplification), it cannot achieve the extreme high-end.}

\subsection{Our Results}
\label{sec:intro:results}

We show that certain approaches cannot yield a black-box $\HFtoPRG$ proof with the parameters of Open Problem \ref{open problem:exist proof?}, and therefore cannot be used to solve Open Problem \ref{open problem:extreme}. Our results are summarized in Table \ref{table:PRGs and PEGs} and Table \ref{table:hardness amplification}.

\begin{table}[ht]
\small
\caption{Black-box proofs for PRGs and PEGs}
\begin{centering}
\begin{tabular}{|c|c|c|c|c|}
\hline\hline
Result & Type & Range & Condition  & Bound   \\ [0.5ex] 
\hline 

\hline \cite{GSV18} & $\HFtoepsPRG{\eps}$
 & $a \le 2^{\nu \cdot \ell}$ & & $q \ge \Omega(\frac{\ell}{\eps^2})$ \\ [0.5ex]
\hline Thm \ref{thm:fix} & $\HFtoepsPRG{\eps}$, constant $\eps$
 & $a \le \nu \cdot 2^{\ell}$ & $\exists j =o(\frac{2^{\ell}}{\ell}): \Fix_j(\Con) > a+ j\cdot \ell) $ & $q \ge 2^{\ell}$\\ [0.5ex]
\hline Thm \ref{thm:PEG} & $\HF \Rightarrow \eps$-$\PEG$, constant $\eps$
 & $a \le \nu \cdot 2^{\ell}$ & $r < \ell-\log \ell - O(1)$ & $q \ge 2^{\ell}$ \\ [0.5ex]
\hline \cite{IW97} & $\HFtoepsPRG{\eps}$, constant $\eps$
 & $a \le 2^{\nu \cdot \ell}$ &  & $q \le m^{\Theta(1)}$ \\ [0.5ex]
\hline
\hline
\end{tabular}
\end{centering}
\label{table:PRGs and PEGs}
\end{table}

\begin{table}[ht]
\small
\caption{Black-box proofs for hardness amplification}
\begin{centering}
\begin{tabular}{|c|c|c|c|}
\hline\hline
Result & Type & Range   & Bound   \\ [0.5ex] 
\hline 
\hline \cite{GSV18} & $\HFtorhoHF{(\half+\eps)}$
 & $a \le 2^{\nu \cdot \ell}$ & $q \ge \Omega(\frac{\ell}{\eps^2})$ \\ [0.5ex]
\hline Thm \ref{thm:hardness} &  $\HFtorhoHF{(\half+\eps)}$
 & $a \le \nu \cdot 2^{\ell}$  & $q \ge\Omega(\frac{1}{\eps})$ \\ [0.5ex]
\hline Thm \ref{thm:hardness} & $\HFtorhoHF{(\half+\eps)}$, constant $\eps$
 & $a \le \nu \cdot 2^{\ell}$  & $q \ge \Omega(\ell-\log (2a))$ \\ [0.5ex]
\hline \cite{IW97,STV} & $\HFtorhoHF{(\half+\eps)}$
 & $a \le \nu \cdot 2^{\ell}$ &  $q \le \poly(\frac{\ell}{\eps})$ \\ [0.5ex]
\hline
\hline
\end{tabular}
\end{centering}

In both tables above the first three lines are lower bounds, while the last line is an upper bound, and $0 <\nu \le \half$ is some constant.
\label{table:hardness amplification}
\end{table}

\subsubsection{Limitations on constructions of black-box $\HFtoPRG$ proofs}
\label{sec:intro:results:PRG}

We show that for any black-box $\HFtoPRG$ proof $(\Con,\Red)$, if $\Red$ makes $q \le 2^{\ell}$ queries, then $\Con$ must be structured in a way that allows ``fixing many outputs, with small information cost''. More precisely, we introduce a measure $\Fix_j(\Con)$ defined to be the minimal number $h$, so that when $F$ is chosen at random from $\Func_{\ell,1}$, it is possible to fix $j$ outputs of $\Con_F$, while reducing the information about $F$ by only $h$ bits of information.

\begin{definition}[The cost of fixing $j$ outputs]
\label{dfn:fix}
Given $\Con:\Func_{\ell,1} \to \Func_{r,m}$ we define $\Fix_j(\Con)$ to be the minimal number $h$ such that there exist $j$ distinct outputs $z_1,\ldots,z_j \in \B^m$ such that:
\[ \Pr_{F \from \Func_{\ell,1}}[\forall i \in [j]: \exists y_i \in \B^r \mbox{ s.t. }\Con_F(y_i)=z_i] \ge 2^{-h}. \]
\end{definition}

We show that if $\Red$ makes a small number $q$ of queries, then for every $j$ that is not too large, $\Fix_j(\Con) \le a + j \cdot (\log q+O(1))$. Loosely speaking, this means that after a ``fixed cost'' of $a$ bits of information, a large number of outputs of $\Con_F$ can be fixed at the cost of roughly $\log q$ bits of information about $F$, per $m$-bit output.
This is stated formally below:

\begin{theorem}
\label{thm:fix}
There exists a constant $\nu>0$ such that
for every $\rhoHFtoepsPRG{\rho}{\eps}$ proof $(\Con,\Red)$ with parameters $\ell,r,m,a \le \nu \cdot 2^{\ell},\eps \le 1-2^{r-m},\rho>0.51$, if $\Red$ makes $q \le 2^{\ell}$ queries, then for every $j \le \nu \cdot \frac{2^{\ell}}{\ell}$, \[ \Fix_j(\Con) \le a+j \cdot (\log q + O(1)) \le a+j \cdot (\ell + O(1)) . \]
\end{theorem}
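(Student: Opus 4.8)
The plan is to feed the reduction the \emph{canonical distinguisher} for a candidate $\Con_f$. Let $D_f \in \Func_{m,1}$ be the indicator of $\operatorname{Image}(\Con_f) = \{\Con_f(y):y\in\B^r\}$. Then $\Pr[D_f(\Con_f(U_r))=1]=1$ while $\Pr[D_f(U_m)=1]\le |\operatorname{Image}(\Con_f)|/2^m\le 2^{r-m}$, so $D_f$ $\eps$-$\PRGB$s $\Con_f$ whenever $\eps\le 1-2^{r-m}$, which is exactly the hypothesis. Hence Definition~\ref{dfn:bb} supplies, for every $f$, an advice string $\alpha_f\in\B^a$ such that $C_f(x):=\Red^{D_f}(x,\alpha_f)$ agrees with $f$ on a $\rho>0.51$ fraction of inputs. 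Averaging over the $2^a$ advice strings, I would fix once and for all a single $\alpha^*\in\B^a$ and a set $\mathcal G\subseteq\Func_{\ell,1}$ of density $\ge 2^{-a}$ on which $\alpha^*$ works; it then suffices to exhibit $j$ distinct $z_1,\dots,z_j\in\B^m$ with $\Pr_F[\{z_1,\dots,z_j\}\subseteq\operatorname{Image}(\Con_F)\mid F\in\mathcal G]\ge 2^{-j(\log q+O(1))}$, since multiplying by $\Pr[F\in\mathcal G]\ge 2^{-a}$ then gives $\Fix_j(\Con)\le a+j(\log q+O(1))$.

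The next step is to show that for $F\in\mathcal G$ the reduction is \emph{forced to query inside} $\operatorname{Image}(\Con_F)$. Compare $\Red^{D_F}(\cdot,\alpha^*)$ with $\Red^{\mathbf{0}}(\cdot,\alpha^*)$, where $\mathbf 0$ is the all-``no'' oracle: the latter is now a single fixed function, which agrees with at most $2^{H(\rho)\cdot 2^\ell}$ functions of $\Func_{\ell,1}$ (with $H$ the binary entropy function, by a Chernoff bound). Choosing the constant $\nu$ small enough that $H(\rho)<1-\nu$, this is a $2^{-\Omega(2^\ell)}$ fraction of $|\mathcal G|\ge 2^{(1-\nu)2^\ell}$, so for all but a $2^{-\Omega(2^\ell)}$ fraction of $F\in\mathcal G$ the two functions differ; in fact, subtracting the $\ge\rho\cdot 2^\ell$ correct answers of $\Red^{D_F}$ from the $\le(\tfrac12+o(1))\cdot 2^\ell$ correct answers of $\Red^{\mathbf 0}$, they differ on an $\Omega(1)$ fraction of inputs, and on each such input the run of $\Red^{D_F}(\cdot,\alpha^*)$ makes at least one query answered ``yes'', i.e.\ a point of $\operatorname{Image}(\Con_F)$.

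To bootstrap this into \emph{many distinct} image points I would iterate. Having committed to ``light'' image points $z_1,\dots,z_t$ (each hit by few seeds of $\Con_F$) and conditioned on $\{z_1,\dots,z_t\}\subseteq\operatorname{Image}(\Con_F)$, consider the modified oracle $D_F^-:=\mathbf 1_{\operatorname{Image}(\Con_F)\setminus\{z_1,\dots,z_t\}}$. Because the removed points are light, $D_F^-$ still $\eps$-$\PRGB$s $\Con_F$ (the case in which $\operatorname{Image}(\Con_F)$ already contains $\ge j$ non-light points is handled separately, and is easier since there are then few candidate heavy points), so the black-box property again makes the reduction $\rho$-correct with a suitable advice; re-running the comparison of the previous paragraph forces a query into $\operatorname{Image}(\Con_F)\setminus\{z_1,\dots,z_t\}$, a genuinely new image point $z_{t+1}$. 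Since $\eps$ is bounded away from $1$ and $j\le\nu\cdot 2^\ell/\ell$ is small, the light budget lasts for $j$ rounds, so $j$ distinct image points are exposed for a $1-2^{-\Omega(2^\ell)}$ fraction of $F\in\mathcal G$.

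The step I expect to be the main obstacle is the \emph{accounting}: getting the per-point cost down to $\log q+O(1)$ rather than the naive $\ell+\log q+O(1)$, and charging the advice only once across the $j$ rounds. Naively identifying $z_{t+1}$ as ``the $k$-th query of the run on input $x$'' costs $\ell+\log q$ bits and only yields the weaker bound $a+j(\ell+O(1))$ (which, since $q\le 2^\ell$, is implied by the main bound anyway). To save the $\ell$, one must argue that $z_{t+1}$ can be named within a set of size $O(q)$ depending only on $\alpha^*$ and $z_1,\dots,z_t$ --- i.e.\ pin down a canonical, history-determined short run of the reduction (a single input, or $O(1)$ inputs, determined by $\alpha^*,z_1,\dots,z_t$) on which a fresh image point is exposed, so that $z_{t+1}$ is specified by which of that run's $\le q$ queries returned ``yes'' plus $O(1)$ auxiliary bits. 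Granting this, each $j$-tuple $(z_1,\dots,z_j)$ is described by $a+\sum_{t=1}^{j}(\log q+O(1))$ bits; an averaging over all such descriptions (the product set of histories, exactly as in the advice-fixing step of the first paragraph) then produces a \emph{fixed} tuple that lies in $\operatorname{Image}(\Con_F)$ for a $2^{-a-j(\log q+O(1))}$ fraction of $F$, which is the desired bound, and the second inequality follows from $q\le 2^\ell$.
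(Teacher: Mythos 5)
Your opening moves (the image-indicator distinguisher $D_f$, fixing a single advice $\alpha^*$ by averaging to get a set $\mathcal G$ of density $2^{-a}$, and reducing the task to exhibiting $j$ points that lie in $\operatorname{Image}(\Con_F)$ with probability $2^{-j(\log q+O(1))}$ conditioned on $F\in\mathcal G$) match the paper's proof, as does the observation that if all queries are answered $0$ the reduction's run coincides with a fixed, $F$-independent run. But the heart of the theorem is exactly the step you label ``Granting this'': you never show that the fresh image point found in round $t+1$ can be named by a \emph{history-determined} run plus $\log q+O(1)$ bits, and your iteration makes this unattainable as set up. Because you modify the oracle to $D_F^-$ each round, Definition~\ref{dfn:bb} only guarantees that \emph{some} advice works for the new pair $(F,D_F^-)$; it need not be $\alpha^*$, so the single-advice set $\mathcal G$ fixed in round $0$ gives you nothing in later rounds, and re-averaging per round costs $a$ bits per point rather than once. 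Moreover the input $x$ on which $\Red^{D_F^-}(\cdot,\alpha)$ deviates from the all-zero run depends on $F$, and by adaptivity the $k$-th query of that run is not determined by $(\alpha,x,k,z_1,\dots,z_t)$ alone (earlier queries may be answered ``yes'' by unnamed image points), so even the ``naive'' description does not pin down $z_{t+1}$ without further argument. A separate problem: the theorem allows $\eps$ up to $1-2^{r-m}$, so $\eps$ is \emph{not} bounded away from $1$; deleting even one seed's worth of mass from the oracle can already destroy the property that $D_F^-$ $\eps$-$\PRGB$s $\Con_F$, so the ``light-points budget'' step is unsound under the stated hypotheses, and the promised separate treatment of the many-heavy-points case is not given.

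The paper's proof avoids all of this by never changing the distinguisher or the advice. It keeps $D_F$ and the single $\alpha'$ throughout, and iterates on the \emph{set of functions} instead of on the oracle: call $z$ ``$t$-weak'' for the current set $A$ if $\Pr_{F\from A}[z\in\operatorname{Image}(\Con_F)]\ge 2^{-t}$ with $t=\log q+O(1)$; as long as a weak point outside the chosen list exists, add it and condition $A$ on it being in the image, losing only $t$ bits per point. If no weak point exists, then for every $x$ a union bound over the $q$ fixed queries $z^x_1,\dots,z^x_q$ of the all-answers-zero run shows that the fixed function $C(x)$ (the output of that run) agrees with $\Red^{D_F}(x,\alpha')$, hence with $F$, on a $\half+\Omega(1)$ fraction of inputs for most $F$ in a set of density $2^{-(a+jt)}\gg 2^{-\Omega(2^{\ell})}$ --- contradicting the Hamming-ball counting bound you also invoke. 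So the counting argument is used not to force a query per $F$, but to show the weak-point extraction cannot stall before $j_{\max}$ rounds; the chosen weak points themselves are the fixed outputs, which is how the per-point cost of $\log q+O(1)$ and the one-time charge of $a$ are achieved. As written, your proposal identifies the right obstacle but does not overcome it, so there is a genuine gap.
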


Previous limitations on the number of queries for reductions in black-box proofs (of any type) do not apply when $a \ge 2^{\ell/2}$ and therefore are unapplicable in the extreme high-end

We stress that Theorem \ref{thm:fix} is unrelated to the ``hybrid argument'' and applies even for constructions where the correctness of the reduction \emph{does not} rely on the hybrid argument.
Moreover, the result applies for the whole range of parameters, and regardless of the choices of seed length and output length. See Section \ref{sec:thm:fix} for a more general statement and a discussion.

In the next section we use Theorem \ref{thm:fix} to show limitations on the ``PEG + extractor'' approach of \cite{DMOZ}.

\rparagraph{Distinction between black-box PRGs and typical extractors.}
Following Trevisan \cite{Tre99} (see discussion in Section \ref{sec:intro:bb proofs:applications}) we know that construction maps for black-box $\HFtoPRG$ proofs are extractors (regardless of the number of queries used by the reduction). In fact, extractors and black-box proofs are essentially equivalent if we do not restrict the number of queries made by the reduction.

It is standard that if we choose a construction map $\Con$ at random, it will be an extractor. Nevertheless, we show that it is unlikely that a random construction map $\Con$ will have $\Fix_j(\Con) \le a+j \cdot \ell$ (for relevant values of $j$). This implies that:

\begin{theorem}[informal]
\label{thm:typical informal}
It is unlikely that a random construction map (which is an extractor w.h.p) will have a ``useful'' reduction with $q < 2^{\ell}$.
\end{theorem}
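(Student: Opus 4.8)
The plan is to derive the claim from Theorem~\ref{thm:fix}: I will exhibit, for a random construction map $\Con$, a value of $j$ at which $\Fix_j(\Con)$ is so large that the conclusion of Theorem~\ref{thm:fix} cannot hold for any low-query reduction. Model a random $\Con:\Func_{\ell,1}\to\Func_{r,m}$ by choosing, independently for every $f\in\Func_{\ell,1}$ and every $y\in\B^r$, the value $\Con_f(y)$ uniformly in $\B^m$. By the standard probabilistic argument for extractors, with probability $1-o(1)$ such $\Con$ is a $(k,\eps)$-extractor for constant $\eps$ and $k=m+O(1)$, so, since extractors and black-box proofs are equivalent when the number of queries is unrestricted, it underlies a black-box $\HFtoepsPRG{\eps}$ proof with $\rho=1$ and advice length $a=m+O(1)$; we emphasize this to stress the contrast — the very same map will be shown to admit no useful reduction. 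Fix the constant $\nu$ from Theorem~\ref{thm:fix} and assume, as in the extreme high-end, $a=m^{1+o(1)}\le\min(\nu,\tfrac{1}{3})\cdot 2^{\ell}$.

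The heart of the argument is the estimate $\Fix_j(\Con)\ge 2^{\ell}$, holding with probability $1-o(1)$, for $j=\lceil C\cdot 2^{\ell}/m\rceil$ with $C$ a large absolute constant. In the extreme high-end $m=2^{(1-o(1))\ell}\gg\ell$, so $\tfrac{C\cdot 2^{\ell}}{m}\le j\le\nu\cdot\tfrac{2^{\ell}}{\ell}$ and $j\cdot(\ell+O(1))=o(2^{\ell})$, hence $j$ lies in the window to which Theorem~\ref{thm:fix} applies. To see the estimate, observe that $\Fix_j(\Con)\le h$ means there are $j$ distinct $z_1,\dots,z_j\in\B^m$ simultaneously contained in $\operatorname{range}(\Con_f):=\{\Con_f(y):y\in\B^r\}$ for at least $2^{\,2^{\ell}-h}$ functions $f$; once $h\le 2^{\ell}-1$ this forces $\{z_1,\dots,z_j\}\subseteq\operatorname{range}(\Con_f)\cap\operatorname{range}(\Con_{f'})$ for two \emph{distinct} $f,f'$. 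Thus it suffices to show that, with probability $1-o(1)$, no $j$ distinct points of $\B^m$ lie in $\operatorname{range}(\Con_f)\cap\operatorname{range}(\Con_{f'})$ for any pair $f\ne f'$.

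This is a first-moment computation. For fixed $f\ne f'$ and fixed distinct $z_1,\dots,z_j$, the events ``$z_i\in\operatorname{range}(\Con_f)$ for all $i$'' and ``$z_i\in\operatorname{range}(\Con_{f'})$ for all $i$'' depend on the disjoint variable sets $\{\Con_f(y)\}_y$ and $\{\Con_{f'}(y)\}_y$, hence are independent, and each has probability at most $(2^{r})^{j}\cdot 2^{-jm}=2^{j(r-m)}$ (union bound over injective assignments of the distinct $z_i$ to seeds). Summing over all pairs and all $j$-tuples, the bad event has probability at most
\[ \binom{2^{2^{\ell}}}{2}\binom{2^{m}}{j}\,2^{2j(r-m)}\ \le\ 2^{\,2\cdot 2^{\ell}}\cdot 2^{jm}\cdot 2^{2j(r-m)}\ =\ 2^{\,2\cdot 2^{\ell}-j(m-2r)}, \]
which is $2^{-\Omega(2^{\ell})}=o(1)$ once $j\ge C\cdot 2^{\ell}/m$ and $m\gg r$ (so $m-2r\ge m/2$). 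Combining, with probability $1-o(1)$ we get $\Fix_j(\Con)\ge 2^{\ell}>a+j\cdot(\log q+O(1))$ simultaneously for every admissible $a$ and every $q\le 2^{\ell}$; by Theorem~\ref{thm:fix} no such $\Con$ admits a reduction $\Red$ with $q\le 2^{\ell}$ making $(\Con,\Red)$ a valid black-box $\HFtoepsPRG{\eps}$ proof, and by Proposition~\ref{prop:bb=>object} any reduction with $q\ge 2^{\ell}$ is useless. This is the informal claim.

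I expect the delicate point to be the choice of $j$. For $j=O(1)$ a random $\Con$ has only $\Fix_j(\Con)\approx m-r\lesssim m\approx a$, so the estimate $\Fix_j(\Con)>a$ is simply false there; the argument genuinely relies on the phase transition of $\Fix_j$ near $j\approx 2^{\ell}/m$, where the $2^{2\cdot 2^{\ell}}$ union bound over pairs $f\ne f'$ finally dominates the $2^{-j(m-2r)}$ probability that a pair's ranges share $j$ points (equivalently, the pairwise intersections $\operatorname{range}(\Con_f)\cap\operatorname{range}(\Con_{f'})$, of expected size $\approx 2^{2r-m}$, become far too sparse to host $j$ common points across \emph{any} pair), together with the fortunate fact that in the extreme high-end $2^{\ell}/m=2^{o(\ell)}$ still fits inside the admissible window $j\le\nu\cdot 2^{\ell}/\ell$ of Theorem~\ref{thm:fix}. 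A secondary point is that the union bound must be over \emph{pairs of functions} rather than directly over output tuples — the latter incurs a factor $2^{mj}$ that is never recovered and makes the bound vacuous.
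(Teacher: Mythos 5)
Your argument is correct and it delivers the informal claim, but it is organized around a different union bound than the paper's formal version (Theorem~\ref{thm:typical extractors}), so a comparison is worthwhile. The paper uses the full strength of the event $\Fix_j(\Con^E)\le a+j\cdot\ell$: this event supplies a set $S$ of $K=2^{2^{\ell}-(a+j\ell)}$ functions whose ranges all contain the fixed outputs $z_1,\ldots,z_j$, and the union bound runs over all such sets $S$ and all tuples $\bar z$, using independence across the $K$ functions to get the factor $2^{-jK(m-r)}$. That yields a doubly-exponentially small failure probability ($2^{-2^{2^{\ell}/2}}$), for \emph{every} $j>4$, with stretch as small as $m\ge 2r$, under the mild condition $a+j\ell\le\min(j(m-r)-O(1),2^{\ell}/2)$. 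You instead extract only the weakest consequence of low $\Fix_j$ — that at least two distinct $f\ne f'$ have all $j$ outputs in their ranges (valid since your threshold $a+j(\log q+O(1))<2^{\ell}-1$) — and recover the resulting loss by taking $j=\Theta(2^{\ell}/m)$, so that the pair count $2^{2\cdot 2^{\ell}}$ is beaten by $2^{-j(m-2r)}$. This is simpler but quantitatively and parametrically weaker: you get a singly-exponential failure bound $2^{-\Omega(2^{\ell})}$, for one specific $j$ rather than all $j>4$, and you need $m$ well above $2r$ and $m\gg\ell$ so that your $j$ fits inside the window $j\le\nu\cdot 2^{\ell}/\ell$ of Theorem~\ref{thm:fix} — i.e., the argument is tailored to the extreme high-end/large-stretch regime, whereas the paper's statement also covers small stretch. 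Your per-function estimate $2^{j(r-m)}$, the independence across distinct $f$, the concluding appeal to Theorem~\ref{thm:fix} together with Proposition~\ref{prop:bb=>object} for $q\ge 2^{\ell}$, and the citation of the standard extractor calculation (Proposition~\ref{prop:extractors and black-box proofs}) all match the paper; for the informal statement both routes suffice.
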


More details, and a precise statement is given in Section \ref{sec:typical extractors}. This demonstrates that requiring a construction to have $q<2^{\ell}$ and be useful for PRGs (and not just for extractors) places limitations on the structure of the construction.

\rparagraph{Our interpretation.}
Our interpretation of Theorem \ref{thm:fix} is that in order to enable the reduction to make few queries, the construction must ``create a backdoor'' and introduce correlations between different outputs. These correlations are ``slightly harmful'' to the goal of being an extractor. More precisely, having low $\Fix_j(\Con)$ means that there is a source distribution with min-entropy that is very high (only lacking $\Fix_j(\Con)$ bits of information) on which $j$ outputs of the extractor are fixed. This will violate the extractor guarantee if $j$ is close to $2^r$ (but is possible for $j \ll 2^r$ which is the case in Theorem \ref{thm:fix}).

Theorems \ref{thm:fix} and \ref{thm:typical informal} suggest that it is more difficult to design PRG constructions than extractors. In Section \ref{sec:revisit} we review the known $\HFtoPRG$ constructions in the literature (the Nisan-Wigderson PRG \cite{NW} and the Shaltiel-Umans PRG \cite{SU01,Umans02}) observing how they achieve low $\Fix_j(\Con)$ in the high-end, and why they do not achieve this in the extreme high-end.
We hope that understanding conditions that $\HFtoPRG$ constructions must satisfy, may point us to new constructions that may be applicable in the extreme high-end.

\rparagraph{Technique.}
We consider a ``distinguisher'' $D_f:\B^m \to \B$ that answers one iff its input is an output of $\Con_f$. For every function $f \in \Func_{\ell,1}$, as $D_f$ $\PRGB$ $\Con_f$, by Definition \ref{dfn:bb}, there must exist $\alpha \in \B^a$ such that the function $C(x)=\Red^{D_f}(x,\alpha)$ satisfies $C=f$. However, $D_f$ only answers one on $2^r$ out of the possible $2^m$ queries. If $\Red$ does not ask such ``interesting queries'', then it obtains no information on $f$, and cannot hope to reconstruct every $f \in \Func_{\ell,1}$.

How does $\Red$ know to ask interesting queries? The advice string $\alpha$ (that depends on $f$) may give $\Red$ information about interesting queries. However, the information in the advice string is limited by its length $a$, and we show that if $\Red$ is able to find interesting queries for many choices of $f \in \Func_{\ell,1}$ and $x \in \B^{\ell}$, then after this ``fixed cost''  of $a$ bits of information, it is still difficult for $\Red$ to find interesting queries, unless the construction $\Con$ is set up so that many interesting queries (that is, outputs of $\Con_f$) have low information, giving that $\Fix_j(\Con) \le a + j \cdot (\log q+O(1))$ for many values of $j$). The precise details are given in Section~\ref{sec:PRGs and PEGs}.

\subsubsection{Limitations on the ``PEG + Extractor'' approach of \cite{DMOZ}}
\label{sec:intro:results:PEG}

Doron et al. \cite{DMOZ} showed how to construct extreme high end PRGs from a strengthening of \textsc{the extreme high-end hardness assumption} of Open Problem \ref{open problem:extreme}. More specifically, rather than only assuming that $f$ cannot be computed by circuits of size $2^{(1-o(1)) \cdot \ell}$, it is assumed that this holds even for circuits that are allowed to use nondeterminism and randomness (and can be thought of as a nonuniform analog of the class MA). This assumption is significantly stronger then \textsc{the extreme high-end hardness assumption} (although, still plausible).

\rparagraph{The PEG + extractor approach.}
The approach of \cite{DMOZ} is to construct a pseudo-entropy generator (PEGs) (for a specific notion of ``computational entropy'' suggested in \cite{BSW03}). This type of PEG can be thought of as a weak notion of PRGs, that is only guaranteed to fool tests that accept a very small fraction of the $2^m$ inputs:

\begin{definition}[PEGs]
\label{dfn:PEG}
A function $G:\B^r \to \B^m$ is a \remph{$(k,\eps)$-PEG} for a function $D:\B^m \ar \B$, if
$\Pr[D(U_m)=1] \le 2^{k-m}$ then $\Pr[D(G(U_r))=1]-\Pr[D(U_m)=1] \le \eps$. We say that $D$ $(k,\eps)$-$\PEGB$ $G$, if $G$ is not an $\eps$-PEG for $D$.\footnote{We remark that the requirement that $\Pr[D(G(U_r))=1]-\Pr[D(U_m)=1] \le \eps$ is sometimes replaced by the stronger requirement that $\Pr[D(G(U_r))=1] \le \eps$, or following \cite{BSW03}, by the requirement that $\Pr[D(G(U_r))=1] \le \Pr[D(U_m)=1] \cdot 2^{m-k}+\eps$ which is stronger still, if we replace $\eps$ by $\eps'=\eps/2$ and $k$ by $k'=k+\log(1/\eps')$.
We are interested in proving limitations on PEG and so taking a weak definition only makes our results stronger (especially as we are interested in constant $\eps$ and $k \ll m$ and the distinction between $k,\eps$ and $k',\eps'$ is immaterial).}

\end{definition}

More specifically, when given a function $f:\B^{\ell} \ar \B$ (which is hard against the stronger model of circuits equipped with nondeterminism and randomness) the construction of \cite{DMOZ} works in two steps:

\begin{enumerate}[noitemsep]
\item{\textsf{PEG:}} Use the hard function to construct a PEG $\operatorname{PEG}:\B^{r_{\operatorname{PEG}}=o(\ell)} \to \B^{m_{\operatorname{PEG}}=2^{(1-o(1)) \cdot \ell}}$ for $k_{\operatorname{PEG}}=2^{(1-o(1)) \cdot \ell}$.
\item{\textsf{Extractor:}} Use an explicit extractor $\operatorname{EXT}:\B^{m_{\operatorname{PEG}}} \times \B^{r_{\operatorname{EXT}}=(1+o(1)) \cdot \ell} \to \B^{m=2^{(1-o(1)) \cdot \ell}}$ with entropy threshold $k_{\operatorname{PEG}}$ (such explicit constructions are known unconditionally).
\end{enumerate}
The final PRG $G:\B^{r=r_{\operatorname{PEG}}+r_{\operatorname{EXT}}} \ar \B^m$ is obtained by interpreting a string $y \in \B^r$ as two strings $y_1 \in \B^{r_{\operatorname{PEG}}}$ and $y_2 \in \B^{r_{\operatorname{EXT}}}$, and setting $G(y)=\operatorname{EXT}(\operatorname{PEG}(y_1),y_2)$.

\rparagraph{The seed length of a PEG.}
The final seed length of $G$ is $r=r_{\operatorname{PEG}}+r_{\operatorname{EXT}}$. Using lower bounds on the seed length of extractors \cite{RT}, it follow that $r_{\operatorname{EXT}} \ge \log(m_{\operatorname{PEG}}-k_{\operatorname{PEG}}) \ge (1-o(1)) \cdot \ell$. Therefore, in order to achieve $r=(1+o(1)) \cdot \ell$ (as is the case in the extreme high-end) it is crucial to use a PEG with seed length $r_{\operatorname{PEG}}=o(\ell)$. (We note that unlike PRGs, PEGs can potentially achieve $r = o(\log m)$, whereas, as noted earlier, PRGs must have $r \ge \log m-O(\log \log m)$).

Summing up, the construction of Doron et al. \cite{DMOZ} relies on the observation that PEGs are weaker objects than PRGs (and are therefore easier to construct) and that PEGs may have seed length that is significantly shorter than PRGs, so that summing the two seed lengths can still yield an almost optimal seed length.

\rparagraph{Impossibility for black-box $\HFtoPEG$  proofs.}
We show an obstacle on this approach when starting from \textsc{the extreme high-end assumption} of Open Problem \ref{open problem:extreme}. More specifically, we show that black-box $\HFtoPEG$ proofs with $r < \ell - \log(\ell)$ that make $q \le 2^{\ell}$ queries, do not exist.

This means that the seed length of each of the two steps must be roughly $\ell$ and so the total seed length of a PEG + extractor must be at least \[ r_{\operatorname{PRG}}=r_{\operatorname{PEG}} + r_{\operatorname{EXT}} \ge (\ell-o(1)) + (\ell-o(1)) = 2\ell-o(1) > (2-o(1)) \cdot \log m,\] showing an obstacle for achieving extreme high-end PRGs with this approach. This is stated formally in the next theorem.

\begin{theorem}[Impossibility of black-box PEGs with $r<\log m$]
\label{thm:PEG}
There exists a constant $\nu>0$ such that for every
black-box $\rhoHFtokepsPEG{\rho}{k}{\eps}$ proof $(\Con,\Red)$ with parameters $\ell,r<k,m,a \le \nu \cdot 2^{\ell},\eps \le 1-2^{r-m},\rho \ge 0.51$ such that $\Red$  makes $q \le 2^{\ell}$ queries, it follows that:\footnote{We have not yet formally defined the notion of $\rhoHFtokepsPEG{\rho}{k}{\eps}$ proof. However, this definition is obtained by simply replacing ``PRG-break'' with ``PEG-break'' in Definition \ref{dfn:bb}. For completeness, we give the full definition in Section \ref{sec:thm:PEG}.}
 \[r \ge  \ell - \log \ell - O(1).\]
\end{theorem}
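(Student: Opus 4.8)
The plan is to derive Theorem~\ref{thm:PEG} from Theorem~\ref{thm:fix} together with the trivial observation that a function $G:\B^r\to\B^m$ has image of size at most $2^r$, and hence cannot have $2^r+1$ distinct strings in its image no matter which hard function it is built from. The slogan is: if the seed length $r$ of a PEG is small, then more than $2^r$ outputs can be ``fixed'' essentially for free, which is impossible, so $r$ must be large.

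First I would check that Theorem~\ref{thm:fix} (more precisely, its proof) applies verbatim to black-box $\rhoHFtokepsPEG{\rho}{k}{\eps}$ proofs. The proof of Theorem~\ref{thm:fix} uses the PRG structure in exactly one place: for each $f\in\Func_{\ell,1}$ it considers the distinguisher $D_f:\B^m\to\B$ with $D_f(z)=1$ iff $z$ is an output of $\Con_f$, and invokes the black-box property to get $\alpha_f\in\B^a$ such that the function $C(x)=\Red^{D_f}(x,\alpha_f)$ satisfies $C$ $\rho$-$\HFB$ $f$; after that the only feature of $D_f$ that matters is that it accepts at most $2^r$ of the $2^m$ strings (so ``interesting queries'' are rare). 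Thus it suffices to verify that $D_f$ $(k,\eps)$-$\PEGB$ $\Con_f$. Indeed $\Pr[D_f(U_m)=1]=|\operatorname{Image}(\Con_f)|/2^m\le 2^{r-m}\le 2^{k-m}$ using $r<k$, so $D_f$ meets the precondition of Definition~\ref{dfn:PEG}; and $\Pr[D_f(\Con_f(U_r))=1]=1$, so $\Pr[D_f(\Con_f(U_r))=1]-\Pr[D_f(U_m)=1]\ge 1-2^{r-m}\ge\eps$ by the hypothesis $\eps\le 1-2^{r-m}$, i.e.\ $\Con_f$ is not a $(k,\eps)$-PEG for $D_f$. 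Hence the PEG analogue of Theorem~\ref{thm:fix} holds: for every $j\le\nu\cdot\frac{2^{\ell}}{\ell}$ we have $\Fix_j(\Con)\le a+j\cdot(\log q+O(1))\le a+j\cdot(\ell+O(1))$.

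Next I would use this to force $r$ to be large. Assume for contradiction that $r<\ell-\log\ell-c$ for a constant $c=O(\log(1/\nu))$ fixed below, and set $j:=2^r+1$. Then $j\le 2^{r+1}\le 2^{\ell-\log\ell-c+1}=\tfrac{2}{2^{c}}\cdot\tfrac{2^{\ell}}{\ell}$, so choosing $c$ with $2/2^{c}\le\nu$ gives $j\le\nu\cdot\tfrac{2^{\ell}}{\ell}$, and the displayed bound yields $\Fix_j(\Con)\le a+j\cdot(\ell+O(1))<\infty$. But for \emph{every} $F\in\Func_{\ell,1}$ the function $\Con_F:\B^r\to\B^m$ has $|\operatorname{Image}(\Con_F)|\le 2^r<j$, so no $j$ distinct $z_1,\dots,z_j\in\B^m$ can all lie in $\operatorname{Image}(\Con_F)$; hence $\Pr_{F\from\Func_{\ell,1}}[\forall i\in[j]\ \exists y_i\in\B^r:\Con_F(y_i)=z_i]=0$ for every choice of $z_1,\dots,z_j$, so $\Fix_j(\Con)=\infty$ by Definition~\ref{dfn:fix}. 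This contradiction proves $r\ge\ell-\log\ell-O(1)$. (If it happened that $j>2^m$ the same conclusion $\Fix_j(\Con)=\infty$ holds for lack of $j$ distinct strings in $\B^m$; but in the regime of interest $m\gg\ell$, so this does not occur.)

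The substantive work is all inside Theorem~\ref{thm:fix}; the only genuinely new step here is the one-line verification that the canonical distinguisher $D_f$ is a legitimate PEG-breaker, and this is precisely where the hypothesis $r<k$ enters: a PEG only has to fool tests accepting at most a $2^{k-m}$ fraction, and the image of $\Con_f$ must sit below that threshold. Conceptually this is also what makes the ``PEG $+$ extractor'' route fail to beat seed length $\approx\ell$ per component --- the reduction forces the image of $\Con_f$ to be essentially as large as $2^{\ell}$, so the PEG cannot have seed length $o(\ell)$ even though the definition would permit it. I do not expect a serious obstacle; the one thing to be careful about is phrasing the general version of Theorem~\ref{thm:fix} in Section~\ref{sec:thm:fix} so that the PEG instantiation is literally covered, and pinning down the boundary inequality $\eps\le 1-2^{r-m}$ (strict vs.\ non-strict, which matters only in the degenerate case where $\Con_f$ is injective).
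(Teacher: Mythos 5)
Your proposal is correct and matches the paper's own route: the paper proves a PEG version of Theorem~\ref{thm:fix} (Theorem~\ref{thm:PEG fix general}) by observing, exactly as you do, that the canonical distinguisher $D_f$ accepting the image of $\Con_f$ is a legitimate $(k,\eps)$-PEG-breaker since $\Pr[D_f(U_m)=1]\le 2^{r-m}\le 2^{k-m}$ (using $r<k$) and the advantage is at least $1-2^{r-m}\ge\eps$, and then derives Theorem~\ref{thm:PEG} by the same counting argument that the image of $\Con_F$ has at most $2^r<j$ elements, forcing $\Fix_j(\Con)$ to be infinite (the paper takes $j=j_{\max}$ rather than $2^r+1$, an immaterial difference). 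No gaps.
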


Summing up, Theorem \ref{thm:PEG} shows that black-box proofs cannot be used to solve Open Problem~\ref{open problem:extreme} using the PEG + extractor approach of Doron et al. \cite{DMOZ}.\footnote{In light of Theorem \ref{thm:PEG} one may ask how Doron et al. \cite{DMOZ} construct their PEG. Is their proof non-black-box? The answer is that their proof is black-box, but it allows the reduction $\Red$ to use nondeterminism and randomness (and it is this ability that enables the reduction to make few queries). The cost of using these resources is that the reduction only contradicts the hardness of $f$ if it is assumed to be hard even for circuits equipped with these resources. See e.g., Applebaum et al. \cite{AASY15} for a discussion on nondeterministic reductions.}

\rparagraph{Consequences of Theorem \ref{thm:PEG} for ``quantified derandomization''.}
The notion of PEGs in Definition \ref{dfn:PEG} is closely related to ``quantified derandomization'' (introduced by Goldreich and Wigderson \cite{GWquantified}, see survey article by Tell \cite{TellSurvey}). Quantified derandomization is concerned with derandomizing algorithms that err on very few (say less than $2^k$) of the possible $2^m$ values of their $m$ random bits. This means that PEGs are exactly the type of PRGs that are suitable for this derandomization (see \cite{DMOZ,TellSurvey} for a discussion).

Consequently, Theorem \ref{thm:PEG} can also be viewed as a limitation on black-box proofs that obtain PRGs with very short seed for quantified derandomization, starting from \textsc{the extreme high-end hardness assumption}.

\rparagraph{Technique.}
Theorem \ref{thm:PEG} follows from Theorem \ref{thm:fix} (noting that Theorem \ref{thm:fix} also applies to PEGs). Loosely speaking, if $r$ is small, then the number of outputs of $\Con$ is small, and it is impossible for $\Fix_{j}(\Con)$ to be small for large values of $j$, ruling out black-box proofs in which $r$ is small.

\subsubsection{Lower bounds on black-box hardness amplification at the extreme high-end}
\label{sec:intro:results:hardness}

Grinberg, Shaltiel and Viola \cite{GSV18} (continuing a line of previous work \cite{ViolaThesis,SV08,AS11}) proved a lower bound of $q \ge \Omega(\frac{\ell}{\eps^2})$ on the number of queries in reductions for black-box $\HFtorhoHF{(\half+\eps)}$ proofs (a.k.a. hardness amplification). By Proposition \ref{prop:bb=>object}, such bounds imply that using black-box proofs to convert a function $f$ on $\ell$ bits, that cannot be computed by size $s$ into one that is average case hard for circuits of size $m$, one must have $m \le \frac{s-a}{q} \le \frac{s}{q}$ which means that such transformation ``lose a factor $q$ in the hardness''.

In this paper we prove a lower bound of $q \ge \Omega(\frac{1}{\eps})$, which is quantitatively weaker than that of \cite{GSV18}, but unlike \cite{GSV18} it applies in the extreme high-end. That is, our result allows $a=2^{(1-o(1)) \cdot \ell}$ whereas \cite{GSV18} (as well as all previous bounds) only works if $a \le 2^{\nu \cdot \ell}$ for some constant $\nu>0$. (It is open to match the bound of \cite{GSV18} for large $a$).

\begin{theorem}[Lower bounds on black-box hardness amplification at the extreme high-end]
\label{thm:hardness}
Let $(\Con,\Red)$ be a $\HFtorhoHF{(\half+\eps)}$ proof
with parameters $\ell,\ell',a,\rho=1,\rho'=\half+\eps$. If  $\eps \le \frac{1}{10}$, $\ell' \ge \log(1/\eps) + \Omega(1)$ and $a \le \frac{2^{\ell}}{10}$ then $\Red$ must make at least $q$ queries for \[ q\ge \max\left(\Omega\left(\frac{1}{\eps}\right),\Omega\left(\ell-\log(2a)\right)\right). \]
\end{theorem}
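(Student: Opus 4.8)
The plan is to prove the two lower bounds in Theorem~\ref{thm:hardness} separately, since they come from different obstructions. For the $q \ge \Omega(1/\eps)$ bound, the natural approach is an ``averaging / no-information'' argument: if $q$ is too small, then for a random (or carefully chosen) distinguisher $D$ that $\rho'$-breaks $\Con_f$, the reduction $\Red^D(\cdot,\alpha)$ making only $q$ queries cannot extract enough correlation with $f$ to agree with it on a $\half+\eps$ fraction of inputs. Concretely, I would take $f$ to be a uniformly random function in $\Func_{\ell,1}$, and consider the ``canonical'' distinguisher $D = \Con_f$ composed with a threshold predicate, or more simply a $D$ that agrees with $\Con_f$ only slightly better than a coin; one shows that any $q$-query oracle algorithm given bounded advice $a$ can be simulated (up to small statistical error) by an algorithm that, on input $x$, either ignores the oracle or queries it on points whose $\Con_f$-values are essentially independent of $f(x)$ when $q \ll 1/\eps$. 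Summing the contributions over the $q$ queries and the $2^a$ advice strings and using a union/counting bound over the $2^{2^\ell}$ choices of $f$ shows the reconstruction must fail for most $f$ unless $q = \Omega(1/\eps)$. This is essentially the ``weak'' half of the argument and should follow the template of \cite{GSV18} but stripped of the min-entropy/regularity machinery that forces $a \le 2^{\nu\ell}$ there.

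The more interesting bound is $q \ge \Omega(\ell - \log(2a))$ for constant $\eps$, and here I expect to reuse the machinery behind Theorem~\ref{thm:fix}. The idea is that a $\HFtorhoHF{(\half+\eps)}$ proof is a special case of (or directly yields) a black-box proof to which the ``fixing'' phenomenon applies: taking the hard-on-average target to play the role of the PRG/PEG output and a suitable distinguisher $D$ that accepts exactly the strings consistent with $\Con_f$ on a chosen coordinate block, one gets a $D$ that $\rho'$-breaks $\Con_f$, so the reduction must reconstruct $f$ exactly from $q$ queries plus $a$ bits of advice. But each query to this $D$ returns a single bit, so after $q$ queries and $a$ advice bits the reduction has received at most $q + a$ bits of information about $f$ per input $x$; since $\Red^D(x,\alpha)$ must reproduce $f(x)$ for \emph{all} $2^\ell$ values of $x$ while $f$ ranges over $2^{2^\ell}$ functions, a counting argument of the form ``$2^a \cdot (\text{something})^q \ge 2^{2^\ell}$'' together with the structure of $\Con$ (the fact that only $2^{\ell'}$ outputs exist, and $\ell' \ge \log(1/\eps)+\Omega(1)$ is the non-degeneracy hypothesis) forces $q \cdot (\text{bits per query}) \gtrsim \ell - \log a$. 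I would formalize ``bits per query'' by the entropy-deficiency bound $\Fix_j(\Con) \le a + j\cdot(\log q + O(1))$ from Theorem~\ref{thm:fix}, instantiated with $j$ around $2^{\ell'}$ or with $j$ chosen so that fixing $j$ outputs already pins down $f$; the contrapositive of the information bound then gives the claimed $q \ge \Omega(\ell - \log(2a))$.

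**The main obstacle** I anticipate is bridging the gap between hardness amplification (where the ``adversary'' $D$ is a \emph{next-bit / agreement} predicate on a single output string of length $\ell'$, not on a long PRG output of length $m$) and the form of Theorem~\ref{thm:fix}, which is phrased for $\HFtoepsPRG{\rho}{\eps}$ (and $\PEG$) proofs. One needs a clean reduction showing that a black-box $\HFtorhoHF{(\half+\eps)}$ proof either \emph{is} a black-box $\rho$-$\HF \Rightarrow \eps'$-PRG (or PEG) proof for appropriate $\eps'$, so that the $\Fix_j$ bound transfers, or else one must re-run the proof of Theorem~\ref{thm:fix} directly in the hardness-amplification setting, being careful that the distinguisher ``$D_f$ answers $1$ iff its input is consistent with $\Con_f$'' is replaced by a distinguisher tailored to the average-case hardness semantics (accept iff the input, interpreted as a guessed truth-table restricted to a window, matches $\Con_f$ too well). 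A secondary subtlety is that the $\Omega(1/\eps)$ bound and the $\Omega(\ell - \log(2a))$ bound need different choices of target distinguisher, and the theorem claims the \emph{maximum}, so I would prove each as an independent lemma and then combine.

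**Cross-check against the tables.** The stated bound degrades to $\Omega(\ell)$ when $a$ is polynomially bounded and to a trivial bound when $a$ approaches $2^\ell$, which is consistent with the impossibility in that regime evaporating (a proof with $a = 2^\ell$ can hard-wire $f$ entirely); and for small constant $\eps$ the $\Omega(1/\eps)$ term is dominated by $\Omega(\ell - \log 2a)$ exactly when $a$ is not too large, matching the two separate rows of Table~\ref{table:hardness amplification}. I would present the final proof as: (i) a lemma deriving $q \ge \Omega(1/\eps)$ by the information-loss argument; (ii) a lemma deriving $q \ge \Omega(\ell - \log(2a))$ via Theorem~\ref{thm:fix} applied to the induced construction map, using the hypothesis $\ell' \ge \log(1/\eps)+\Omega(1)$ to guarantee enough distinct outputs to fix; (iii) combining them and noting the hypotheses $\eps \le 1/10$, $a \le 2^\ell/10$ ensure the constant $\nu$ from Theorem~\ref{thm:fix} is satisfied.
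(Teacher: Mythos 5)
Your proposal misses the paper's central idea and neither of your two routes closes. The paper proves Theorem~\ref{thm:hardness} by splitting it into Theorems~\ref{thm:hardness eps} and \ref{thm:hardness ell}, and both rest on Lemma~\ref{lem:Red => C}: using Sudan's trick, one feeds the reduction the oracle $\Con_f \oplus z$ and views its success as a function of the \emph{noise string} $z \in \B^{n=2^{\ell'}}$. This turns $\Red$ into a depth-3 circuit $C=\bigvee_{\alpha \in \B^a} C^{\alpha}$ (an OR of $2^a$ many $q$-CNFs) that accepts every $z$ of relative weight at most $\half-\eps$ (since such an oracle $(\half+\eps)$-$\HFB$ $\Con_f$, the reduction must recover all of $f$) yet accepts $U_n$ with probability at most $2^{-(2^{\ell}-a)}$ (since uniform $z$ wipes out $f$). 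The $\Omega(1/\eps)$ bound then comes from a random restriction with $*$-probability $p=\eps/10$ plus H{\aa}stad's switching lemma, collapsing each $C^\alpha$ to a shallow decision tree so that $C_\rho$ becomes an $a$-DNF, and a single clause on $\le a$ literals cannot distinguish $U_t$ from $U^{1/3}_t$ with the required gap unless $a>2^{\ell}/10$; the $\Omega(\ell-\log 2a)$ bound comes from picking one good $C^{\alpha}$ and iteratively reducing its CNF width via the clause-cover dichotomy of Lemma~\ref{lem:reduce}. Your proposal contains neither the $\Con_f\oplus z$ viewpoint nor any circuit-complexity component, and those are what make the argument survive $a$ as large as $2^{\ell}/10$.

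Concretely, your first route (``GSV18 stripped of the min-entropy/regularity machinery'') is exactly the kind of information-theoretic argument the paper says breaks down for large advice: with $a$ up to $2^{\ell}/10$ the advice can encode a constant fraction of the truth table of $f$, and you give no mechanism for why ``essentially independent'' query answers plus a union bound over $2^a$ advice strings still forces failure; this is the step that forces $a\le 2^{\nu\ell}$ in all prior work. Your second route proposes to import Theorem~\ref{thm:fix}, but you flag the bridge as the main obstacle and never build it, and in fact it is not available: the proof of Theorem~\ref{thm:fix} hinges on the image-membership distinguisher $D_f$, which is nonzero on only $2^r$ of $2^m$ points, so queries that miss this sparse set carry no information; in hardness amplification the adversary is any function in $\Func_{\ell',1}$ agreeing with $\Con_f$ on a $\half+\eps$ fraction, there is no sparse ``interesting set,'' and every query returns a (noisy) bit of $\Con_f$. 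Relatedly, your counting ``$q+a$ bits of information about $f$'' is wrong as stated: the reduction makes $q$ queries \emph{per input} $x$, so over all $2^{\ell}$ inputs it can extract vastly more than $q+a$ bits (with a single query per input one can in principle read off an entire truth table, as the Nisan--Wigderson reduction does). The correct replacement for that counting is precisely the completeness/soundness gap in $z$ captured by Lemma~\ref{lem:Red => C}, which your plan does not supply.
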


To the best of our knowledge, Theorem \ref{thm:hardness} is the first bound on the number of queries in black-box hardness amplification that applies for $a \ge 2^{\ell/2}$ and to the extreme high-end.

Using Proposition \ref{prop:bb=>object}, Theorem \ref{thm:hardness} implies that even if one starts from \textsc{the extreme high-end hardness assumption}, then to obtain a $\rhoHF{(\half+\frac{1}{m})}$ for circuits of size $m$ (and apply the hybrid argument as explained in Section \ref{sec:intro:bb proofs:hybrid}) there must be a ``hardness loss'', and $m \le \frac{2^{\ell}}{q} \le \frac{2^{\ell}}{m}$, implying that $m \le 2^{\ell/2}$.

Note that this limitation applies regardless of the length $\ell'$ of the input length of $\Con_f$. This means that a black-box $\HFtoPRG$ proof that relies on hardness amplification and the hybrid argument (that is: $\HFtorhoHF{(\half+\frac{1}{m})} \Rightarrow \PRG$) must have $m \le 2^{\ell/2}$, and this holds even if the seed length $r$ of the PRG is large.

In the next section we show that a similar argument also gives limitations on using hardness amplification together with the ``PRG composition'' approach of Chen and Tell \cite{CT1}.

\rparagraph{Technique.} The work of \cite{GSV18} (as well as previous work in this area) relied on information theoretic techniques that break down for large $a$. Indeed, the proof of Theorem \ref{thm:hardness} uses a different argument. This argument builds on ideas of Applebaum et al. \cite{AASY15} which connect the number of queries required by a reduction (or in the case of \cite{RSV21} a local list-decoding algorithm) to the success that small size, constant depth circuits have in solving the ``coin problem'' (that is distinguishing a sequence of independent tosses of an unbiased coin from a sequence of independent tosses of a slightly biased coin). The proofs of \cite{AASY15,RSV21} do not work for $a \ge 2^{\ell/2}$ and our results are obtained by proving tighter bounds on depth 3 circuits for specific versions of the coin problem that come up in the argument. The proof is given in Section \ref{sec:hardness amplification}. As a consequence, we also improve the bounds of \cite{RSV21} on the number of queries of local list-decoding algorithms, see Section \ref{sec:codes} for details.

\subsubsection{The hybrid argument and the ``PRG composition'' approach of \cite{CT1}}
\label{sec:intro:results:HF}

Chen and Tell \cite{CT1} construct extreme high end PRGs if, in addition to \textsc{the extreme high-end hardness assumption} of Open Problem \ref{open problem:extreme}, one also assumes the existence of one-way functions (OWFs). The existence of OWFs is a standard and widely believed assumption in cryptography. Nevertheless, OWFs (or more generally cryptography) are not known to be implied by extreme high-end PRGs (or other PRGs in complexity theory). Assuming OWFs does not seem necessary.

\rparagraph{The PRG composition approach.}
Chen and Tell \cite{CT1} start from a hard function $f:\B^{\ell} \ar \B$ given by \textsc{the extreme high-end hardness assumption}. Their PRG $G:\B^{(1+o(1)) \cdot \ell} \to \B^{2^{(1-o(1)) \cdot \ell}}$ is obtained by PRG composition, namely $G(y)=G_2(G_1(y))$ where:
\begin{enumerate}[noitemsep]
\item $G_1:\B^{(1+o(1)) \cdot \ell} \to \B^{m_1=2^{\Omega(\ell)}}$ is a PRG against circuits of size $2^{(1-o(1)) \cdot \ell}$ that is constructed from \textsc{the extreme high-end hardness assumption} using hardness amplification, the Nisan-Wigderson PRG, and the hybrid argument.\footnote{More precisely, the cost of the hybrid argument (explained in Section \ref{sec:intro:bb proofs:hybrid}) is measured in terms of the output length $m$ (even if the PRG fools circuits of larger size, as is the case here). This means, that the goal of fooling circuits of size $2^{(1-o(1)) \cdot \ell}$ can be achieved by known black-box proofs (in the same manner explained in Section \ref{sec:intro:bb proofs:hybrid}) from the \textsc{extreme high-end hardness assumption} for PRGs that output $m_1=2^{\Omega(\ell)}$ bits, rather than $m=2^{(1-o(1)) \cdot \ell}$ bits.}
\item $G_2:\B^{m_1=2^{\Omega(\ell)}} \to \B^{m=2^{(1-o(1)) \cdot \ell}}$ is a PRG with modest stretch (polynomial rather than exponential) that suffices to push the output length from $m_1=2^{\Omega(\ell)}$ to $m=2^{(1-o(1)) \cdot \ell}$. Nevertheless, for the composition to be a PRG, it is crucial that $G_2$ can be computed in time $2^{(1-o(1)) \cdot \ell}$ (that is in almost linear time in its output length $m$).\footnote{This requirement is necessary as in the composition one needs to consider a distinguisher for $G_1$ that runs $G_2$ as a procedure, and $G_1$ cannot fool circuits of size larger than $2^{\ell}$. We also remark that in contrast to cryptography, in hardness vs. randomness, efficiency of components is rarely used in proving security of the final primitive, and this is one such rare instance.} Such PRGs indeed follow from the existence of OWFs \cite{HILL}.
\end{enumerate}

A natural question is whether it is possible to construct the PRG $G_2$ from \textsc{the extreme high-end hardness assumption}. This will remove the need for OWFs.

PRGs with polynomial stretch follow from this assumption (and even from weaker versions like \textsc{the high-end hardness assumption} or ``low-end'' versions). This is good news, as it shows that hardness amplification and the hybrid argument \emph{can} yield sufficient stretch in this case.

The issue is that the PRGs constructed by these methods do not run in time that is nearly linear in their output length $m$ (and instead run in time $m^c$ where $c>2$). This means that they are unsuitable for the PRG composition approach (and this is why \cite{CT1} relies on OWFs).

\rparagraph{Our results.}

Theorem \ref{thm:hardness} implies that PRGs (even with modest stretch) cannot run in nearly linear time, if they are obtained using black-box hardness amplification and the hybrid argument, assuming \textsc{the extreme high-end hardness assumption}. This implies that if we only assume \textsc{the extreme high-end hardness assumption} (and do not assume the existence of OWFs) then current techniques cannot yield the PRG $G_2$ required by the PRG composition.

More precisely, we have already seen in Section \ref{sec:intro:results:hardness} that in order to do a hybrid argument for output length $m$, one needs a hardness amplification result that amplifies to below $\half+\frac{1}{m}$, and that Theorem \ref{thm:hardness} implies that $m \le 2^{\frac{\ell}{2}}$. This holds in the extreme high-end, regardless of the relationship between $r$ and $m$. In particular, it also holds when trying to construct PRGs with modest stretch like $G_2$, assuming \textsc{the extreme high-end hardness assumption}. On the other hand, assuming that computing the average-case hard function $\Con_f$ takes at least the time it takes to compute the worst-case hard function $f$, and recalling that $f$ cannot be computed by circuits of size $2^{(1-o(1)) \cdot \ell}$, we conclude that $\Con_f$ cannot be computed in time $2^{(1-o(1)) \cdot \ell}$ (which is at least $m^{2-o(1)}$).

Summing up, after performing hardness amplification, there must be at least a quadratic gap between the time it takes to compute $\Con_f$, and the circuit size for which it is hard on average. This gap is inherited by the final PRG $G_2$. Consequently, $G_2$ cannot run in time smaller than $m^{2-o(1)}$, and in particular, there is in obstacle for obtaining PRGs that run in time nearly linear in $m$, using these techniques (even if the stretch is modest).

This shows an obstacle for using current techniques (that rely on hardness amplification and the hybrid argument) to apply the PRG composition approach of \cite{CT1} assuming only \textsc{the extreme high-end hardness assumption}. This partially explains why \cite{CT1} need the additional assumption that OWFs exist in order to construct the PRG $G_2$.


\subsection{Organization of this paper}
In Section \ref{sec:prelims} we define some notation, and cite some previous work that we use. In Section \ref{sec:PRGs and PEGs} we prove our results on black-box proofs for PRGs and PEGs (Theorem \ref{thm:fix} and Theorem \ref{thm:PEG}).
In Section \ref{sec:hardness amplification} we prove our results on hardness amplification (Theorem \ref{thm:hardness}).
In Section \ref{sec:codes} we use the methodology devised for Theorem \ref{thm:hardness} to improve the lower bounds of Ron-Zewi, Shaltiel and Varma \cite{RSV21} on the number of queries of decoders for locally decodable codes. In Section \ref{sec:open problems} we mention some open problems.

\section{Preliminaries}
\label{sec:prelims}

\paragraph{Distributions and Random Variables.}

We use $X \from D$ to denote the experiment in which $X$ is chosen from distribution $D$. For a set $A$ we use $X \from A$ to denote the experiment in which $X$ is chosen uniformly from $A$. Two distributions $X,Y$ over the same finite domain are $\eps$-close if for every event $A$, $|\Pr[X \in A]-\Pr[Y \in A]| \le \eps$.
For a distribution $X$ over $\B^n$, we define $\Hi(X)=\min_{x \in \B^n} \log \frac{1}{\Pr[X=x]}$.

For $0 \le p \le 1$, we use $U^p_n$ to denote the distribution of $n$ i.i.d. random variables, where each one has probability $p$ to be one. We use $U_n$ to denote $U^{1/2}_n$ (the uniform distributions on $n$ bit strings).

\rparagraph{Hamming distance and weight.}
For two strings $x,y \in \B^n$ we use $\mathsf{dist}(x,y)$ to denote the relative Hamming distance between $x$ and $y$, namely, $\mathsf{dist}(x,y)=|\set{i \in [n]:x_i \ne y_i}|/n$.
We use $\wt(x)$ to denote the absolute Hamming weight, namely $\wt(x)=|\set{i \in [n]:x_i \ne y_i}|$.

\paragraph{Restrictions.}
We use the standard notion of random restrictions. Namely, a restriction $\rho:[n] \to \set{0,1,*}$ restricts some of the variables of a function $f:\B^n \to \B$.

We use $\Select(\rho)$ to denote the subset of free variables, $\Free(\rho)$ to denote the number of free variables, and $\Assign(\rho)$ to be the value of the assigned variables. This is stated formally in the next definition.

\begin{definition}[Restrictions]
\label{dfn:restrictions}
A \remph{restriction} of $n$ variables is a function $\rho:[n] \to \set{0,1,*}$. We define $\Select(\rho)=\set{i: \rho(i)=*}$, $\Free(\rho)=|\Select(\rho)|$, and $\Assign(\rho) \in \B^{n-\Free(\rho)}$ by enumerating the fixed elements $(i_1 < \ldots < i_{n-\Free(\rho)})=[n] \setminus \Select(\rho)$ and defining $\Assign(\rho)_j = \rho(i_j)$.

Given a function $f:\B^n \to \B$ and a restriction $\rho:[n] \to \set{0,1,*}$, and $x \in \B^{\Free(\rho)}$ we define $\Fill_{\rho}(x) \in \B^n$ as follows: Let $i_1<\ldots<i_{\Free(\rho)}$ be the indices on which $\rho$ outputs `$*$',
\[ \Fill_{\rho}(x)_i = \left\{\begin{array}{ll}
        x_j , & \exists j  \mbox{ s.t.\ } i=i_j \\
        \rho(i), & \mbox{otherwise}
        \end{array}\right. \]
Given $f:\B^n \ar \B$, we define $f_\rho:\B^{\Free(\rho)} \to \B$ by
\[ f_{\rho}(x)=f(\Fill_{\rho}(x)). \]
\end{definition}

We use $\Rest^n_p$ to denote the set of restrictions with $p \cdot n$ unrestricted variables. A formal definition is given below.

\begin{definition}[The class $\Rest^n_p$]
\label{dfn:Rest}
For $0 \le p \le 1$, let $\Rest^n_p$ denote the set of restrictions $\rho:[n] \to  \set{0,1,*}$ with $\Free(\rho)=p \cdot n$.
\end{definition}

We use the following switching lemma due to Hastad \cite{Hastad}. The specific version used here, appears in Beame's primer \cite{BeamePrimer}.

\begin{theorem} \cite{Hastad,BeamePrimer}
\label{thm:switching}
Let $C$ be a $q$-CNF over $n$ variables. For every $0 \le p \le 1$, the probability over choosing $\rho \from \Rest^n_p$ that $C_{\rho}$ does not have a decision tree of height $h$ is at most $(7 \cdot p \cdot q)^h$.
\end{theorem}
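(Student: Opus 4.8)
The statement is Håstad's switching lemma as presented in \cite{BeamePrimer}, and the plan is to prove it by Razborov's encoding (``compression'') argument. Fix an arbitrary order on the clauses $C_1,\dots,C_s$ of $C$ and on the $n$ variables, and to every restriction $\rho$ attach its \emph{canonical decision tree} $T(C_\rho)$, built greedily: if $C_\rho$ is a constant function, $T(C_\rho)$ is a leaf; otherwise let $C_i$ be the first clause not satisfied by $\rho$, query (in index order) the set $V$ of its variables that $\rho$ leaves free (so $1\le|V|\le q$), and below each of the $2^{|V|}$ partial assignments to $V$ recurse on $\rho$ extended accordingly --- below the unique assignment that \emph{falsifies} $C_i$ we stop at a $0$-leaf, and below each of the other $2^{|V|}-1$ assignments we continue with the remaining clauses. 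Any decision tree computing $C_\rho$ is at least as tall as some root-to-leaf branch of $T(C_\rho)$, so if $C_\rho$ has no decision tree of height $h$ then $T(C_\rho)$ has a branch querying more than $h$ variables. Call such a $\rho$ \emph{bad}; for a bad $\rho$, fix (say, lexicographically) such a branch and truncate it to its first $h$ queried variables. Those $h$ variables split into consecutive \emph{blocks} $B_1,\dots,B_t$, one per clause touched along the branch, with $|B_j|\ge 1$, $|B_1|+\dots+|B_t|=h$, and $B_1,\dots,B_{t-1}$ each equal to the entire free-variable set of their clause (only the last block $B_t$ may be a proper prefix).

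The heart of the argument is a compression map that injects bad restrictions into a much smaller set. Map a bad $\rho$ to the pair $(\sigma,\beta)$, where $\sigma$ is obtained from $\rho$ by additionally fixing each of the $h$ truncated-branch variables to the value assigned to it by the \emph{falsifying} assignment of its block's clause; thus $\sigma$ fixes exactly $h$ more variables than $\rho$, i.e.\ $\sigma\in\Rest^n_{p-h/n}$. The advice string $\beta$ records, block by block, (i) which positions inside clause $C_{i_j}$ the variables of $B_j$ occupy, and (ii) how the truncated branch actually assigned those variables (which, for each of $B_1,\dots,B_{t-1}$, differs from the falsifying assignment in at least one coordinate). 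A routine count shows there are at most $(cq)^h$ choices for $\beta$ for an absolute constant $c$. I would then prove injectivity of $(\rho,\text{branch})\mapsto(\sigma,\beta)$ by exhibiting a decoder that replays the construction of the canonical decision tree starting from $\sigma$: the first clause not satisfied by $\sigma$ must be $C_{i_1}$ (the earlier clauses are already satisfied by $\rho$, hence by $\sigma$, while $\sigma$ falsifies $C_{i_1}$), and $\beta$ reveals which of its variables form $B_1$ and how the branch assigned them, so we can restore those variables to $*$ (recovering $\rho$ on $C_{i_1}$) and step forward along the branch to $C_{i_2}$; one checks that the clauses between $C_{i_1}$ and $C_{i_2}$ remain satisfied and $C_{i_2}$ remains falsified under the partially-decoded restriction, so the decoder locates it correctly, and iterating recovers all of $\rho$.

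I expect this injectivity step --- getting the decoder to replay the construction correctly, handling the possibly-truncated final block $B_t$, and designing $\beta$ so that it is simultaneously informative enough to decode and cheap enough to keep $c$ small --- to be the main obstacle; everything else is bookkeeping. Granting it, the count is immediate: $|\Rest^n_p|=\binom{n}{pn}2^{(1-p)n}$ and $|\Rest^n_{p-h/n}|=\binom{n}{pn-h}2^{(1-p)n+h}$, so $|\Rest^n_{p-h/n}|/|\Rest^n_p|=2^h\binom{n}{pn-h}/\binom{n}{pn}\le(2p/(1-p))^h$, and hence
\[
\Pr_{\rho\from\Rest^n_p}[\rho\text{ is bad}]\le \frac{|\Rest^n_{p-h/n}|\cdot(cq)^h}{|\Rest^n_p|}\le\left(\frac{2p}{1-p}\right)^h(cq)^h .
\]
If $7pq\ge 1$ the claimed bound is vacuous; otherwise $p<1/7$, so $1/(1-p)<7/6$, and accounting for $\beta$ as tightly as in \cite{BeamePrimer} brings the base of the exponential down to $7pq$, giving the bound $(7pq)^h$.
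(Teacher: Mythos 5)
This statement is not proved in the paper at all: it is imported verbatim (Theorem \ref{thm:switching}) from H{\aa}stad's switching lemma as presented in Beame's primer, so there is no in-paper proof to compare against. Your proposal is, in essence, the standard Razborov-style encoding proof that the cited primer gives: canonical decision tree, injection of bad restrictions into $\Rest^n_{p-h/n}$ times a set of advice strings via the falsifying assignments of the touched clauses, a decoder that replays the construction to establish injectivity, and the ratio $|\Rest^n_{p-h/n}|/|\Rest^n_p| \le (2p/(1-p))^h$. The structure is sound, and the delicate points you flag (non-final blocks equal the full free-variable set of their clause and their branch assignments avoid the falsifying one; the decoder finds $C_{i_j}$ as the first unsatisfied clause because earlier clauses stay satisfied under extensions and later blocks' variables cannot lie in $C_{i_j}$; the truncated last block is located since the block sizes recorded in $\beta$ sum to $h$) are exactly the right ones. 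Two remarks. First, your opening claim that \emph{any} decision tree for $C_\rho$ is at least as tall as some branch of $T(C_\rho)$ is both unnecessary and not obviously true as stated; all you need is that $T(C_\rho)$ computes $C_\rho$, so ``no tree of height $h$'' forces a branch of $T(C_\rho)$ longer than $h$. Second, the only genuinely deferred step is the count of advice strings: to land on the stated constant $7$ you need the base for $\beta$ to be at most about $3q$ (so that $\frac{2p}{1-p}\cdot cq \le 7pq$ for $p<1/7$), and the crude bound $\binom{q}{b}2^b$ per block summed over compositions must be handled with the primer's weighted/generating-function count (giving roughly $(2q/\ln 2)^h$); you explicitly punt this to \cite{BeamePrimer}, which is acceptable for a sketch but is precisely where the ``7'' comes from, so a self-contained write-up would have to include it.
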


\section{Limitations on black-box proofs for PRGs and PEGs}
\label{sec:PRGs and PEGs}

In this section we discuss our main results on black-box proofs for PRGs and PEGs (Theorem \ref{thm:fix} and Theorem \ref{thm:PEG}). In Section \ref{sec:thm:fix} we discuss Theorem \ref{thm:fix}, and in Section \ref{sec:thm:PEG} we discuss Theorem \ref{thm:PEG}. The proofs of both these Theorems is given in Section \ref{sec:prf:fix and PEG}.

\subsection{Limitations on constructions of black-box $\HFtoPRG$ proofs}
\label{sec:thm:fix}

In this section we restate Theorem \ref{thm:fix} and discuss its consequences and interpretation. In Section \ref{sec:thm:fix restate} we restate Theorem \ref{thm:fix} and discuss its interpretation.  In Section \ref{sec:typical extractors} we show that a corollary of Theorem \ref{thm:fix} is that PRG constructions must differ from typical extractor constructions, giving a precise statement of Theorem \ref{thm:typical informal}. In Section \ref{sec:revisit} revisit the Nisan-Wigderson PRG \cite{NW} and the Shaltiel-Umans PRG \cite{SU01,Umans02} and discuss them from the perspective of Theorem \ref{thm:fix}.

\subsubsection{A general statement of Theorem \ref{thm:fix}}
\label{sec:thm:fix restate}

\paragraph{Review of the setup of Theorem \ref{thm:fix}.}
Theorem \ref{thm:fix} shows that for any black-box $\HFtoPRG$ proof $(\Con,\Red)$, if $\Red$ is useful, and makes $q \le 2^{\ell}$ queries, then $\Con$ must be structured in a way that allows ``fixing many outputs, with small information cost''. This is measured by $\Fix_j(\Con)$ from Definition \ref{dfn:fix}, which defines $\Fix_j(\Con)$ to be the minimal number $h$, so that when $F$ is chosen at random from $\Func_{\ell,1}$, it is possible to fix $j$ outputs of $\Con_F$, while only reducing $h$ bits of information about $F$.

Theorem \ref{thm:fix} shows that in order for $\Red$ to be useful and have $q \le 2^{\ell}$, it must be that for many sufficiently large choices of $j$, $\Fix_j(\Con) \le a + j \cdot (\log q +O(1))$. This means that after a ``fixed cost'' of $a$ bits of information, a large number of outputs of $\Con_F$ can be fixed at the cost of $\log q+O(1)$ bits of information about $F$, per output.

\paragraph{A more general statement.}
The following theorem is a generalized version of Theorem \ref{thm:fix} which also allows the parameter $\rho$ (measuring how hard on average is the function we start from) to be very close to $\half$, and then the amortized cost $\log q + O(1)$, is replaced by $\log \frac{q}{\rho-\half}$.

\begin{theorem}
\label{thm:fix general}
Let $(\Con,\Red)$ be a black-box $\rhoHFtoepsPRG{\rho}{\eps}$ proof for parameters $\ell,r,m,a,\eps,\rho$ such that $\Red$ makes at most $q \le 2^{\ell}$ queries. If $\rho=\half+\eta$, $\eta \ge 2^{-\ell}$, $\eps \le 1-2^{r-m}$, and $a \le \nu \cdot \eta^2 \cdot 2^{\ell}$ for some sufficiently small constant $\nu>0$, then for $j_{\max} = \nu \cdot \frac{\eta^2 \cdot 2^{\ell}}{\ell}$, and every $j \le j_{\max}$, \[\Fix_j(\Con) \le a + j \cdot (\log q + \log \frac{4}{\eta}). \]
\end{theorem}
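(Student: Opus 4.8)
The plan is to use the "distinguisher-that-recognizes-outputs" strategy sketched in the introduction, and to turn the existence of a successful reduction into an efficient encoding of a random function $F \from \Func_{\ell,1}$, from which the bound on $\Fix_j(\Con)$ follows by a counting/entropy argument. Fix $j \le j_{\max}$; we want $j$ distinct outputs $z_1,\dots,z_j$ that are simultaneously hit by $\Con_F$ with probability at least $2^{-(a+j(\log q + \log(4/\eta)))}$. First I would set up, for each $f$, the distinguisher $D_f:\B^m \to \B$ with $D_f(z)=1$ iff $z \in \mathrm{Image}(\Con_f)$. Since $|\mathrm{Image}(\Con_f)| \le 2^r$ and $\eps \le 1 - 2^{r-m}$, $D_f$ indeed $\eps$-$\PRGB$ $\Con_f$ (even $\PEGB$ it, which is what gives Theorem \ref{thm:PEG}), so the black-box property supplies an advice string $\alpha_f \in \B^a$ with $\Red^{D_f}(\cdot,\alpha_f)$ $\rho$-$\HFB$ $f$, i.e. it agrees with $f$ on a $\ge \half+\eta$ fraction of inputs $x \in \B^\ell$.

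Next I would run $\Red^{D_f}(x,\alpha_f)$ for all $x \in \B^\ell$ and record the set $Q_f$ of queries $z$ on which $D_f$ answered $1$ — these are the "interesting" queries, and every such $z$ lies in $\mathrm{Image}(\Con_f)$. Since each of the $2^\ell$ executions makes at most $q$ queries, $|Q_f| \le q \cdot 2^\ell$ — but this is far too many; the key point is that we only need a small subcollection that is hit with good probability for a \emph{random} $F$. The heart of the argument is a "compression/encoding" step: I claim that knowing $\alpha_F$, the list $Q_F$ (encoded cleverly, in amortized $\approx \log q + \log(1/\eta)$ bits per element of a well-chosen size-$j$ subset), and which of the queries in each run was the interesting one, lets one recover enough agreement information to reconstruct $F$ on a $\half+\eta$ fraction of points — and since a random $F$ has $2^\ell$ bits of entropy, the number of bits we can "save" is limited. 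Concretely, I would argue: among the $x$'s where $\Red^{D_f}$ succeeds there is a subset of size $\ge \eta \cdot 2^\ell$ (actually the majority bit argument gives more, but $\eta 2^\ell$ suffices) whose successful runs each exhibit at least one interesting query; averaging, some collection of $j$ distinct outputs $z_1,\dots,z_j$ is "responsible" for the success on enough of these $x$'s, and the event that $\Con_F$ hits all $j$ of them can be lower-bounded by $2^{-h}$ with $h \le a + j(\log q + \log(4/\eta))$, because otherwise the total "probability mass times information" budget — formalized via the fact that a random function hitting a prescribed value $z$ in its image costs on average about $\log q$ bits once we condition on the query pattern, plus the $\log(1/\eta)$ slack from the $\half+\eta$ agreement and the $a$ bits of advice — would exceed what $2^\ell$ bits of entropy allow. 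The constraint $a, j\ell \le \nu \eta^2 2^\ell$ is exactly what is needed to keep the error terms in this counting argument (e.g. from Chernoff/Markov over the random $F$ and from the $\ell$ overhead of specifying which run/which query) negligible compared to $\eta 2^\ell$.

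The main obstacle I expect is making the "encoding" step rigorous: one must choose the $j$ outputs $z_1,\dots,z_j$ \emph{before} seeing $F$ (they are universally quantified in Definition \ref{dfn:fix}), so the argument has to be recast as: for a random $F$, with probability $\ge 2^{-h}$ over $F$ the construction $\Con_F$ hits a \emph{fixed} good tuple. I would get around this by a pigeonhole over the (bounded) number of possible advice strings and possible "interesting-query transcripts": there are at most $2^a$ advice strings, and conditioned on $\alpha$, the relevant data is a tuple of at most $j$ outputs drawn from a universe of size $\le q\cdot 2^\ell$ together with $O(j\ell)$ bookkeeping bits, so there are at most $2^{a} \cdot (q 2^\ell)^{j} \cdot 2^{O(j\ell)} = 2^{a + j(\log q + O(\ell))}$ such configurations; since \emph{every} $F$ (there are $2^{2^\ell}$ of them) yields at least one configuration whose associated $j$-tuple is hit by $\Con_F$, some fixed $j$-tuple is hit by a $2^{-(a+j(\log q + O(\ell)))}$ fraction of all $F$. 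Tightening the $O(\ell)$ bookkeeping down to the claimed $\log(4/\eta)$ amortized cost — i.e. showing one does not really need to spend $\log(2^\ell) = \ell$ bits to locate the interesting query inside each successful run, but can instead charge it against the $\eta 2^\ell$ successful runs collectively — is the delicate part, and is presumably where the precise value of $j_{\max}$ and the $\eta^2$ in the hypothesis get used; I would handle it by a second averaging that picks, for the fixed tuple, the single most popular "responsible output" per run and bounds the loss by the switching-lemma-free combinatorial estimate that at most a $\log(1/\eta)$-bit overhead per output is incurred when the agreement is only $\half+\eta$ rather than $1$.
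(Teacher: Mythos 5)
There is a genuine gap, and it sits exactly at the two places you flag as ``delicate''. First, your step ``among the $x$'s where $\Red^{D_f}$ succeeds there is a subset of size $\ge \eta\cdot 2^{\ell}$ whose successful runs each exhibit at least one interesting query'' is not justified and is false in general: if no query is ever answered $1$, the reduction's output is a \emph{fixed} function $C$ (independent of $f$), and $C$ already $(\half+\eta)$-$\HFB$s every $f$ within relative Hamming distance $\half-\eta$ of it. That is a $2^{-O(\eta^2 2^{\ell})}$ fraction of $\Func_{\ell,1}$, which is \emph{not} negligible compared to the $2^{-a}$ fraction you get after fixing the advice, since the theorem allows $a$ as large as $\nu\eta^2 2^{\ell}$. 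So the argument cannot simply charge every success to an interesting query; it must split into cases (``either there is still an output that is hit with noticeable probability over the surviving functions, or the surviving set is so large that a single fixed function cannot predict it''), and the second case needs a quantitative Hamming-ball counting bound --- this is precisely the paper's Claim \ref{clm:C fails} and the iterative process around it, of which your sketch has no analogue. This is also where $j_{\max}=\nu\eta^2 2^{\ell}/\ell$ and the hypothesis $a\le \nu\eta^2 2^{\ell}$ are actually used.

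Second, your pigeonhole over ``configurations'' does not, as written, produce a fixed $j$-tuple of outputs: the universe $Q_F$ of queries depends on $F$ (the queries are adaptive and change with the oracle answers), so identifying an output by (advice, input $x$, query index $i$) does not pin down a string $z\in\B^m$ common to many $F$'s. This can be repaired by the canonical all-zeros run (the paper's Claim \ref{clm:adaptive}), which you do not introduce; but even then the bookkeeping cost is $\ell+\log q$ bits per output, giving only $\Fix_j(\Con)\le a+j(\log q+\ell+O(1))$, i.e.\ Theorem \ref{thm:fix} but not Theorem \ref{thm:fix general}. Your proposed fix (``second averaging over the most popular responsible output per run'') is a hope, not a mechanism. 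The paper gets the amortized cost $\log q+\log\frac{4}{\eta}$ by an entirely different device: it never records \emph{where} an output was queried, but iteratively fixes outputs $z$ that are \emph{$t$-weak}, i.e.\ hit by $\Con_F$ with probability $\ge 2^{-t}$ over the current set of surviving functions, with $t=\log\frac{4q}{\eta}$; each fixing costs only $t$ bits of density, and a union bound over the $q$ canonical queries ($q\cdot 2^{-t}\le\eta/4$) shows that if no weak output remains, the fixed canonical $C$ agrees with a random surviving $F$ on a $\half+\frac{\eta}{2}$ fraction, contradicting the counting bound unless $q>2^{\ell}$. Without the weakness threshold, the canonical run, and the counting claim, your outline does not yield the stated bound.
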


Theorem \ref{thm:fix} immediately follows from Theorem \ref{thm:fix general} by setting $\rho=0.51$. Theorem \ref{thm:fix general} is proven in Section \ref{sec:prf:fix and PEG}.

\paragraph{A discussion of the parameters of Theorem \ref{thm:fix general}.}
The parameters in Theorems \ref{thm:fix} and Theorem \ref{thm:fix general} are applicable even for very large $a$, and therefore Theorem \ref{thm:fix general} applies in the extreme high-end (where $a=2^{(1-o(1)) \cdot \ell}$) as well as in less challenging ranges such as the high-end (where $a=2^{\nu \cdot \ell}$ for a constant $\nu>0$) and the low-end (where $a=\poly(\ell)$). We stress that to the best of our knowledge, previous limitations on the number of queries of a black-box reduction (of any kind) did not apply for $a \ge 2^{\ell/2}$ and in particular, for the extreme high-end. See Section \ref{sec:hardness amplification} for a discussion on past lower bounds on the number of queries by reductions for black-box hardness amplification proofs.

Furthermore, Theorems \ref{thm:fix} and Theorem \ref{thm:fix general} make no assumption on the stretch (namely, the relationship between $r$ and $m$) of the constructed PRG, and apply even when the stretch is very small, and $m$ is only slightly larger than $r$. In addition, the theorem applies even when $\eps$ is very large, and approaches one (rather than zero).

Finally, Theorem \ref{thm:fix general} applies even when $\eta=\half-\rho$ is very small (say $\eta=2^{-\Omega(\ell)})$ and even for black-box $\HFtoPRG$ proofs like the Nisan-Wigderson PRG, which are only known to work when $\eta$ is very small. See Section \ref{sec:revisit} for a discussion of the Nisan-Wigderson PRG.

\subsubsection{black-box PRGs are different than typical extractors}
\label{sec:typical extractors}

In this section we prove Theorem \ref{thm:typical informal} which loosely states that a random construction $\Con$ for a black-box $\HFtoPRG$ proof is likely to be an extractor, but unlikely to be useful when transforming hard functions into PRGs.

\paragraph{Formal connection between extractors and black-box $\HFtoPRG$ proofs.}
It is well known following Trevisan's breakthrough construction \cite{Tre99} (and as explained in Section \ref{sec:intro:bb proofs:applications}) that black-box $\HFtoPRG$ proofs, are closely related to randomness extractors. Let us formally specify this connection. We start with a formal definition of randomness extractors.

\begin{definition}[extractors]
A function $E:\B^n \times \B^r \to \B^m$ is a $(k,\eps)$-\remph{extractor} if for every distribution $X$ over $\B^n$, with $\Hi(X)\ge k$, $E(X,U_r)$ is $\eps$-close to $U_m$.
\end{definition}

In order to compare function $E:\B^n \times \B^r \to \B^m$ to functions $\Con:\Func_{\ell,1} \to \Func_{r,m}$ we use the following notation.

\begin{definition}[Constructions and extractors]
A string $f \in \B^n$ can be viewed as a function $f:\B^{\log n} \to \B$ by $f(i)=f_i$. This also applies in the other direction, allowing a function $f:\B^{\ell} \to \B$ to be viewed as a string $f \in \B^{2^{\ell}}$.

Given a function $E:\B^n \times \B^r \to \B^m$, we define $\ell=\log n$, and
$\Con^E:\Func_{\ell,1} \to \Func_{r,m}$ where $\Con^E_f:\B^r \to \B^m$ is defined by $\Con^E_f(y)=E(f,y)$. This also applies in the other direction, where a function $\Con:\Func_{\ell,1} \to \Func_{r,m}$ induces a function $E:\B^{n=2^{\ell}} \times \B^r \to \B^m$ by $E(f,y)=\Con_f(y)$.
\end{definition}

With this notation, extractors and black-box $\HFtoPRG$ proofs, are essentially equivalent. This is stated in the following standard proposition which shows that the two notions are roughly equivalent for $a \approx k$.

\begin{proposition}[Extractors are essentially equivalent to black-box $\HFtoPRG$ proofs]
\label{prop:extractors and black-box proofs}
Let $E:\B^n \times \B^r \to \B^m$, and let $\ell=\log n$.
\begin{enumerate}
\item If $\Con^E$ can be matched with an oracle procedure $\Red^{(\cdot)}$ such that the pair $(\Con^E,\Red)$ is a black-box $\HFtoepsPRG{\eps}$ proof with advice length $a$, then $E$ is a $(k,2\eps)$-extractor, for $k=a + \log(1/\eps) + 1$.
\item If $E$ is a $(k,\eps)$-extractor then there exists an oracle procedure $\Red^{(\cdot)}$ such that the pair $(\Con^E,\Red)$ is a black-box $\HFtoepsPRG{\eps}$ proof with advice length $a=k+1$.
\end{enumerate}
\end{proposition}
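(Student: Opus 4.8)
The plan is to prove the two directions separately, using the standard ``reconstruction paradigm'' that underlies Trevisan's observation. For direction~(2), the strategy is the classical one: suppose $E$ is a $(k,\eps)$-extractor, and suppose $D \in \Func_{m,1}$ is a distinguisher that $\eps$-$\PRGB$ $\Con^E_f$, i.e. $|\Pr[D(E(f,U_r))=1] - \Pr[D(U_m)=1]| \ge \eps$. Let $p = \Pr[D(U_m)=1]$, which is independent of $f$. The key point is that for \emph{this particular} string $f \in \B^{2^\ell}$, the distribution $E(f,U_r)$ fails to be $\eps$-close to $U_m$ as witnessed by the event $D^{-1}(1)$; since $E$ is a $(k,\eps)$-extractor, $f$ viewed as a point-mass source cannot have been ``covered'' — more precisely, the set $B_D = \{ g \in \B^n : |\Pr[D(E(g,U_r))=1] - p| \ge \eps \}$ must have size less than $2^k$, since otherwise the uniform distribution on $B_D$ would be a source with $\Hi \ge k$ on which $E$ fails. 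Hence we can describe $f$ by its index inside $B_D$ using $k$ bits (plus one bit for the sign of the bias / the rounding), giving advice length $a = k+1$; the reduction $\Red^{D}(x,\alpha)$ simply interprets $\alpha$ as (a sign bit and) the index of $f$ within $B_D$, reconstructs $f$ exactly, and outputs $f(x)$. Since $B_D$ depends only on $D$ (which the reduction has oracle access to), this is well-defined, it reconstructs $f$ exactly so it $1$-$\HFB$ $f$, and it makes no queries beyond those needed to enumerate $B_D$ — but note our notion of black-box places no efficiency requirement, so enumerating $B_D$ by brute force over all $g \in \B^n$, each requiring evaluation of $D(E(g,y))$ for all $2^r$ seeds, is permitted. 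This yields a valid black-box $\HFtoepsPRG{\eps}$ proof with $a = k+1$.

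For direction~(1), the argument runs in reverse. Assume $(\Con^E,\Red)$ is a black-box $\HFtoepsPRG{\eps}$ proof with advice length $a$, and suppose for contradiction that $E$ is \emph{not} a $(k, 2\eps)$-extractor for $k = a + \log(1/\eps) + 1$: there is a source $X$ over $\B^n$ with $\Hi(X) \ge k$ and an event $A \subseteq \B^m$ with $|\Pr[E(X,U_r) \in A] - \Pr[U_m \in A]| > 2\eps$. Let $D = \mathbf{1}_A$. By averaging over $x$ in the support of $X$, a noticeable fraction (weighted by $X$) of strings $f$ have $|\Pr[D(E(f,U_r))=1] - \Pr[D(U_m)=1]| \ge \eps$ — i.e. $D$ $\eps$-$\PRGB$ $\Con^E_f$; call this set of ``bad'' $f$'s $G$. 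For each such $f$, the black-box property supplies an advice string $\alpha_f \in \B^a$ such that $\Red^{D}(\cdot, \alpha_f) = f$ exactly. Thus the map $f \mapsto \alpha_f$ is \emph{injective} on $G$ (two distinct functions cannot both be reconstructed from the same $\alpha$ by the same procedure with the same oracle $D$), so $|G| \le 2^a$. The contradiction comes from a counting argument: since $\Hi(X) \ge k$ every point has mass at most $2^{-k}$, so $\Pr[X \in G] \le |G| \cdot 2^{-k} \le 2^{a-k} = \eps/2$; but a Markov-type argument on the averaging step shows that the $X$-mass of $G$ must be at least $\eps$ (the total bias $2\eps$ cannot be achieved if strings contributing bias $\ge \eps$ carry less than $\eps$ of the mass, since the remaining strings contribute bias $< \eps$ each), contradicting $\eps/2 < \eps$. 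Hence $E$ is a $(k,2\eps)$-extractor.

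I expect the main obstacle to be getting the constants in direction~(1) exactly right — in particular pinning down the averaging step precisely enough to conclude that ``$X$-mass of $G \ge \eps$'' from ``total bias $> 2\eps$'', and tracking where the extra $\log(1/\eps)+1$ slack in $k$ is consumed (it is exactly the slack needed so that $2^{a-k} = \eps/2 < \eps$, which forces the $\log(1/\eps)$ term, plus the $+1$ to absorb the sign of the bias). One should also be slightly careful about whether the distinguisher witnessing non-extraction can be taken to be a Boolean function $D$ rather than an arbitrary event — but since an event on $\B^m$ is literally a Boolean function $\B^m \to \B$, this is immediate, and no circuit-size considerations enter because Definition~\ref{dfn:bb} quantifies over all $D \in \Func_{m,1}$ with no complexity restriction. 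The reconstruction-injectivity argument is the conceptual heart and is short once set up correctly; everything else is bookkeeping.
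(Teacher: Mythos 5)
Your proof is correct, and it is exactly the standard Trevisan-style reconstruction argument that the paper itself appeals to: Proposition \ref{prop:extractors and black-box proofs} is stated there without proof, labeled ``standard,'' so there is no divergence in approach to report. One small repair in direction (2): the two-sided set $B_D=\{g : |\Pr[D(E(g,U_r))=1]-p|\ge\eps\}$ need \emph{not} have size below $2^k$, because positive and negative biases can cancel when you put the uniform distribution on all of $B_D$; the extractor property only bounds each sign class $B_D^{+}=\{g:\Pr[D(E(g,U_r))=1]-p>\eps\}$ and $B_D^{-}$ separately by $2^k$, which is precisely what your sign bit is for, and with advice $=$ sign plus index inside the corresponding sign class the stated length $a=k+1$ goes through as you wrote. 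It is also cleaner to define these sets with strict inequality $>\eps$, matching the definition of $\eps$-$\PRGB$, so that a class of size exactly $2^k$ cannot slip past the extractor guarantee; with those cosmetic adjustments both directions, including the constant accounting $2^{a-k}=\eps/2<\eps\le\Pr[X\in G]$ in direction (1), are sound.
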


\paragraph{Reductions must make few queries to be useful.}
Note however, that in order to be useful for PRG construction, a black-box $\HFtoPRG$ proof must have a reduction $\Red$ that make few queries. In particular, as can be seen in Proposition \ref{prop:bb=>object}, the reduction is useless if it makes $q \ge 2^{\ell}$ queries (and even if it makes $q \ge \frac{2^{\ell}-a}{m}$ queries).

The equivalence of Proposition \ref{prop:extractors and black-box proofs} does not mention the number of queries (which can be as large as $2^m$). In other words, a black-box $\HFtoPRG$ proof is an extractor even when the number of queries is very large, but it must have $q \ll 2^{\ell}$ in order to be useful when converting hard functions into PRGs.

\paragraph{A typical extractor is not a useful PRG construction.}
We now show that for a wide range of parameters $\ell,r,m$ and $a$ (and in particular for the parameters that correspond to the high-end and the extreme high-end) if we choose a  function $E:\B^n \times \B^r \to \B^m$ uniformly at random then:
\begin{itemize}
\item It is unlikely that construction $\Con^E$ (associated with $E$) has small $\Fix_j(\Con^E)$, and therefore, by Theorem \ref{thm:fix}, it is unlikely that $\Con^E$ can be matched with a reduction $\Red$ that makes $q \le 2^{\ell}$ queries.
\item However, a standard argument shows that it is likely that $E$ is an extractor, and therefore, by Proposition \ref{prop:extractors and black-box proofs}, it is likely that $\Con^E$ can be matched with a reduction $\Red$ (although $\Red$ may make $q=2^m$ queries).
\end{itemize}
As explained in Section \ref{sec:intro:results:PRG}, this can be interpreted as saying that PRG constructions must be quite different than typical extractor constructions in order to allow the existence of a reduction that makes few queries.

A formal and quantitative version of this is stated in the next definition and theorem.

\begin{definition}[Random construction]
Let $E \from \RndE_{\ell,r,m}$ denote the experiment in which a function $E:\B^{n=2^{\ell}} \times \B^r \to \B^m$ is chosen uniformly from the set of all such functions.
\end{definition}

\begin{theorem}
\label{thm:typical extractors}
Let $\ell,r,m,a,\eps$ be parameters such that $m \ge 2r$ and the following holds:
\begin{description}[noitemsep]
\item{\textsf{Parameters that allow $E$ to be an extractor:}} $r \ge \ell + 2\log(1/\eps) + O(1)$, $2^{\ell} \ge a \ge m+2\log(1/\eps) + O(1)$, and $\eps \le \half$.
\item{\textsf{Parameters that prevent fixing seeds:}}
$a+j \cdot \ell \le min(j \cdot (m-r) -O(1),\frac{2^{\ell}}{2})$.
\end{description}
\noindent
In the experiment $E \from \RndE_{\ell,r,m}$ the following items hold:
\begin{itemize}
\item For every $j >4$, the probability that $\Fix_j(\Con^E) \le a+j \cdot \ell$ is smaller than $2^{-2^{2^{\ell}/2}}$.

Consequently (by Theorem \ref{thm:fix}) the probability that $\Con^E$ can be matched with an oracle procedure $\Red^{(\cdot)}$ that makes $q<2^{\ell}$ queries, is smaller than $2^{-2^{2^{\ell}/2}}$.

\item The probability that $E$ is an $(a+O(1),\eps)$ extractor is larger than $1-o(1)$.
Consequently (by Proposition \ref{prop:extractors and black-box proofs}) the probability that $\Con^E$ can be matched with an oracle procedure $\Red^{(\cdot)}$ with advice length $a+O(1)$ is larger than $1-o(1)$.
\end{itemize}
\end{theorem}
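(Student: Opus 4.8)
The plan is to establish the two bullets separately; the second is the standard probabilistic-existence argument for extractors, and the first is a short Chernoff computation. For both, the driving observation is that when $E\from\RndE_{\ell,r,m}$, the functions $\{E(f,\cdot):\B^r\to\B^m\}_{f\in\B^{2^\ell}}$ are mutually independent, each a uniformly random function from $\B^r$ to $\B^m$.

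\emph{First bullet (one cannot fix $j$ outputs cheaply).} Fix a $j$-tuple of \emph{distinct} $z_1,\dots,z_j\in\B^m$ (if $j>2^r$ the tuple can never lie in the image of $\Con^E_f$ for any $f$, so $\Fix_j(\Con^E)=\infty$ and there is nothing to prove). For a single $f$: if $z_1,\dots,z_j$ all lie in the image of $E(f,\cdot)$ then, since the $z_i$ are distinct, there is an injection $\phi:[j]\to\B^r$ with $E(f,\phi(i))=z_i$ for all $i$; a union bound over the at most $2^{rj}$ such injections, each satisfied with probability $2^{-mj}$, bounds this probability by $2^{-j(m-r)}$ (here $m\ge 2r$, so in particular $m>r$ and this is a genuine bound). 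Let $X_f^{(z)}$ be the indicator of this event; the $X_f^{(z)}$ are i.i.d.\ Bernoulli with mean $\mu\le 2^{-j(m-r)}$, and $\Pr_F[\,\forall i\,\exists y_i:\Con^E_F(y_i)=z_i\,]=2^{-2^\ell}\sum_f X_f^{(z)}$, so the tuple witnesses $\Fix_j(\Con^E)\le a+j\ell$ iff $\sum_f X_f^{(z)}\ge 2^{2^\ell-a-j\ell}$. The constraint $a+j\ell\le j(m-r)-O(1)$ makes this threshold at least $2e$ times the mean $2^{2^\ell}\mu$, so a Chernoff bound gives probability at most $2^{-2^{2^\ell-a-j\ell}}$, and $a+j\ell\le 2^\ell/2$ gives $2^\ell-a-j\ell\ge 2^\ell/2$. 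A union bound over the at most $2^{mj}$ distinct tuples — a negligible factor here, since $mj$ lies well below $2^\ell-a-j\ell$ under the stated parameter constraints — yields $\Pr_E[\Fix_j(\Con^E)\le a+j\ell]<2^{-2^{2^\ell/2}}$. The ``Consequently'' clause then follows by the contrapositive of Theorem \ref{thm:fix} (its hypotheses $\rho=1>0.51$, $\eps\le\half\le 1-2^{r-m}$, and $a\le\nu\cdot 2^\ell$ all hold here): an $E$ whose $\Con^E$ admits a reduction making $q<2^\ell$ queries would have $\Fix_j(\Con^E)\le a+j(\log q+O(1))$ for the relevant $j$, and since $\log q+O(1)\le\ell+O(1)$ one re-runs the computation above with $\ell$ replaced by $\log q+O(1)$, which only changes the parameter constraints in their $O(1)$ terms.

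\emph{Second bullet (a random $E$ is an extractor).} This is the standard argument. By convexity of statistical distance it suffices to fool flat sources $X$ uniform on a set $S$ with $|S|=2^k$, $k=a+O(1)$. Fixing $S$, the distribution $E(X,U_r)$ is the empirical distribution of the $2^{k+r}$ i.i.d.\ uniform samples $\{E(s,y):s\in S,\,y\in\B^r\}$, so for each test $T\subseteq\B^m$ a Hoeffding bound gives $\Pr_E[\,|\Pr[E(X,U_r)\in T]-|T|2^{-m}|>\eps\,]\le 2e^{-2\eps^2 2^{k+r}}$. The union bound over the $2^{2^m}$ tests is absorbed since $k\ge m+2\log(1/\eps)+O(1)$ forces $\eps^2 2^{k+r}\gg 2^m$, and the further union bound over the $\binom{2^n}{2^k}\le 2^{n\cdot 2^k}=2^{2^{\ell+k}}$ flat sources is absorbed since $r\ge\ell+2\log(1/\eps)+O(1)$ (with $\log(n-k)\le\log n=\ell$ and $k\ge m$) forces $\eps^2 2^{k+r}\gg 2^{\ell+k}$; the total failure probability is $o(1)$. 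Hence $E$ is an $(a+O(1),\eps)$-extractor w.h.p., and Proposition \ref{prop:extractors and black-box proofs}(2) converts this into a matching reduction of advice length $a+O(1)$.

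The routine steps are genuinely routine; the one place demanding care is the bookkeeping of additive constants — the $O(1)$'s in the ``fixing seeds'' inequalities, the $\log q+O(1)$ slack inside Theorem \ref{thm:fix}, and the $O(1)$'s in the extractor hypotheses — which must be chosen consistently so that (i) the doubly-exponential union bound over $j$-tuples really closes at $2^{-2^{2^\ell/2}}$ and (ii) the $\Fix_j$ lower bound proved here is strong enough to contradict the $\Fix_j$ upper bound of Theorem \ref{thm:fix} whenever $q<2^\ell$. I expect this constant-chasing, not any conceptual step, to be the main work.
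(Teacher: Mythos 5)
Your proposal is correct and takes essentially the same route as the paper: your Chernoff tail bound $\Pr[\sum_f X_f^{(z)}\ge t]\le\binom{2^{2^\ell}}{t}\mu^t$ is exactly the paper's union bound over size-$K$ subsets $S$ combined with independence of the rows $E(f,\cdot)$, with the union over tuples $\bar z$ merely taken in the other order, and the constant slack you flag is absorbed just as in the paper (the $O(1)$ in $a+j\ell\le j(m-r)-O(1)$ buys the extra factor, splitting the bound into a $2^{-2v}$ part and a $2^{mj}\le 2^v$ part with $v=2^{2^\ell/2}$). For the second bullet the paper simply cites the standard calculation in \cite{RT}, which is the argument you wrote out.
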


\paragraph{The parameters in Theorem \ref{thm:typical extractors}.}
We stress that Theorem \ref{thm:typical extractors} applies for a wide range of parameters including the extreme high-end, and the high-end. Furthermore, Theorem \ref{thm:typical extractors} does not require that $m$ is much larger than $r$ and applies even when the stretch is small. Furthermore, the probability in the first item is $2^{-2^{2^{\ell}/2}}$ which is quite small.
We also note that the requirement that $a+j \cdot \ell$ is small are very mild in the following sense:
\begin{itemize}[noitemsep]
\item For every $j$ it is obvious that $\Fix_j(\Con) \le j \cdot m$, as the amount of information in $j$ outputs is at most $j \cdot m$. We are only requiring that $a+j \cdot \ell \le j \cdot (m-r)-O(1)$ which is quite close to the obvious bound.
\item The total amount of information in $f$ is at most $2^{\ell}$. We are only requiring that $a+j \cdot \ell \le \frac{2^{\ell}}{2}$ which is quite close to the obvious bound.
\end{itemize}

\begin{proof} (of Theorem \ref{thm:typical extractors})
The second item is a standard calculation which can be found for example in \cite{RT}. Therefore, we will only prove the first item.
Within this proof, unless otherwise specified, all probabilities are regarding the experiment $E \from \RndE_{\ell,r,m}$.

For a fixed set $S \subseteq \B^n$ of size $K=2^{n-(a+j \cdot \ell)}$ and a sequence $\bar{z}=(z_1,\ldots,z_j)$ of distinct elements in $\B^m$, let $A_{S,\bar{z}}$ denote the event that for every $f \in S$, there exist $y_1,\ldots,y_j \in \B^r$ such that for all $i \in [j]$, $E(f,y_i)=z_i$.
This definition is made so that:
\[ \Pr[\Fix_j(\Con^E) \le a + j \cdot \ell] \le {2^n \choose K} \cdot 2^{m \cdot j} \cdot \Pr[A_{S,\bar{z}}],\]
where the latter expression is obtained by a union bound over all choices of $S$ and $\bar{z}$. For every such $S$ and $\bar{z}$, and for every $f \in S$, let $A_{S,\bar{z},f}$ denote the event that there exist $y_1,\ldots,y_j \in \B^r$ such that for all $i \in [j]$, $E(f,y_i)=z_i$. For every $f \in S$, by a union bound over all choices of $y_1,\ldots,y_j \in \B^r$,
\[ \Pr[A_{S,\bar{z},f}] \le 2^{r \cdot j} \cdot 2^{- m \cdot j }. \]
These events are independent for the $K$ different choices of $f \in S$, and therefore, the probability of their conjunction is the product of their individual probabilities.
\[ \Pr[A_{S,\bar{z}}] \le \left(2^{r \cdot j} \cdot 2^{- m \cdot j } \right)^K = 2^{-j \cdot K \cdot (m-r)} . \]
We conclude that:
\begin{align*}
\Pr[\Fix_j(\Con^E) \le a + j \cdot \ell] &\le {2^n \choose K} \cdot 2^{m \cdot j} \cdot 2^{-j \cdot K \cdot (m-r)}  \\
& \le \left(\frac{2^n \cdot e}{K}\right)^K \cdot 2^{m \cdot j} \cdot
 2^{-j \cdot K \cdot (m-r)}
\end{align*}
Let $v=2^{2^{\ell}/2}$. In order to show that this probability is smaller than $2^{-v}$ it is sufficient to show that:
\begin{enumerate}
\item $\left(\frac{2^n \cdot e}{K}\right)^K \cdot
 2^{-j \cdot K \cdot (m-r)} \le 2^{-2v}$, and
\item $2^{m \cdot j} \le 2^{v}$.
\end{enumerate}
These two conditions follow by the requirements of the theorem. More specifically, the requirement that $a+j \cdot \ell \le \frac{2^{\ell}}{2}$ gives that $K \ge 2^{2^{\ell}/2} \ge v$. This gives that the first condition follows if $a+ j \cdot \ell \le j \cdot (m-r) -O(1)$ which is one of the requirements of the theorem.

The second condition follows because by our requirement $j \le \frac{2^{\ell}}{2}$ and $m \le 2^{\ell}$. Therefore for sufficiently large $\ell$, $m \cdot j \le 2^{2^{\ell}/2}=v$,
\end{proof}

\subsubsection{Revisiting the Nisan-Wigderson PRG and the Shaltiel-Umans PRG}
\label{sec:revisit}

As surveyed in Section \ref{sec:intro:bb proofs:hybrid} there are only two known constructions of black-box $\HFtoPRG$ proofs in the literature, where $\Red$ is useful and makes $q$ queries, for $q$ that can be significantly less than $2^{\ell}$. These are the Nisan-Wigderson PRG \cite{NW} and an additional construction is due to Shaltiel and Umans \cite{SU01} and Umans \cite{Umans02}.

The former is more versatile, and has many additional applications, mainly because its reduction requires less computational resources, and makes less queries.
The latter has advantages over the Nisan-Wigderson PRG as it allows to achieve $r=O(\ell)$ even for small values of $a$ (like $a=\poly(\ell)$, which corresponds to ``low-end'' PRGs).

Nevertheless, as explained in Section \ref{sec:intro:bb proofs:hybrid}, both these approaches rely on hardness amplification and the hybrid argument, preventing them from achieving the extreme high-end.

Theorem \ref{thm:fix} and \ref{thm:fix general} show that in \emph{any} black-box proof with $q \le 2^{\ell}$, \emph{even one that does not rely on the hybrid argument}, $\Fix_j(\Con)$ must be small. Let us review how the known PRGs achieve this.

\paragraph{The Nisan-Wigderson PRG.}
The Nisan-Wigderson PRG is a black-box $\rhoHFtoepsPRG{(\half+\frac{\eps}{m})}{\eps}$ proof $(\Con^{NW},\Red^{NW})$ with parameters $\ell,r,m,a,\rho=\half+\frac{\eps}{m}$ and $\eps$, such that $\Red$ makes a \emph{single} query. This means that the Nisan-Wigderson PRG is applicable in the extreme high-end, when starting from a $(\half+\frac{\eps}{m})\mbox{-}\HF$. As explained in Section \ref{sec:intro:results:hardness} current hardness amplification techniques are not applicable for $m \ge 2^{\ell/2}$. This means that they cannot be used in the extreme high-end, even if we are willing to allow large seed length $r$.

By Theorem \ref{thm:fix general}, we have that:
\[ \Fix_j(\Con^{NW}) \le a + j \cdot (\log m +O(1)), \]
for constant $\eps>0$, and many values of $m$ and $j$. Let us review how $\Con^{NW}$ achieves this.

The construction $\Con^{NW}$ is defined
using a ``design'', namely a collection $S_1,\ldots,S_m \subseteq [r]$ such that for every $i \in [m]$, $|S_i|=\ell$, and then for every $f \in \Func_{\ell,1}$, $\Con^{NW}_f:\B^r \to \B^{m}$ is defined by
\[ \Con^{NW}_f(x)=f(x|_{S_1}), \ldots, f(x|_{S_m}). \]
Another version that is often considered is a ``seed extending'' version $\overline{\Con}^{NW}_f(x)=x,\Con^{NW}_f(x)$ that outputs $m+r$ bits.
The analysis of \cite{NW} shows that the parameter $a$ is determined by the sizes of pairwise intersections of sets in the design. More specifically, using an improved analysis by Raz, Reingold and Vadhan \cite{RRV} (for a condition called ``weak design'') it follows that taking
\[ a=\sum_{i \in [m]} \sum_{i' \ne i} 2^{|S_{i'} \cap S_i|} \le m \cdot 2^{\max_{i \ne i'}|S_i \cap S_{i'}|}, \]
suffices to obtain a reduction $\Red^{NW}$ which makes a \emph{single} query.

The weak design property can be used to show that for every $1 \le j \le 2^{\ell}$, \[ \Fix_j(\overline{\Con}^{NW}) \le a+ j. \] Loosely speaking, this is because the NW proof shows that for every $i$, after fixing $a$ bits of information about $f$, we have that for every value of $x|_{[r] \setminus S_i}$, all $m-1$ output bits of the form $f(x|_{S_{i'}})$ for $i' \ne i$ are completely determined as a function of $x|_{S_i}$. This means that for every value of $x|_{S_i}$, the only output bit that is not yet fixed is $f(x|_{S_i})$, which can be fixed paying only
\emph{one bit of information}.

If we were to beat the hybrid argument, and construct a new black-box $\HFtoPRG$ proof that works directly from worst-case hard functions, we would have to come close to this behavior, and we hope that understanding this property may point us to new constructions, potentially utilizing the ability to ask a small number of queries (but more than just one query).

\paragraph{The Shaltiel-Umans PRG.}
The Shaltiel-Umans PRG is a black-box $\HFtoPRG$ proof $(\Con^{SU},\Red^{SU})$ with parameters $\ell,r,m,a,\rho=1$ and $\eps=\frac{1}{10}$, such that $\Red$ makes $q=m^{O(1)}$ queries.

Unlike the Nisan-Wigderson PRG, the Shaltiel-Umans PRG is stated for $\rho=1$ (starting from worst-case hard functions). Nevertheless, the Shaltiel-Umans PRG builds on the hardness amplification techniques of Sudan, Trevisan and Vadhan \cite{STV} and makes $q \ge m$ queries. Furthermore, it implies hardness amplification to $\half+\frac{1}{m}$. This means that it is not applicable at the extreme high-end.

By Theorem \ref{thm:fix general}, we have that:
\[ \Fix_j(\Con^{SU}) \le a + j \cdot (O(\log m)), \]
for many values of $m$ and $j$.

Let us review how $\Con^{SU}$ achieves this. The construction $\Con^{SU}_f$ relies on (a carefully chosen) black-box $\HFtorhoHF{(\half+\frac{1}{m})}$ proof (which has additional structure). This hardness amplification construction converts the worst-case hard function $f:\B^{\ell} \to \B$ into an average case hard function $f':\B^{\ell'} \to \B$.

The PRG construction is then defined by setting $r=\ell'$ and defining:
\[ \Con^{SU}_f(x)=f'(x),f'(g(x)),f'(g(g(x))),\ldots,f'(g^{(m-1)}(x), \]
where $g:\B^{\ell'} \to \B^{\ell'}$ is some specific function. Another version that is often considered is a ``seed extending'' version $\overline{\Con}^{SU}_f(x)=x,\Con^{SU}_f(x)$ that outputs $m+r$ bits.

This means that for every seed $x \in \B^{\ell'}$ if we consider the $\ell$ ``consecutive'' seeds, \[ g(x),g(g(x)),\ldots,g^{(m-1)}(x),\] all the outputs of $\Con^{SU}_f$ on these seeds, depend only on the values of $f'$ on $g(x),g(g(x)),\ldots,g^{(2m)}(x)$. This means that $m$ outputs of $\Con^{SU}_f$ can be fixed at the cost of fixing $2m$ bits of $f'$. In particular, for $j=a$ we get that $j=a$ outputs of $\Con_f$ can be fixed at the total cost of $2a=a + j \cdot 1$ bits of information about $f$. This means that after paying a fixed cost of $a$, $a$ outputs can be fixed at amortized cost of one bit of information per output. This can be used to show that:
\[ \Fix_a(\overline{\Con}^{SU}) \le a +a \cdot 1. \]
Again, if we were to beat the hybrid argument, and construct a new black-box $\HFtoPRG$ proof that works directly from worst-case hard functions, we may hope to take this understanding, hoping to somehow avoid paying the price of hardness amplification to extremely hard on average functions.

\subsection{Limitations on the ``PEG + Extractor'' approach of \cite{DMOZ}}
\label{sec:thm:PEG}

In this section we restate Theorem \ref{thm:PEG} and discuss its consequences and interpretation.

\subsubsection{Limitation on black-box $\HFtoPEG$ constructions with short seed}

\paragraph{Review of the setup of Theorem \ref{thm:PEG}.}
Theorem \ref{thm:fix} shows that for any black-box $\HFtoPEG$ proof $(\Con,\Red)$, if $\Red$ is useful, and makes $q \le 2^{\ell}$ queries, then $r \ge (1-o(1)) \cdot \ell \ge (1-o(1)) \cdot \log m$. As explained in Section \ref{sec:intro:results:PEG}, this shows that the ``PEG + extractor'' approach of \cite{DMOZ} cannot be used to obtain an extreme high-end PRG from \textsc{the extreme high-end hardness assumption} if the PEG is constructed by a black-box $\HFtoPEG$ proof.

\paragraph{Formal definition of black-box $\HFtoPEG$ proofs.} In the introduction, we did not give a formal definition of black-box $\HFtoPEG$ proofs (and only explained how these are defined in relation to black-box $\HFtoPRG$ proofs. We therefore start with a formal definition.

\begin{definition}[Black-box $\HFtoPEG$ proof]
\label{dfn:bb:hard-function=>PRG}
Given parameters $\ell,r,m,a,k,\eps,\rho$ a \remph{black-box $\rhoHFtokepsPEG{\rho}{k}{\eps}$ proof} is a pair $(\Con,\Red)$ of:
\begin{itemize}[noitemsep]
\item A construction map $\Con: \Func_{\ell,1} \to \Func_{r,m}$.
\item An oracle procedure $\Red^{(\cdot)}(x,\alpha)$ such that for every $f \in \Func_{\ell,1}$ and for every $D \in \Func_{m,1}$ such that $D$ $(k,\eps)$-$\PEGB$ $\Con_f$, there exists $\alpha \in \B^a$ such that the function $C \in \Func_{\ell,1}$ defined by $C(x)=\Red^D(x,\alpha)$, $\rho$-$\HFB$ $f$.
\end{itemize}
If we omit $\rho$, we mean $\rho=1$. If we omit $\eps$, we mean $\eps=1/10$.
We say that $\Red$ makes $q$ queries, if for every $D \in \Func_{m,1}$, $\alpha \in \B^a$, and $x \in \B^{\ell}$, $\Red^D(x,\alpha)$ makes at most $q$ oracle queries.
\end{definition}

\paragraph{A more general statement.}
The following Theorem is a generalized version of Theorem \ref{thm:PEG} which also allows the parameter $\rho$ (measuring how hard on average is the function we start from) to be very close to $\half$.

\begin{theorem}
\label{thm:PEG general}
There exists constants $\nu>0$ and $c>1$ such that for every
black-box $\rhoHFtokepsPEG{\rho}{k}{\eps}$ proof $(\Con,\Red)$ with parameters $\ell,r,m,a,k,\eps,\rho$ such that \[r < \ell - \log \ell - 2 \cdot \log \frac{1}{\eta} -c.\] If
$\rho=\half+\eta$, $\eta \ge 2^{-\ell}$, $k > r$, $\eps \le 1-2^{r-m}$, and $a \le \nu \cdot \eta^2 \cdot 2^{\ell}$,
then $\Red$ must make at least $q > 2^{\ell}$ queries.
\end{theorem}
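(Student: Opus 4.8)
The plan is to obtain Theorem~\ref{thm:PEG general} as a corollary of Theorem~\ref{thm:fix general} together with an elementary counting argument: a PEG with very short seed has a tiny image, so it cannot admit the ``fixing of many outputs at low information cost'' that Theorem~\ref{thm:fix general} forces whenever the reduction makes $q \le 2^{\ell}$ queries. Concretely, I will assume for contradiction that $\Red$ makes $q \le 2^{\ell}$ queries and derive that $\Fix_{j}(\Con)$ is simultaneously finite and infinite for a suitable $j$.

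The first step is to check that the proof of Theorem~\ref{thm:fix general} applies to black-box $\HFtoPEG$ proofs and not only to $\HFtoPRG$ proofs (this is what the remark accompanying Theorem~\ref{thm:PEG} alludes to). That proof works with the distinguisher $D_f:\B^m \to \B$ that accepts $z$ iff $z \in \mathrm{Image}(\Con_f)$. Since $|\mathrm{Image}(\Con_f)| \le 2^{r}$, we have $\Pr[D_f(U_m)=1] \le 2^{r-m} \le 2^{k-m}$ (using $k>r$), so $D_f$ is an admissible test for the $(k,\eps)$-PEG condition; and $\Pr[D_f(\Con_f(U_r))=1]=1$, so $D_f$ $(k,\eps)$-$\PEGB$ $\Con_f$ whenever $\eps \le 1-2^{r-m}$. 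Hence, for every $f$, the black-box proof supplies an advice string $\alpha$ with $\Red^{D_f}(\cdot,\alpha)$ $\rho$-$\HFB$ $f$, which is precisely the hypothesis that drives the proof of Theorem~\ref{thm:fix general}. Consequently, under the stated bounds ($\eta \ge 2^{-\ell}$, $a \le \nu\eta^2 2^{\ell}$, $\eps \le 1-2^{r-m}$) and the assumption $q \le 2^{\ell}$, Theorem~\ref{thm:fix general} yields, for $j_{\max} = \nu \cdot \eta^2 2^{\ell}/\ell$ and every $j \le j_{\max}$,
\[
\Fix_j(\Con) \;\le\; a + j\cdot\left(\log q + \log\tfrac{4}{\eta}\right) \;<\; \infty .
\]

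The second step is the counting observation. For any $F \in \Func_{\ell,1}$ the map $\Con_F:\B^r \to \B^m$ has image of size at most $2^{r}$, so there do not exist $j \ge \lfloor 2^{r}\rfloor + 1$ distinct strings $z_1,\dots,z_j \in \B^m$ all lying in $\mathrm{Image}(\Con_F)$. Hence for $j \ge \lfloor 2^{r}\rfloor + 1$ the event in Definition~\ref{dfn:fix} has probability $0$, so $\Fix_j(\Con)$ is undefined, i.e.\ $+\infty$. Setting $j^{*} = \lfloor 2^{r}\rfloor + 1$, it remains to verify $j^{*} \le j_{\max}$: from $r < \ell - \log\ell - 2\log\tfrac1\eta - c$ we get $2^{r} < \eta^{2} 2^{\ell}/(\ell\,2^{c})$, and choosing the absolute constant $c$ large enough (so that $2^{c}\nu \ge 2$, which also forces $\eta^{2}2^{\ell}/\ell$ to be large enough that the bound is non-degenerate) gives $j^{*} = 2^{r}+1 \le \nu\eta^{2}2^{\ell}/\ell = j_{\max}$. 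Plugging $j=j^{*}$ into the two steps produces $\Fix_{j^{*}}(\Con) < \infty$ and $\Fix_{j^{*}}(\Con) = +\infty$, a contradiction; therefore $q > 2^{\ell}$. Theorem~\ref{thm:PEG} is the special case $\eta = 0.01$.

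The only delicate point — and the step I expect to be the main, though minor, obstacle — is the bookkeeping in the last calculation: pinning down a single constant $c>1$ that absorbs $\log(1/\nu)$, the ``$+1$'' in $j^{*}=\lfloor 2^{r}\rfloor+1$, and the degenerate regime in which $\eta$ is so small that $j_{\max}<1$ (which the hypothesis on $r$ rules out once $c$ is large, since it would force $r<0$). There is also a boundary case $\eps = 1-2^{r-m}$ with $\Con_f$ injective, where $D_f$ fails to strictly $\PEGB$ $\Con_f$; this is handled exactly as in the proof of Theorem~\ref{thm:fix general}, via the slack already present in that argument. All of the substantive work is done by Theorem~\ref{thm:fix general}; the present theorem merely converts ``small image'' into ``$q > 2^{\ell}$''.
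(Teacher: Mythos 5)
Your proposal is correct and is essentially the paper's own argument: the paper proves a PEG version of the fixing theorem (Theorem \ref{thm:PEG fix general}, whose proof uses exactly your distinguisher $D_f$ and the check $\Pr[D_f(U_m)=1]\le 2^{r-m}\le 2^{k-m}$), and then derives Theorem \ref{thm:PEG general} by the same contradiction you give, namely that $r$ small forces $2^r < j_{\max}$ while $\Con_f$ has at most $2^r$ outputs, so $\Fix_j(\Con)$ cannot be finite for the relevant $j$ unless $q>2^{\ell}$. The only cosmetic difference is that you re-justify the PEG applicability of Theorem \ref{thm:fix general} inline (together with the bookkeeping for the constant $c$ and the degenerate regime), whereas the paper packages this as the separately stated Theorem \ref{thm:PEG fix general}.
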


\paragraph{A discussion of the parameters of Theorem \ref{thm:PEG general}.}
The parameters in Theorems \ref{thm:PEG} and Theorem \ref{thm:PEG general} are applicable even for very large $a$, and therefore Theorem \ref{thm:fix general} applies in the extreme high-end (where $a=2^{(1-o(1)) \cdot \ell}$) as well as in less challenging ranges such as the high-end (where $a=2^{\nu \cdot \ell}$ for a constant $\nu>0$) and the low-end (where $a=\poly(\ell)$). We stress once again that to the best of our knowledge, no known lower bound on the number of queries of a black-box reduction (of any kind) applies in the extreme high-end. See Section \ref{sec:hardness amplification} for a discussion on past lower bounds on the number of queries by reductions for black-box hardness amplification proofs.

We also note that the requirements on the parameters in Theorem \ref{thm:PEG general} are quite mild. We allow $\eps$ to approach one (rather than zero) and $k$ to approach $r$ (rather than $m$). For $\rho=1$ (namely, starting from a worst-case hard functions) we have that $\eta=\half$ and the theorem works even for $a=\Omega(2^{\ell})$.

\subsection{Proofs of Theorem \ref{thm:fix general} and Theorem \ref{thm:PEG general}}
\label{sec:prf:fix and PEG}

In this section we prove Theorem \ref{thm:fix general} and Theorem \ref{thm:PEG general} (which in turn imply Theorem \ref{thm:fix} and Theorem \ref{thm:PEG}).
Both Theorems will follow from the following theorem (which can be seen as a version of Theorem \ref{thm:fix general} for PEGs).

\begin{theorem}
\label{thm:PEG fix general}
There exists a constant $\nu>0$ such that for every
black-box $\rhoHFtokepsPEG{\rho}{k}{\eps}$ proof $(\Con,\Red)$ with parameters $\ell,r,m,a,k,\eps,\rho$ such that $\Red$ makes at most $q \le 2^{\ell}$ queries. If $\rho=\half+\eta$, $\eta \ge 2^{-\ell}$, $k > r$, $\eps \le 1-2^{r-m}$, and $a \le \nu \cdot \eta^2 \cdot 2^{\ell}$, then for $j_{\max} = \nu \cdot \frac{\eta^2 \cdot 2^{\ell}}{\ell}$, and every $j \le j_{\max}$, \[\Fix_j(\Con) \le a + j \cdot (\log q + \log \frac{4}{\eta}). \]
\end{theorem}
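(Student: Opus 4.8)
The plan is to use the ``canonical distinguisher'' and an iterative argument that ``peels off'' one output of $\Con_F$ at a time. For $f \in \Func_{\ell,1}$ let $D_f:\B^m \to \B$ accept $z$ iff $z$ is an output of $\Con_f$. The image of $\Con_f$ has size at most $2^r$, so $\Pr[D_f(U_m)=1] \le 2^{r-m} \le 2^{k-m}$ (using $k>r$), while $\Pr[D_f(\Con_f(U_r))=1]=1$; hence the gap is $\ge 1-2^{r-m} \ge \eps$ and $D_f$ $(k,\eps)$-$\PEGB$ $\Con_f$. By Definition~\ref{dfn:bb:hard-function=>PRG} there is $\alpha(f) \in \B^a$ so that $C(x)=\Red^{D_f}(x,\alpha(f))$ agrees with $f$ on at least a $\rho=\half+\eta$ fraction of inputs. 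The crucial feature is that $D_f$ answers $1$ on at most $2^r \ll 2^m$ inputs, so the only way $\Red$ gains information about $f$ is by querying an output of $\Con_f$ (an ``interesting query''); the rest of the proof lower-bounds how often, and how non-trivially, this must happen.

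Fix $j \le j_{\max}$. I will build, for $t=0,1,\dots,j$, a fixed advice string $\alpha^*\in\B^a$, fixed inputs $x_1^*,\dots,x_t^*\in\B^\ell$, fixed \emph{distinct} outputs $z_1^*,\dots,z_t^*\in\B^m$, and an event $\mathcal{E}_t$ on $F \from \Func_{\ell,1}$ with $\mathcal{E}_t\subseteq\{F:\alpha(F)=\alpha^*\text{ and for all }i\le t\text{ there is }y_i\text{ with }\Con_F(y_i)=z_i^*\}$ and $\Pr[\mathcal{E}_t]\ge 2^{-a}(\eta/(3q))^t$. For the step $t\to t+1$, let $D'$ be the indicator of the fixed set $\{z_1^*,\dots,z_t^*\}$ and put $C'_\alpha(x)=\Red^{D'}(x,\alpha)$; there are only $2^a$ such functions, so by a Chernoff bound and a union bound, for all but a $2^{-\Omega(\eta^2 2^\ell)}$ fraction of $F$ every $C'_\alpha$ agrees with $F$ on fewer than $(\half+\eta/2)2^\ell$ inputs. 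For such an $F\in\mathcal{E}_t$: on any input $x$ whose run $\Red^{D_F}(x,\alpha(F))$ queries only non-image points and points of $\{z_1^*,\dots,z_t^*\}$ (all of which lie in the image of $\Con_F$ since $F\in\mathcal E_t$), we have $D_F=D'$ along the run, so $\Red^{D_F}(x,\alpha(F))=C'_{\alpha(F)}(x)$; comparing with the $\rho$-agreement of $\Red^{D_F}(\cdot,\alpha(F))$ forces at least $(\eta/2)2^\ell$ inputs $x$ on which the run makes an interesting query \emph{outside} $\{z_1^*,\dots,z_t^*\}$, and on such an $x$ this new output is one of the $\le q$ queries of $\Red^{D'}(x,\alpha^*)$ (the run agrees with $\Red^{D'}(x,\alpha^*)$ up to that query). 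Averaging first over $x$ and then over the $\le q$ query positions, fix $x_{t+1}^*$ and a new $z_{t+1}^*\notin\{z_1^*,\dots,z_t^*\}$ so the event survives with probability at least $(\eta/(2q))\Pr[\mathcal{E}_t]$; intersecting with the Chernoff-good event (a loss of $2^{-\Omega(\eta^2 2^\ell)}$, negligible since $\Pr[\mathcal E_t]$ is much larger) defines $\mathcal{E}_{t+1}$ and keeps the invariant with base $3q/\eta$. The step $t=0\to 1$ is the same with $D'$ the all-zeros function, except one additionally pays a factor $2^{-a}$ to pin $\alpha^*=\alpha(F)$.

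After $j$ steps, the distinct outputs $z_1^*,\dots,z_j^*$ satisfy $\Pr_F[\forall i\ \exists y_i:\Con_F(y_i)=z_i^*]\ge\Pr[\mathcal{E}_j]\ge 2^{-a}(\eta/(3q))^j\ge 2^{-(a+j(\log q+\log(4/\eta)))}$, which is precisely the claimed bound on $\Fix_j(\Con)$. The hypotheses $a\le\nu\eta^2 2^\ell$ and $j\le j_{\max}=\nu\eta^2 2^\ell/\ell$ enter exactly to guarantee that $\Pr[\mathcal E_t]$ stays far above the $2^{-\Omega(\eta^2 2^\ell)}$ error of the Chernoff/union bounds across all $j$ steps (using $q\le 2^\ell$ and $\eta\ge 2^{-\ell}$ to bound $\log q+\log(1/\eta)\le 2\ell$, so that $a+j(\log q+\log(4/\eta))$ is a small constant times $\eta^2 2^\ell$); the constant $\nu$ must be chosen small enough for this to go through. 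I expect the crux to be the ``peeling'' idea: folding the already-discovered outputs into the oracle $D'$ is what makes the next discovered output automatically \emph{distinct} from the previous ones, and it is what lets the same ``an interesting query must occur'' argument run at every step — without it one only controls the \emph{first} interesting query of each run, which could be concentrated on a single output, giving no distinctness and forcing an inferior per-output cost of $\log q+\ell$ rather than $\log q+\log(1/\eta)$. A secondary nuisance is the careful accounting of the many $2^{-\Omega(\eta^2 2^\ell)}$ ``bad'' events and verifying they are always dominated by $\Pr[\mathcal E_t]$.
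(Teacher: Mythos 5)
Your proposal is correct, and its skeleton matches the paper's proof: the canonical distinguisher $D_f$ accepting exactly the image of $\Con_f$ (with the same use of $k>r$ and $\eps\le 1-2^{r-m}$), fixing a single advice string by averaging at cost $2^{-a}$, and an iterative process that fixes one new output per step at an amortized cost of about $\log q+\log\frac{1}{\eta}$ bits, where the engine is that a single \emph{fixed} predictor cannot agree with a $\half+\frac{\eta}{2}$ fraction of inputs for more than a $2^{-\Omega(\eta^2 2^{\ell})}$ fraction of random $F$ (your Chernoff-plus-union estimate is the same bound the paper packages as the Hamming-ball counting of Claim~\ref{clm:C fails}), and the same parameter accounting via $a\le\nu\eta^2 2^{\ell}$, $j\le\nu\eta^2 2^{\ell}/\ell$, $q\le 2^{\ell}$, $\eta\ge 2^{-\ell}$. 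The organizational difference is in how the iteration is run. The paper defines one global simulation $C(x)$ (the run of $\Red(x,\alpha')$ with \emph{all} queries answered $0$), introduces $t$-weak strings for $t=\log\frac{4q}{\eta}$, and argues by dichotomy: either a new weak output outside $W_{j-1}$ exists at every step up to $j_{\max}$, or the process stops early and the all-zeros simulation predicts a dense set of functions too well, contradicting Claim~\ref{clm:C fails}. You instead peel off a new output directly at every step, taking as simulation oracle the indicator $D'$ of the already-fixed set $W_t$ and extracting the new output from among the at most $q$ queries of the fixed run $\Red^{D'}(x^{*},\alpha^{*})$ by averaging over $x$ and over query positions. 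This buys two things: distinctness of the newly fixed output is automatic (it is chosen outside $W_t$ by construction), and the argument is insensitive to the reduction re-querying previously fixed outputs --- a corner case the paper's stopping analysis treats only implicitly, since its assertion that the canonical queries $z^{x}_{i}$ are not $t$-weak tacitly assumes those queries avoid $W_{j^{*}}$ (strings inside $W_{j^{*}}$ are weak but exempt from the stopping condition); folding $W_t$ into the oracle, as you do, removes the issue. The price is that you must re-apply the concentration bound at each step and check that the $2^{-\Omega(\eta^2 2^{\ell})}$ losses stay negligible relative to $\Pr[\mathcal{E}_t]$, which your accounting handles, and both routes end at the same bound $\Fix_j(\Con)\le a+j\cdot(\log q+\log\frac{4}{\eta})$.
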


\noindent
The proof of Theorem \ref{thm:PEG fix general} is given in Section \ref{sec:prf:thm:fix general}.

\paragraph{Theorem \ref{thm:fix general} follows from Theorem \ref{thm:PEG fix general}.}
Loosely speaking, this follows because any PRG is also a PEG, and therefore limitations on PEGs also hold for PRGs. More formally, any black-box $\rhoHFtoepsPRG{\rho}{\eps}$ proof $(\Con,\Red)$ in which $\eps=1-2^{r-m}$ is also a black-box $\rhoHFtokepsPEG{\rho}{k}{\eps}$ proof for $k=r$. Theorem \ref{thm:PEG fix general} applies for the latter, and therefore also applies for the former, proving Theorem \ref{thm:fix general}.

\paragraph{Theorem \ref{thm:PEG general} follows from Theorem \ref{thm:PEG fix general}.}
The conditions in Theorem \ref{thm:PEG general} satisfy the conditions of Theorem \ref{thm:PEG fix general} except for the requirement made in Theorem \ref{thm:PEG fix general} that $q \le 2^{\ell}$. Therefore, under the conditions of Theorem \ref{thm:PEG fix general} we can apply Theorem \ref{thm:PEG fix general} and conclude that either $q > 2^{\ell}$ or the conclusion of Theorem \ref{thm:PEG fix general} holds and for $j_{\max} = \nu \cdot \frac{\eta^2 \cdot 2^{\ell}}{\ell}$, and every $j \le j_{\max}$, $\Fix_j(\Con) \le a + j \cdot (\log q + \log \frac{4}{\eta})$. However, the latter option cannot happen. This is because by the restriction on $r$, we get that $2^r < j_{\max}$. As the number of outputs of $\Con_f$ is $2^r < j_{\max}$, it is impossible to fix $j_{\max}$ distinct outputs, and so $\Fix_j(\Con)$ is undefined and can be viewed as infinity. Consequently, it can't be the case that $\Fix_j(\Con)$ is smaller than some finite quantity. We conclude that $q>2^{\ell}$.

\subsubsection{Proof of Theorem \ref{thm:PEG fix general}}
\label{sec:prf:thm:fix general}

In this section we prove Theorem \ref{thm:PEG fix general}. We assume the assumption of Theorem \ref{thm:PEG fix general}, namely that $(\Con,\Red)$ is a black-box $\rhoHFtokepsPEG{\rho}{k}{\eps}$ proof that satisfies the conditions of Theorem~\ref{thm:PEG fix general}.
We first define a simple distinguisher that answers one iff its input is a pseudorandom string.

\begin{definition}[The simple distinguisher]
For every $f \in \Func_{\ell,1}$ we define $D_f:\B^m \to \B$ by:
\[ D_f(z) =  \left\{\begin{array}{ll}
        1 , & \exists y \in \B^{r} \mbox{ s.t.\ } \Con_f(y)=z \\
        0, & \mbox{otherwise}
        \end{array}\right. \]
\end{definition}

\begin{claim}
\label{clm:exists A0}
There exists $\alpha' \in \B^a$ and $A_0 \subseteq \Func_{\ell,1}$ such that:
\begin{enumerate}
\item $\Pr_{F \from \Func_{\ell,1}}[F \in A_0] \ge 2^{-a}$.
\item For every $f \in A_0$, the function $C_f:\B^{\ell} \to \B$ defined by $C_f(x)=\Red^{D_f}(x,\alpha')$, satisfies that $C_f$ $\rho\mbox{-}\HFB$ the function $f$.
\end{enumerate}
\end{claim}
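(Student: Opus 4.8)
The plan is a two-step argument. First I would observe that the simple distinguisher $D_f$ witnesses a PEG-break of $\Con_f$ for \emph{every} $f\in\Func_{\ell,1}$ simultaneously, so that the black-box guarantee of Definition~\ref{dfn:bb:hard-function=>PRG} supplies an advice string $\alpha_f$ for each $f$; then I would pigeonhole over the $2^a$ possible advice strings to find one that works for a $2^{-a}$-fraction of all $f$.

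For the first step, note that $\Pr[D_f(\Con_f(U_r))=1]=1$ since $D_f$ accepts every element of the image of $\Con_f$, while $\Pr[D_f(U_m)=1]\le 2^{r-m}$ since that image has at most $2^r$ elements. Because $k>r$ we get $\Pr[D_f(U_m)=1]\le 2^{k-m}$, so the hypothesis of the $(k,\eps)$-PEG condition of Definition~\ref{dfn:PEG} holds, and because $\eps\le 1-2^{r-m}$ the advantage $\Pr[D_f(\Con_f(U_r))=1]-\Pr[D_f(U_m)=1]$ is at least $1-2^{r-m}\ge\eps$. Hence $\Con_f$ is not an $\eps$-PEG for $D_f$, i.e.\ $D_f$ $(k,\eps)$-$\PEGB$ $\Con_f$.

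For the second step, for each $f\in\Func_{\ell,1}$ fix an $\alpha_f\in\B^a$ as guaranteed by Definition~\ref{dfn:bb:hard-function=>PRG} (applied to $D=D_f$), so that $x\mapsto\Red^{D_f}(x,\alpha_f)$ $\rho$-$\HFB$ $f$. The map $f\mapsto\alpha_f$ sends the $2^{2^{\ell}}$ functions of $\Func_{\ell,1}$ into the $2^{a}$-element set $\B^{a}$, so some value $\alpha'\in\B^{a}$ has preimage $A_0:=\{f:\alpha_f=\alpha'\}$ of size at least $2^{2^{\ell}}/2^{a}$, giving $\Pr_{F\from\Func_{\ell,1}}[F\in A_0]\ge 2^{-a}$ (item~1). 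Item~2 is then immediate: for $f\in A_0$ the function $C_f(x)=\Red^{D_f}(x,\alpha')$ equals $\Red^{D_f}(x,\alpha_f)$, which $\rho$-$\HFB$ $f$.

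I expect no genuine obstacle here; the one point that must be handled correctly is that the PEG-break in the first step holds for \emph{all} $f$, not just some $f$ --- this is what licenses applying Definition~\ref{dfn:bb:hard-function=>PRG} uniformly and then averaging --- and this is precisely where the parameter constraints $k>r$ and $\eps\le 1-2^{r-m}$ are used. (If one insists on a strict advantage $>\eps$ in the definition of PEG-break, it suffices to note that either $\Con_f$ is non-injective, making $\Pr[D_f(U_m)=1]<2^{r-m}$, or one may use $\eps<1-2^{r-m}$.)
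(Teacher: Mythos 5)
Your proposal is correct and follows essentially the same route as the paper: show that $D_f$ $(k,\eps)$-$\PEGB$ $\Con_f$ for every $f$ (using $k>r$ and $\eps\le 1-2^{r-m}$), invoke the black-box definition to get an advice string per $f$, and average/pigeonhole over the $2^a$ advice strings to extract $\alpha'$ and $A_0$. Your parenthetical about the strict-inequality edge case is a minor refinement the paper does not bother with, but it does not change the argument.
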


\begin{proof}
For every $f \in \Func_{\ell,1}$, $\Pr[D_f(\Con_f(U_r))=1]=1$, and $\Pr[D_f(U_m)=1] \le 2^{r-m}$. Therefore, $D_f$ $(k,\eps)-\PEGB$ $\Con_f$ if $k > r$ and $\eps \ge 1-2^{r-m}$, which is assumed in the statement of Theorem \ref{thm:PEG fix general}.
By the definition of black-box $\rhoHFtokepsPEG{\rho}{k}{\eps}$ proof, we have that for every $f \in \Func_{\ell,1}$, as $D_f$ $(k,\eps)$-$\PEGB$ $\Con_f$, there exists $\alpha \in \B^a$ such that the function $C \in \Func_{\ell,1}$ defined by $C(x)=\Red^D(x,\alpha)$, $\rho$-$\HFB$ $f$. Therefore, by averaging, there exists $\alpha' \in \B^a$ that works for a $2^{-a}$ fraction of the functions $f \in \Func_{\ell,1}$, and let $A_0$ denote the subset of all functions $f \in \Func_{\ell,1}$ for which $\alpha'$ works.
\end{proof}

We will now define a function $C:\B^{\ell} \ar \B$ (that does not depend on $f$) with the hope of showing that if the conclusion of Theorem \ref{thm:PEG fix general} does not hold, then $C(x)$ simulates $\Red^{D_f}(x,\alpha')$ quite well for ``some'' choices of $f \in \Func_{\ell,1}$ and $x \in \B^{\ell}$.

\begin{definition}
For every $x \in \B^{\ell}$, we define $z^x_1,\ldots,z^x_q \in \B^m$ as follows: We consider an invocation of $\Red^{(\cdot)}(x,\alpha')$ in which all queries are answered by zero, and for $1 \le i \le q$, let $z^x_i$ denote the $j$'th query made by $\Red(x,\alpha')$ in this invocation. Let $C(x)$ denote the output of $\Red(x,\alpha')$ in this invocation.
\end{definition}

Note that $z^x_1,\ldots,z^x_q \in \B^m$ do not depend on the choice of the function $D$ given to $\Red$ as oracle. We will be interested in the case that $\Red$ gets oracle access to the function $D_f$, for some $f \in \Func_{\ell,1}$. Note that for every $f \in \Func_{\ell,1}$, $z^x_1$ is the first query made by $\Red^{D_f}(x,\alpha')$, and for $i>1$, $z^x_i$ may be different than the $i$'th query made by $\Red^{D_f}(x,\alpha')$), as the reduction is allowed to make adaptive queries. Nevertheless, the following obviously holds:

\begin{claim}
\label{clm:adaptive}
For every $f \in \Func_{\ell,1}$ and every $x \in \B^{\ell}$, if $D_f(z^x_1)=\ldots=D_f(z^x_q)=0$ then $C(x)=\Red^{D_f}(x,\alpha')$.
\end{claim}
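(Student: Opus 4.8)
The plan is to prove Claim~\ref{clm:adaptive} by a straightforward induction on the query index, exploiting the fact that the behaviour of the oracle procedure $\Red$ is completely determined by the sequence of answers it receives. The key point is that the strings $z^x_1,\ldots,z^x_q$ and the value $C(x)$ were defined with respect to one fixed, oracle-independent invocation (the one in which every query is answered $0$), and the hypothesis of the claim is exactly that $D_f$ agrees with this ``all-zero'' oracle on all of $z^x_1,\ldots,z^x_q$.

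First I would observe the base case: the internal state of $\Red^{(\cdot)}(x,\alpha')$ just before it issues its first query does not depend on the oracle, since no query has yet been answered. Hence in the invocation $\Red^{D_f}(x,\alpha')$ the first query issued is the very same string $z^x_1$ that is issued in the all-zero invocation. Then I would set up the inductive hypothesis: for every $0 \le i \le q$, if the first $i$ queries of $\Red^{D_f}(x,\alpha')$ are issued and answered $D_f(z^x_1)=\cdots=D_f(z^x_i)=0$, then after these $i$ steps the internal state of $\Red^{D_f}(x,\alpha')$ coincides with that of $\Red$ in the all-zero invocation after $i$ steps; in particular the $(i+1)$-th query, if any, is exactly $z^x_{i+1}$.

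For the inductive step, assuming the states coincide after $i$ queries, the $(i+1)$-th query issued by $\Red^{D_f}(x,\alpha')$ is $z^x_{i+1}$ by the induction hypothesis; by the hypothesis of the claim this query is answered $D_f(z^x_{i+1})=0$, which is precisely the answer received at that step in the all-zero invocation, so the two states continue to agree after $i+1$ steps. Since $\Red$ makes at most $q$ queries, after at most $q$ steps both invocations halt with identical execution transcripts, and hence produce the same output. As the output of the all-zero invocation is by definition $C(x)$, we conclude $\Red^{D_f}(x,\alpha')=C(x)$.

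I do not expect a genuine obstacle here: adaptivity of $\Red$ is harmless precisely because we condition on \emph{all} of $z^x_1,\ldots,z^x_q$ receiving the answer $0$, so a potentially different adaptive query at step $i>1$ simply cannot arise. The only point needing a word of care is to phrase the induction in terms of the execution transcript (state plus sequence of query/answer pairs) rather than just the final output, so that the inductive step is well-founded.
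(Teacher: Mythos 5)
Your induction on the execution transcript is correct and is exactly the standard justification for this claim, which the paper itself states without proof (it is introduced with ``the following obviously holds''), precisely because adaptivity is neutralized once all of $z^x_1,\ldots,z^x_q$ are answered by zero. Nothing is missing; your argument matches the paper's intended (implicit) reasoning.
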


By definition, for every $f \in \Func_{\ell,1}$, we have that the number of $z \in \B^m$ on which $D_f$ answers one is small (at most $2^r$). We will say that a $z \in \B^m$ is \emph{weak} with respect to some $A \subseteq \Func_{\ell,1}$, if it is likely that $D_f(z)=1$ when $f$ is chosen uniformly in $A$.

\begin{definition}
We say that $z \in \B^m$ is $t$-weak with respect to a set $A \subseteq \Func_{\ell,1}$ if
\[ \Pr_{F \from A}[D_F(z)=1] \ge 2^{-t}. \]
\end{definition}

We will now consider an iterative process in which we will iteratively fix outputs $z$ of $\Con$ that are weak.

More precisely,
we set $t=\log \frac{4 \cdot q}{\eta}$ and consider the following iterative process. We start with the set $A_0$ that we already obtained, and $W_0=\emptyset$. We will maintain the following  invariant:

\begin{description}
\item[Invariant:] At step $j$ we maintain that:
\begin{enumerate}[noitemsep]
\item $\Pr_{F \from \Func_{\ell,1}}[F \in A_j] \ge 2^{-(a+jt)}$.
\item $A_j \subseteq A_0$.
\item $|W_j| = j$.
\item For every $z \in W_j$ and every $f \in A_j$, $D_f(z)=1$.
\end{enumerate}
\end{description}

Note that this invariant indeed holds for $j=0$.
At step $0 < j < j_{\max}$ we do the following:
If there does not exist a $z \not \in W_{j-1}$ that is $t$-weak with respect to $A_{j-1}$ then the process stops. Otherwise, if there exists $z \not \in W_{j-1}$ that is $t$-weak with respect to $A_{j-1}$, we define:
\begin{itemize}[noitemsep]
\item $W_j=W_{j-1} \cup \set{z}$.
\item $A_j=\set{f \in A_{j-1}: D_f(z)=1}$.
\end{itemize}
We observe that the invariant is indeed kept throughout this process.

\begin{claim}
\label{clm:invariant}
For every $j \le j_{max}$ for which the process has not yet stopped, the invariant above holds.
\end{claim}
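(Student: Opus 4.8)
The plan is to prove Claim \ref{clm:invariant} by a straightforward induction on $j$. The base case $j=0$ is immediate: Claim \ref{clm:exists A0} gives $\Pr_{F \from \Func_{\ell,1}}[F \in A_0] \ge 2^{-a}$, which is item 1 with $j=0$; item 2 holds since $A_0 \subseteq A_0$; item 3 holds since $W_0 = \emptyset$; and item 4 holds vacuously because $W_0$ is empty.

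For the inductive step I would fix $0 < j \le j_{\max}$, assume the invariant for $j-1$, and invoke the hypothesis that the process has not stopped at step $j$ — that is, there is a $z \notin W_{j-1}$ that is $t$-weak with respect to $A_{j-1}$, and the process sets $W_j = W_{j-1} \cup \set{z}$ and $A_j = \set{f \in A_{j-1} : D_f(z) = 1}$. Items 2 and 3 then follow mechanically: $A_j \subseteq A_{j-1} \subseteq A_0$ by the definition of $A_j$ and the inductive hypothesis, and $|W_j| = |W_{j-1}| + 1 = (j-1)+1 = j$ since $z \notin W_{j-1}$. Item 4 splits into two cases: for the newly added output $z$, every $f \in A_j$ satisfies $D_f(z) = 1$ by the very definition of $A_j$; for any $z' \in W_{j-1}$, the inductive hypothesis gives $D_f(z') = 1$ for all $f \in A_{j-1}$, and since $A_j \subseteq A_{j-1}$ this continues to hold for all $f \in A_j$.

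The only computation worth spelling out is item 1. Writing the event $\set{F \in A_j}$ as $\set{F \in A_{j-1}}$ intersected with $\set{D_F(z)=1}$ and applying the chain rule,
\[ \Pr_{F \from \Func_{\ell,1}}[F \in A_j] = \Pr_{F \from \Func_{\ell,1}}[F \in A_{j-1}] \cdot \Pr_{F \from A_{j-1}}[D_F(z)=1] \ge 2^{-(a+(j-1)t)} \cdot 2^{-t} = 2^{-(a+jt)}, \]
where the first factor is bounded by the inductive hypothesis and the second by the fact that $z$ is $t$-weak with respect to $A_{j-1}$ (this is exactly the definition of $t$-weakness). This closes the induction.

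I do not expect a genuine obstacle here — the claim is a bookkeeping step recording why each iteration of the process is well-defined and loses only a bounded amount of probability mass. The one place to be slightly careful is to phrase the conditioning in item 1 correctly: the second factor is a probability over the uniform distribution on $A_{j-1}$, not over all of $\Func_{\ell,1}$. It is also worth stating explicitly that ``the process has not yet stopped at step $j$'' is precisely what guarantees that a $t$-weak $z \notin W_{j-1}$ exists and hence that $A_j$ and $W_j$ are actually defined, so the induction is not vacuous. The real work of the overall argument comes after this claim: one shows the process must terminate at some $j^* < j_{\max}$ (otherwise item 1 would make $A_{j_{\max}}$ empty while items 2 and 4 together with the reconstruction guarantee of $\Red^{D_f}$ from Claim \ref{clm:exists A0} and Claim \ref{clm:adaptive} force it to contain many functions), and then one reads off a bound on $\Fix_{j^*}(\Con)$ from the set $W_{j^*}$ of fixed outputs and the probability bound in item 1.
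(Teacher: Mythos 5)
Your proof is correct and takes essentially the same route as the paper: the paper likewise treats items 2--4 as immediate and establishes item 1 by the same conditioning/chain-rule computation, combining the bound for step $j-1$ with the $t$-weakness of the newly chosen $z$. (One incidental remark on your closing aside, which is outside the claim itself: the overall argument actually shows the process does \emph{not} stop before $j_{\max}$ --- a stop at $j^*<j_{\max}$ yields a contradiction via Claims \ref{clm:exist A} and \ref{clm:C fails}, and the $\Fix_j(\Con)$ bounds are then read off from the invariant at every $j \le j_{\max}$ --- but this does not affect the correctness of your proof of the claim.)
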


\begin{proof}
This is obvious for the second, third and fourth items. The first item follows because for every $j$ if the process did not stop before step $j$, then $A_j$ exists, and we have that:
\begin{align*}
\Pr_{F \from \Func_{\ell,1}}[F \in A_j] &= \Pr_{F \from \Func_{\ell,1}}[F \in A_j|F \in A_{j-1}] \cdot \Pr_{F \from \Func_{\ell,1}}[F \in A_{j-1}] \\
& \ge \Pr_{F \from A_{j-1}}[F \in A_j] \cdot 2^{-(a+(j-1) \cdot t)} \\
& \ge 2^{-t} \cdot 2^{-(a+(j-1) \cdot t)} \\
& \ge  2^{-(a+j \cdot t)},
\end{align*}
where the third line is using $t$-weakness.
\end{proof}

\begin{claim}
\label{clm:exist A}
If the process stops at some $j^*<j_{max}$, then there exists $A \subseteq A_0$ such that:
\begin{enumerate}
\item $\Pr_{F \from \Func_{\ell,1}}[F \in A] \ge 2^{-(a+j^* \cdot t)+1}$.
\item $\Pr_{F \from A,X \in \B^{\ell}}[C(X)=F(X)] \ge \half + \frac{\eta}{2}$.
\end{enumerate}
\end{claim}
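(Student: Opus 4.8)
The plan is to show that the halting of the process at a step $j^* < j_{\max}$ forces a \emph{single} function to compute $F$ noticeably better than a coin flip on a not-too-small set of functions. Write $A' := A_{j^*-1}$ and $W := W_{j^*-1}$. First I would record what is available: by Claim \ref{clm:invariant}, $A' \subseteq A_0$, $\Pr_{F \from \Func_{\ell,1}}[F \in A'] \ge 2^{-(a+(j^*-1)t)}$, and $D_f(z)=1$ for every $z \in W$ and every $f \in A'$; and since the process halted, every $z \notin W$ fails to be $t$-weak with respect to $A'$, i.e.\ $\Pr_{F \from A'}[D_F(z)=1] < 2^{-t}$.

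The heart of the argument is to simulate $\Red$ with oracle answers that are \emph{consistent with $D_f$ on the fixed set $W$}: for each $x \in \B^{\ell}$, run $\Red^{(\cdot)}(x,\alpha')$ answering a query $z$ by $1$ if $z \in W$ and by $0$ otherwise, let $w^x_1,\dots,w^x_q \in \B^m$ be the resulting queries, and let $C(x)$ be the resulting output. (This refines the all-zero simulation; it is still a function that does not depend on $f$, since $W$ depends only on $\alpha'$ and $A_0$.) Exactly as in Claim \ref{clm:adaptive}, for every $f \in A'$ the oracle $D_f$ agrees with this answering rule on all of $W$, hence $C(x)=\Red^{D_f}(x,\alpha')$ whenever $D_f(w^x_i)=0$ for every $i$ with $w^x_i \notin W$.

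Next I would bound the simulation error on average over $f \in A'$. Since $C(x)\ne\Red^{D_f}(x,\alpha')$ forces some query $w^x_i \notin W$ with $D_f(w^x_i)=1$, a union bound over the $\le q$ queries gives
\[ \Exp_{F \from A'}\Pr_{X \from \B^{\ell}}\bigl[\exists i:\ w^X_i \notin W\ \text{and}\ D_F(w^X_i)=1\bigr] \;\le\; \sum_{i=1}^{q}\Exp_{X}\Pr_{F \from A'}\bigl[w^X_i \notin W\ \text{and}\ D_F(w^X_i)=1\bigr]. \]
For fixed $x$ and $i$ the inner probability is $0$ when $w^x_i \in W$ and is strictly less than $2^{-t}$ otherwise (by the halting condition, as $w^x_i$ is then not $t$-weak), so the sum is at most $q\cdot 2^{-t}=\eta/4$ using $t=\log(4q/\eta)$. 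I would then take $A := \{\,f \in A' : \Pr_{X}[C(X)\ne\Red^{D_f}(X,\alpha')] < \eta/2\,\}$. Markov gives $\Pr_{F \from A'}[F\in A] > 1/2$, so $\Pr_{F \from \Func_{\ell,1}}[F\in A] \ge \tfrac12\cdot 2^{-(a+(j^*-1)t)} \ge 2^{-(a+j^*t)+1}$ using $t\ge 2$, which is item~(1); and $A \subseteq A' \subseteq A_0$. For item~(2): every $f\in A$ lies in $A_0$, so by Claim \ref{clm:exists A0} the function $\Red^{D_f}(\cdot,\alpha')$ agrees with $f$ on at least a $\rho=\half+\eta$ fraction of inputs, while $C$ disagrees with $\Red^{D_f}(\cdot,\alpha')$ on less than an $\eta/2$ fraction, so $C$ agrees with $f$ on more than a $\half+\eta/2$ fraction; averaging over $F \from A$ gives item~(2).

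The step I expect to need the most care — and the real point of running the iterative process in the first place — is the treatment of queries landing in $W$: answering every query by $0$ can make the simulation wrong on essentially all inputs $x$, because each $z \in W$ is a genuine output of $\Con_f$ for every $f \in A'$ and $\Red$ may well query such a $z$. Answering $W$ by $1$ is exactly what makes \emph{every remaining dangerous query non-$t$-weak}, which is what lets the single union bound over the $q$ queries go through. Beyond that, the only bookkeeping subtlety is the factor of $2$ in item~(1), which is lost to Markov and recovered from $t=\log(4q/\eta)\ge 2$.
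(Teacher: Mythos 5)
Your proof is correct, and it follows the paper's skeleton (halting means no $t$-weak string outside the frozen set, a union bound over the $q$ queries, Markov, then combining with Claim~\ref{clm:exists A0}), but it deviates in one substantive way: you change the simulator. The paper fixes $C$ \emph{before} the iterative process, as the run of $\Red^{(\cdot)}(x,\alpha')$ with all queries answered $0$, and at the halting step asserts that every query $z^x_1,\dots,z^x_q$ fails to be $t$-weak with respect to the surviving set; as written, that step assumes these queries avoid the frozen set $W$, whereas a query in $W$ is maximally weak ($D_f(z)=1$ for \emph{every} surviving $f$), and on inputs $x$ where the all-zero run makes such a query the simulation can diverge from $\Red^{D_f}(x,\alpha')$ for every surviving $f$. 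Your simulator, which answers queries in $W_{j^*-1}$ by $1$ and everything else by $0$, is exactly the repair: for $f$ in the surviving set the oracle agrees with your answering rule on $W$, so the only dangerous queries are the non-$W$ ones, and those are not $t$-weak by the halting condition, which is what legitimizes the $q\cdot 2^{-t}=\eta/4$ union bound. The price is that you establish item (2) for a function $C$ that is not literally the one defined in the paper; but your $C$ is still a single fixed function (it depends only on $\alpha'$, the reduction, and the set $W_{j^*-1}$ produced by the process, not on $F$), so Claim~\ref{clm:C fails} applies verbatim and the rest of the proof of Theorem~\ref{thm:PEG fix general} goes through unchanged. Two smaller points in your favor: you work with $A_{j^*-1},W_{j^*-1}$, which are the sets actually defined when the process halts at step $j^*$, and your bookkeeping via $t=\log(4q/\eta)\ge 2$ genuinely yields the stated bound $2^{-(a+j^*t)+1}$ in item (1), whereas the computation with the paper's indexing gives only $\tfrac12\cdot 2^{-(a+j^*t)}$ (harmless downstream, but your version matches the statement).
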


\begin{proof}
If the process stopped at some $j^*<j_{max}$, then there does not exist a $z \not \in W_{j^*}$ that is $t$-weak with respect to $A_{j^*}$. This in particular means that for every $x \in \B^{\ell}$, $z^x_1,\ldots,z^x_q$ are not $t$-weak with respect to $A_{j^*}$ and therefore, by a union bound:
\[ \Pr_{F \from A_{j^*}}[\exists i \in [q]: D_F(z^x_i)=1] \le q \cdot 2^{-t} \le \frac{\eta}{4}, \]
which means that:
\[ \Pr_{F \from A_{j^*}}[\forall i \in [q]: D_F(z^x_i)=0] \ge 1-\frac{\eta}{4}, \]
by Claim \ref{clm:adaptive} we conclude that for every $x \in \B^{\ell}$,
\[ \Pr_{F \from A_{j^*}}[C(x)=\Red^{D_F}(x,\alpha')] \ge 1-\frac{\eta}{4}. \]
For every $x \in \B^{\ell}$, let $V_x$ denote the random variable (over the probability space of $F \from A_{j^*}$) defined by:
\[ V_x =  \left\{\begin{array}{ll}
        1 , & C(x) \ne \Red^{D_F}(x,\alpha') \\
        0, & \mbox{otherwise}
        \end{array}\right. \]
Let $V = \sum_{x \in \B^{\ell}} V_x$. It follows that $\Exp_{F \from A_{j^*}}[V] \le 2^{\ell} \cdot \frac{\eta}{4}$. By Markov's inequality we have that:
\[ \Pr_{F \from A_{j^*}}[V> 2^{\ell} \cdot \frac{\eta}{2}] < \half. \]
Let $A=\set{f \in A_{j^*} : V  \le  2^{\ell} \cdot \frac{\eta}{2}}$. It follows that:
\[ \Pr_{F \from \Func_{\ell,1}}[F \in A] \ge \half \cdot \Pr_{F \from \Func_{\ell,1}}[F \in A_{j^*}] \ge 2^{-(a+j^* \cdot t)+1}. \]
For every $f \in A$, we have that the fraction of inputs $x \in \B^{\ell}$ on which $V_x(f)=1$ (meaning that $C(x) \ne \Red^{D_f}(x,\alpha')$) is $\frac{V(f)}{2^\ell} \le \frac{\eta}{2}$. This means that when we choose both $F \from A$ and $X \from \B^{\ell}$ independently, we have that:
\[ \Pr_{F \from A,X \from \B^{\ell}}[C(X) \ne \Red^{D_F}(X,\alpha')] \le \frac{\eta}{2}. \]
On the other hand, as $A \subseteq A_{j^*} \subseteq A_0$, by Claim \ref{clm:exists A0} we have that for every $f \in A$, the function $C_f:\B^{\ell} \to \B$ defined by $C_f(x)=\Red^{D_f}(x,\alpha')$, satisfies that $C_f$ $\rho$-$\HFB$ the function $f$. This means that for every $f \in A$,
\[ \Pr_{X \from \B^{\ell}}[\Red^{D_f}(X,\alpha')=f(X)] \ge \rho. \]
This means that when we choose both $F \from A$ and $X \from \B^{\ell}$ independently, we have that:
\[ \Pr_{F \from A,X \from \B^{\ell}}[\Red^{D_F}(X,\alpha')=F(X)] \ge \rho. \]
Putting things together, we have that:
\[ \Pr_{F \from A,X \from \B^{\ell}}[C(X)=F(X)] \ge \rho - \frac{\eta}{2}=\half+\frac{\eta}{2}. \]
\end{proof}

However, the next claim shows that if $A$ is a large set, it is unlikely that a single function $C$ is a good approximation to $F \from A$.

\begin{claim}
\label{clm:C fails}
For every $A \subseteq \Func_{\ell,1}$, and every function $C:\B^{\ell} \ar \B$, if
$\Pr_{F \from \Func_{\ell,1}}[F \in A] \ge 2^{-\Delta}$ then
\[ \Pr_{F \from A,X \from \B^{\ell}}[C(X)=F(X)] = \half+O\left(\sqrt{\frac{\Delta+\ell}{2^{\ell}}}\right) . \]
\end{claim}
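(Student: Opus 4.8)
The plan is to fix the candidate function $C$ and show that, over a uniformly random $F \from \Func_{\ell,1}$, the agreement $\Pr_{X \from \B^\ell}[C(X) = F(X)]$ is tightly concentrated around $\half$, and then condition on the event $F \in A$. Concretely, for a fixed $f$ write $\mathsf{agr}(f) = \frac{1}{2^\ell}\cdot|\{x : C(x) = f(x)\}|$. When $F \from \Func_{\ell,1}$ is uniform, the values $F(x)$ for the $2^\ell$ inputs are independent unbiased bits, so $2^\ell \cdot \mathsf{agr}(F)$ is a sum of $2^\ell$ independent Bernoulli$(1/2)$ random variables; hence $\Exp[\mathsf{agr}(F)] = \half$ and, by a Chernoff/Hoeffding bound, $\Pr_{F \from \Func_{\ell,1}}[\mathsf{agr}(F) \ge \half + \gamma] \le 2^{-\Omega(\gamma^2 \cdot 2^\ell)}$ for every $\gamma > 0$.

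Next I would transfer this tail bound from the uniform distribution to the conditional distribution $F \from A$. Since $\Pr_{F \from \Func_{\ell,1}}[F \in A] \ge 2^{-\Delta}$, conditioning on $F \in A$ blows up probabilities by a factor of at most $2^{\Delta}$, so
\[ \Pr_{F \from A}[\mathsf{agr}(F) \ge \half + \gamma] \le 2^{\Delta} \cdot 2^{-\Omega(\gamma^2 \cdot 2^\ell)}. \]
Choosing $\gamma = c \cdot \sqrt{(\Delta + \ell)/2^\ell}$ for a suitable constant $c$ makes the right-hand side at most, say, $2^{-\ell}$ (the extra $\ell$ in the numerator of $\gamma^2$ is what buys this slack). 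Thus with probability at least $1 - 2^{-\ell}$ over $F \from A$ we have $\mathsf{agr}(F) < \half + \gamma$, and on the complementary event we use the trivial bound $\mathsf{agr}(F) \le 1$. Averaging, $\Exp_{F \from A}[\mathsf{agr}(F)] \le \half + \gamma + 2^{-\ell} = \half + O(\sqrt{(\Delta+\ell)/2^\ell})$, which (since $\Exp_{F\from A, X}[C(X)=F(X)] = \Exp_{F \from A}[\mathsf{agr}(F)]$) is exactly the claimed upper bound. For the matching lower bound on the displayed equality — the claim is stated as an equality $\half + O(\cdot)$, i.e. a two-sided estimate — one symmetrically bounds $\Pr_{F \from A}[\mathsf{agr}(F) \le \half - \gamma]$ the same way, so $\mathsf{agr}(F)$ lies in $\half \pm \gamma$ except with probability $2^{-\ell}$, giving $\Exp_{F \from A}[\mathsf{agr}(F)] = \half \pm O(\gamma)$.

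The only mild subtlety, and the step I'd be most careful about, is the union-bound-free conditioning: we are fixing a single $C$ (the one produced by the earlier argument) rather than quoting a bound uniform over all $2^{2^\ell}$ functions $C$, so no union bound over $C$ is needed and the concentration is applied to one fixed $C$ against a random $F$. Everything else is a routine Chernoff estimate plus the observation that conditioning on a probability-$\ge 2^{-\Delta}$ event costs a factor $2^\Delta$. No switching lemma or other machinery from the preliminaries is needed here.
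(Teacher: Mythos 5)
Your proof is correct and is essentially the paper's own argument in different packaging: the Hoeffding/Chernoff tail bound you apply to the agreement of the fixed $C$ with a uniformly random $F$ is the same counting fact the paper gets from the volume of a Hamming ball of radius $(\half-\frac{\lambda}{2})\cdot 2^{\ell}$ (via the binary entropy bound), and your factor-$2^{\Delta}$ conditioning plus tail integration replaces the paper's averaging to a dense subset $A'\subseteq A$ and derivation of a contradiction. Both routes give the stated bound, and your symmetric treatment of the lower tail also covers the two-sided reading of the equality, which the paper leaves implicit since only the upper bound is used.
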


\begin{proof}
For the purpose of contradiction we will set $\lambda = \Omega\left(\sqrt{\frac{\Delta+\ell}{2^{\ell}}}\right)$, and assume that
\[ \Pr_{F \from A,X \from \B^{\ell}}[C(X)=F(X)] \ge \half+\lambda. \]
By an averaging argument, it follows that:
\[ \Pr_{F \from A}\left[\Pr_{X \from \B^{\ell}}[C(X)=F(X)] \ge \half+\frac{\lambda}{2}\right] \ge \frac{\lambda}{2}. \]
Let $A' \subseteq A$ be the subset defined as follows:
\[A'=\set{f \in A: \Pr_{X \from \B^{\ell}}[C(X)=f(X)] \ge \half+\frac{\lambda}{2}}. \]
It follows that $|A'| \ge \frac{\lambda}{2} \cdot |A|$.
For every function $C \in \Func_{\ell,1}$, the number of $f \in \Func_{\ell,1}$ such that $\Pr_{X \from \B^{\ell}}[C(X)=f(X)] \ge \half+\frac{\lambda}{2}$ is bounded by the size of a Hamming ball in $\B^{2^{\ell}}$ that is of radius $(\half-\frac{\lambda}{2}) \cdot 2^{\ell}$. The latter quantity is bounded by $2^{H(\half-\lambda/2) \cdot 2^{\ell}} \le 2^{(1-O(\lambda^2)) \cdot 2^{\ell}}$ where $H(\cdot)$ is Shannon's binary entropy function, and using the fact that $H(\half-\lambda)=1-O(\lambda^2)$.
We conclude that there exists a constant $c>1$ such that:
\[ |A| \le \frac{2}{\lambda} \cdot |A'| \le \frac{2}{\lambda} \cdot 2^{(1-c \cdot \lambda^2) \cdot 2^{\ell}} = 2^{\log( \frac{1}{\lambda})+1 + (1-c \cdot \lambda^2) \cdot 2^{\ell}}= 2^{2^{\ell} - (c \cdot \lambda^2 \cdot 2^{\ell}-\log(1/\lambda)-1)}.  \]
We also have that $|A| \ge 2^{2^{\ell}-\Delta}$, and so we conclude that $\Delta > c \cdot \lambda^2 \cdot 2^\ell - \log(1/\lambda)-1$. This gives a contradiction if $\lambda =  \Omega\left(\sqrt{\frac{\Delta+\ell}{2^{\ell}}}\right)$.
\end{proof}

\noindent
Putting Claim \ref{clm:exist A} and Claim \ref{clm:C fails} together, we get that:

\begin{claim}
The iterative process does not stop until $j=j_{\max}$.
\end{claim}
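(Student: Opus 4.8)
The plan is to assume the process halts prematurely, at some step $j^* < j_{\max}$, and to derive a contradiction by confronting Claim~\ref{clm:exist A} with Claim~\ref{clm:C fails}. First I would dispose of a degenerate case: if $j_{\max} = \nu \cdot \frac{\eta^2 \cdot 2^\ell}{\ell} < 1$ then there is no step $0 < j < j_{\max}$ and there is nothing to prove, so I may assume $j_{\max} \ge 1$, which rearranges to the inequality $\ell \le \nu \cdot \eta^2 \cdot 2^\ell$ that will be used later.

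Assuming then that the process stops at some $j^* < j_{\max}$, Claim~\ref{clm:exist A} supplies a set $A \subseteq A_0$ of density $\Pr_{F \from \Func_{\ell,1}}[F \in A] \ge 2^{-(a + j^* \cdot t) + 1}$ on which a \emph{single} function $C$ agrees with $F$ noticeably often: $\Pr_{F \from A, X \from \B^\ell}[C(X) = F(X)] \ge \half + \frac{\eta}{2}$. I would then invoke Claim~\ref{clm:C fails} with a $\Delta \le a + j^* \cdot t$, which caps this agreement at $\half + O\!\left(\sqrt{(a + j^* \cdot t + \ell)/2^\ell}\right)$. Comparing the lower and upper bounds forces $\frac{\eta}{2} = O\!\left(\sqrt{(a + j^* \cdot t + \ell)/2^\ell}\right)$, i.e.\ $a + j^* \cdot t + \ell \ge \Omega(\eta^2 \cdot 2^\ell)$.

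The last step is to show this is impossible under the hypotheses. Since $t = \log \frac{4q}{\eta}$ with $q \le 2^\ell$ and $\eta \ge 2^{-\ell}$, we have $t \le 2 + 2\ell = O(\ell)$, hence $j^* \cdot t < j_{\max} \cdot O(\ell) = O(\nu \cdot \eta^2 \cdot 2^\ell)$; combined with $a \le \nu \cdot \eta^2 \cdot 2^\ell$ and $\ell \le \nu \cdot \eta^2 \cdot 2^\ell$ this gives $a + j^* \cdot t + \ell = O(\nu \cdot \eta^2 \cdot 2^\ell)$, which contradicts $a + j^* \cdot t + \ell \ge \Omega(\eta^2 \cdot 2^\ell)$ once $\nu$ is chosen small enough. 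Therefore the process runs all the way to $j = j_{\max}$, and then for every $j \le j_{\max}$ the set $W_j$ together with the density bound $\Pr_{F}[F \in A_j] \ge 2^{-(a+j\cdot t)}$ from the invariant of Claim~\ref{clm:invariant} (each $z \in W_j$ being an output of $\Con_f$ for all $f \in A_j$) witnesses $\Fix_j(\Con) \le a + j \cdot t = a + j \cdot (\log q + \log \tfrac{4}{\eta})$, which completes the proof of Theorem~\ref{thm:PEG fix general}.

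The step I expect to demand the most care is the constant-chasing at the end: the $\Omega(\cdot)$ in Claim~\ref{clm:C fails} hides a universal constant that must be beaten by $\nu$, and the same $\nu$ governs both $j_{\max}$ and the bound on $a$, so it has to be fixed only after all the inequalities are laid out. Everything else is bookkeeping; the one genuinely new observation beyond the two cited claims is that in the admissible parameter regime $t = O(\ell)$ and $\ell \le \nu \cdot \eta^2 \cdot 2^\ell$, the latter being exactly the assumption that makes the conclusion non-vacuous.
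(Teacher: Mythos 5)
Your proof is correct and takes essentially the same route as the paper: assume the process halts at some $j^* < j_{\max}$, combine Claim \ref{clm:exist A} with Claim \ref{clm:C fails} to force $a + j^*\cdot t + \ell \ge \Omega(\eta^2\cdot 2^\ell)$, and contradict the parameter hypotheses with $\nu$ chosen small. The only differences are cosmetic bookkeeping: the paper rearranges the same inequality into a lower bound $q > 2^{\ell}$ contradicting the hypothesis $q \le 2^{\ell}$, whereas you use $q \le 2^{\ell}$ and $\eta \ge 2^{-\ell}$ up front to bound $t = O(\ell)$, and you make explicit the degenerate case $j_{\max} < 1$ (equivalently $\ell \le \nu\cdot\eta^2\cdot 2^{\ell}$) that the paper treats implicitly.
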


\begin{proof}
By Claim \ref{clm:exist A} and Claim \ref{clm:C fails}, we conclude that if the process stops at some $j^*<j_{max}$ then there exists a constant $c>1$ such that:
\[\half + c \cdot \sqrt{\frac{a+j^* \cdot t + \ell}{2^{\ell}}} \ge \half+\frac{\eta}{2}, \]
Recalling that $t=\log \frac{4q}{\eta}$ this gives that:
\begin{align*}
q &\ge \frac{\eta}{4} \cdot 2^{\frac{\frac{2^{\ell} \cdot \eta^2}{4c}-a - \ell}{j^*}} \\
& \ge \frac{\eta}{4} \cdot 2^{\frac{\frac{2^{\ell} \cdot \eta^2}{5c}}{j^*}} \\
& > 2^{\ell},
\end{align*}
where the first inequality follows because we have that $a \le \nu \cdot \eta^2 \cdot 2^{\ell}$, and we can choose the constant $\nu>0$ to be sufficiently small. The second inequality follows because $\eta \ge 2^{-\ell}$, and $j^* \le j_{\max} \le \nu \cdot \frac{\eta^2 \cdot 2^{\ell}}{\ell}$, and we can choose $\nu>0$ to be sufficiently small.

Therefore, as we are assuming that $q \le 2^{\ell}$, it is a contradiction if the process stops at some $j^* < j_{\max}$.
\end{proof}

As the process did not stop at any $j < j_{\max}$, then for every $j \le j_{\max}$, the sets $A_j,W_j$ are defined, and by Claim \ref{clm:invariant} they maintain the invariant.

By the invariant, for every $j \le j_{\max}$ we conclude that for every $z \in W_j$ and every $f \in A_j$, $D_f(z)=1$ (meaning that there exists $y \in \B^r$ such that $z$ is an output of $\Con_f)$. By the invariant, we know that $|W_j|=j$ therefore, if we denote the elements of $W_j$ by $z_1,\ldots,z_j \in \B^m$ we conclude that
for every $f \in A_j$, and every $i \in [j]$ there exists $y_i \in \B^r$ (that may depend on $f$) such that $\Con_f(y_i)=z_i$. From the invariant, we also have that:
\[\Pr_{F \from \Func_{\ell,1}}[F \in A_j] \ge 2^{-(a+j \cdot t)}=2^{-(a+j \cdot (\log q + \log \frac{4}{\eta}))}. \]
Putting everything together we get that:
\[ \Pr_{F \from \Func_{\ell,1}}[\forall i \in [j]: \exists y_i \in \B^r \mbox{ s.t. }\Con_F(y_i)=z_i] \ge 2^{-(a+j \cdot (\log q + \log \frac{4}{\eta}))}, \]
which gives that for every $j \le j_{\max}$, $\Fix_j(\Con) \le a+j \cdot (\log q + \log \frac{4}{\eta})$. This proves Theorem \ref{thm:PEG fix general}.

\section{Limitations on black-box hardness amplification at the extreme high-end} \label{sec:hardness amplification}

In this section we prove Theorem \ref{thm:hardness} showing that reductions for black-box $\HFtorhoHF{(\half+\eps)}$ proofs must make many queries, even at the extreme high-end.
Theorem \ref{thm:hardness} is a combination of two lower bounds, stated next:

\begin{theorem}[Lower bound in terms of $\eps$]
\label{thm:hardness eps}
If $(\Con,\Red)$ is a $\HFtorhoHF{(\half+\eps)}$ proof
with parameters $\ell,\ell',a,\rho=1,\rho'=\half+\eps$, satisfying $a \le \frac{2^{\ell}}{10}$ and $\ell' \ge \log \frac{1}{\eps^2} + \Omega(1)$ then $\Red$ must make at least $q = \Omega(\frac{1}{\eps})$ queries.
\end{theorem}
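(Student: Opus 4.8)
}
The approach is an adversary argument in the spirit of Applebaum et al.~\cite{AASY15}, reducing the query complexity of $\Red$ to a ``coin problem''. Assume $\Red$ makes $q$ queries and, for contradiction, that $q=o(1/\eps)$. Fix a uniformly random $f:\B^{\ell}\to\B$ and let $E=\Con_f$. Define a distinguisher $D$ by $D(y)=E(y)\oplus Z_y$, where the $Z_y$ are independent with $\Pr[Z_y=1]=\half-2\eps$. Since $\rho'=\half+\eps$ and $2^{\ell'}\ge\Omega(1/\eps^2)$ (this is exactly what the hypothesis $\ell'\ge\log\frac{1}{\eps^2}+\Omega(1)$ buys us), a Hoeffding bound shows that with probability $\ge 0.99$ over $Z$ the string $D$ agrees with $\Con_f$ on at least a $\half+\eps$ fraction of inputs, so $D$ is a legitimate distinguisher. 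By the black-box property there is then an advice string $\alpha=\alpha(f,Z)\in\B^{a}$ with $\Red^{D}(\cdot,\alpha)=f$.

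\paragraph{The coin-problem / hybrid lemma.}
The heart of the matter is that $q$ queries give almost no information through this noisy oracle. For a fixed $y$ the bit $E(y)\oplus Z_y$ is at statistical distance $2\eps$ from a uniform bit; hence for every \emph{fixed} advice string $\alpha$ and \emph{fixed} input $x$ a standard hybrid argument (condition on $f$, then replace the biased coins $Z_y$ by fair coins one at a time, using that $\Red^{(\cdot)}(x,\alpha)$ touches at most $q$ coordinates of $Z$) yields $|\Pr_{f,Z}[\Red^{\Con_f\oplus Z}(x,\alpha)=1]-\Pr_{f}[\Red^{R}(x,\alpha)=1]|\le 2q\eps$, where $R$ is a uniformly random oracle independent of $f$. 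Since $\Red^{R}(x,\alpha)$ is then independent of the uniform bit $f(x)$, this gives $\Pr_{f,Z}[\Red^{\Con_f\oplus Z}(x,\alpha)=f(x)]\le\half+2q\eps$ for every fixed $\alpha$ and $x$: a single $q$-query invocation with a \emph{fixed} advice string has advantage only $O(q\eps)$ for computing one bit of $f$.

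\paragraph{Assembling the contradiction, and the main obstacle.}
The difficulty is that the working advice $\alpha(f,Z)$ depends on the noise $Z$, so one cannot union-bound over the $2^{a}$ advice strings, and one cannot afford to run $\Red$ on all $2^{\ell}$ inputs inside the hybrid (that would cost $q2^{\ell}\eps\gg 1$). The plan is to decouple the advice from the noisy oracle. First, for each $f$ average over the good noise patterns to fix a single $\alpha^{*}(f)\in\B^{a}$, depending only on $f$, with $\Pr_{Z}[\Red^{\Con_f\oplus Z}(\cdot,\alpha^{*}(f))=f]\ge 2^{-a-O(1)}$. Next pass to the ``null'' experiment in which the oracle is a uniform $R$ independent of $f$: by a counting bound in the spirit of Claim~\ref{clm:C fails}, for any fixed advice string $\Red^{R}(\cdot,\alpha)$ agrees with a random $f$ on more than a $\half+\gamma$ fraction of inputs only with probability $2^{-\Omega(\gamma^{2}2^{\ell})}$, and summing over the $2^{a}$ advice strings this is still $o(1)$ once $\gamma=\Theta(\sqrt{a/2^{\ell}})$; here the hypothesis $a\le 2^{\ell}/10$ makes $\gamma$ a constant strictly below $\half$. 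Finally, bridge the two experiments by applying the single-invocation hybrid of the previous paragraph on one randomly chosen input: if $q=o(1/\eps)$ the per-invocation advantage $O(q\eps)$ is too small to lift the null-experiment agreement above the constant threshold $\half+\gamma$, contradicting the $\ge 2^{-a-O(1)}$ bound transported from the noisy experiment, and hence $q=\Omega(1/\eps)$. The step requiring the most care is exactly this bridge — transporting the ``$\Red$ computes all of $f$ with probability $\ge 2^{-a}$'' statement from the noisy oracle to the null experiment while paying only the single-query hybrid cost $O(q\eps)$ rather than the prohibitive global cost — and interleaving the counting bound with the coin-problem estimate so the quantitative balance works out (this loss is also why the bound is $\Omega(1/\eps)$ rather than the $\Omega(\ell/\eps^{2})$ of \cite{GSV18}, whose information-theoretic argument breaks once $a\ge 2^{\ell/2}$).
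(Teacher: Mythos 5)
Your first two steps are sound and match the starting point of the paper's argument: the noisy oracle $\Con_f\oplus Z$ is, with high probability over $Z$, a legitimate $(\half+\eps)$-$\HFB$ adversary (this is where $\ell'\ge\log\frac{1}{\eps^2}+\Omega(1)$ is used), and for a \emph{fixed} $\alpha$ and $x$ the $q$-query hybrid gives $\Pr_{f,Z}[\Red^{\Con_f\oplus Z}(x,\alpha)=f(x)]\le\half+O(q\eps)$. The genuine gap is the ``bridge'', and as described it cannot work: the statements you want to play against each other live at probability scales that an \emph{additive} error of $O(q\eps)$ cannot connect. Once you fix a single advice string $\alpha^{*}(f)$ by pigeonhole, the noisy-experiment event ``$\Red^{\Con_f\oplus Z}(\cdot,\alpha^{*}(f))$ computes all of $f$'' has probability only $2^{-a-O(1)}$ over $Z$; in the regime the theorem is about ($a$ up to $2^{\ell}/10$) this is astronomically below $q\eps$, so transporting it across experiments with additive error $O(q\eps)$ yields nothing, and it does not contradict the per-input bound either, since an event of mass $2^{-a}$ contributes at most $2^{-a}$ to the per-input success probability. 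If instead you keep the ``$\exists\alpha$'' (which does hold with probability $0.99$), the quantity you must track---agreement over all $2^{\ell}$ inputs and all $2^{a}$ advice strings---depends on up to $q\cdot 2^{\ell}\cdot 2^{a}$ oracle positions, so the relevant hybrid cost is not $O(q\eps)$; and union-bounding the single-invocation hybrid over advice strings costs $2^{a}\cdot O(q\eps)$, which is useless. Evaluating on one random input, as you propose, only controls the \emph{expectation} of the agreement fraction, never the tail event ``agreement $>\half+\gamma$'', whose null-experiment probability $2^{-\Omega(\gamma^{2}2^{\ell})}$ and noisy-experiment probability $2^{-a-O(1)}$ both sit far below any additive error you can afford. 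So no contradiction materializes (this already fails for moderate $a$, not only at $a=\Theta(2^{\ell})$).

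This is precisely the difficulty the paper's proof is built to circumvent. It encodes the whole statement ``some advice string recovers all of $f$'' as a depth-3 circuit $C=\bigvee_{\alpha\in\B^{a}}C^{\alpha}$, where $C^{\alpha}$ is a $q$-CNF obtained as the AND over all $x\in\B^{\ell}$ of the height-$q$ decision trees $z\mapsto[\Red^{\Con_f\oplus z}(x,\alpha)=f(x)]$ (Lemma \ref{lem:Red => C}); it then notes that $\Pr[C(U_n)=1]\le 2^{-(2^{\ell}-a)}$ while $C$ accepts suitably restricted $1/3$-biased noise with constant probability (Lemma \ref{lem:starting}), and derives the query bound by \emph{multiplicative} tools (Lemma \ref{lem:continuing}): the switching lemma (Theorem \ref{thm:switching}) collapses each $C^{\alpha}$ under a random restriction with failure probability $(7pq)^{a}$---exponentially small in $a$, which is exactly what survives the union bound over the $2^{a}$ advice strings, and is where $q<1000/\eps$ (i.e.\ $7pq\le 7/100$) is used---and the final step compares the acceptance probability of a single clause on $t\le a$ literals under $U^{1/3}$ versus $U_t$, a \emph{ratio} bound $(4/3)^{t}\ge 2^{2^{\ell}-3a-O(1)}$, which is where $a\le 2^{\ell}/10$ enters. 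Some mechanism of this kind---simultaneously beating the $2^{a}$ union bound and handling probabilities at scale $2^{-(2^{\ell}-a)}$---is what your bridge is missing; the additive coin-problem hybrid plus the counting bound cannot supply it.
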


\begin{theorem}[Lower bound in terms of $\ell$]
\label{thm:hardness ell}
If $(\Con,\Red)$ is a $\HFtorhoHF{(\half+\eps)}$ proof
with parameters $\ell,\ell',a,\rho=1,\rho'=\half+\eps$, satisfying $\eps \le \frac{1}{10}$, and $\ell' \ge \log \frac{1}{\eps^2} + \Omega(1)$ then $\Red$ must make at least $q \ge \frac{\ell-\log(2a)}{3}$ queries
\end{theorem}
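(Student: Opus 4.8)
and~\ref{thm:hardness ell}).}
The plan is to prove both lower bounds by reducing to the \emph{coin problem} against bounded-depth circuits, in the spirit of Applebaum et al.~\cite{AASY15} and of Ron-Zewi, Shaltiel and Varma~\cite{RSV21}, but replacing their coin-problem estimates by sharper ones for depth-$3$ circuits of small bottom fan-in. Assume towards a contradiction that $(\Con,\Red)$ is a $\HFtorhoHF{(\half+\eps)}$ proof whose reduction makes $q<\frac{\ell-\log(2a)}{3}$ queries, i.e.\ $2a\cdot 2^{3q}<2^{\ell}$. I would first set up the hard distribution: draw $f\from\Func_{\ell,1}$ uniformly, and obtain $D$ from $\Con_f$ by independently re-randomising each of its $2^{\ell'}$ output coordinates with probability $1-4\eps$, so that $D$ agrees with $\Con_f$ coordinatewise with probability $\half+2\eps$; by a Chernoff bound and the hypothesis $\ell'\ge\log(1/\eps^2)+\Omega(1)$ (which gives $2^{\ell'}\gg 1/\eps^2$), $D$ lies within relative Hamming distance $\half-\eps$ of $\Con_f$, hence $D$ $(\half+\eps)$-$\HFB$s $\Con_f$, with probability $1-o(1)$.

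Next I would convert the reduction guarantee into a statement about depth-$3$ circuits. By Definition~\ref{dfn:bb} with $\rho=1$, every such valid $D$ admits an advice string $\alpha\in\B^{a}$ with $\Red^{D}(\cdot,\alpha)=f$, so $\Pr_{f,D}[\exists\alpha:\Red^{D}(\cdot,\alpha)=f]\ge 1-o(1)$. For a \emph{fixed} $f$, the indicator of the event ``$\exists\alpha:\Red^{D}(\cdot,\alpha)=f$'', read as a function of $D$, is an $\mathrm{OR}$ over the $2^{a}$ advice strings of an $\mathrm{AND}$ over the $2^{\ell}$ inputs $x$ of the predicate ``$\Red^{D}(x,\alpha)=f(x)$''; the latter is a depth-$q$ decision tree in the bits of $D$, hence a width-$q$ CNF. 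This exhibits the indicator as a depth-$3$ circuit $\Gamma_f$ of bottom fan-in $q$, top fan-in $2^{a}$, and middle width $2^{O(\ell+q)}$. Relabelling the input bits by $\Con_f$, $\Gamma_f$ becomes a circuit $\Gamma'_f$ of the same shape that accepts the bias-$2\eps$ source $U^{1/2-2\eps}_{2^{\ell'}}$ with probability $\Pr_D[\exists\alpha:\Red^{D}(\cdot,\alpha)=f]$, and accepts the uniform source with probability at most $2^{a}\cdot 2^{-2^{\ell}}$ (for uniform $D$ each $\Red^{D}(\cdot,\alpha)$ is a fixed function that equals the specific $f$ with probability $2^{-2^{\ell}}$). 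Averaging over $f$ and using $a\le 2^{\ell}/10$, I would fix one $f$ for which $\Gamma'_f$ accepts $U^{1/2-2\eps}_{2^{\ell'}}$ with probability $\ge\half$ but accepts the uniform source with probability $\le 2^{-\Omega(2^{\ell})}$, an \emph{asymmetric} coin problem in which the circuit must reject the unbiased source with probability exponentially close to $1$.

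The heart of the argument is then to show that a depth-$3$ circuit of bottom fan-in $q$ and of the above size cannot solve this asymmetric coin problem unless $2a\cdot 2^{3q}\ge 2^{\ell}$. I would apply H{\aa}stad's switching lemma (Theorem~\ref{thm:switching}) twice: a random restriction with parameter $\Theta(1/q)$ collapses each of the $2^{a}$ bottom width-$q$ CNFs to a decision tree of height $O(a)$ (the union-bound failure probability being a small constant), turning $\Gamma'_f$ into a width-$O(a)$ DNF; a second restriction with parameter $\Theta(1/a)$ collapses that DNF to a decision tree of constant height. Since $\Gamma'_f$ accepts the unbiased source with probability only $2^{-\Omega(2^{\ell})}$, this forces the restricted circuit to be \emph{identically $0$} for all but a $2^{-\Omega(2^{\ell})}$ fraction of restrictions. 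One then has to run the \emph{same} restrictions in a way that is consistent with the way $U^{1/2-2\eps}_{2^{\ell'}}$ decomposes as a random restriction (it itself freezes a $\Theta(\eps)$-fraction of coordinates), and conclude that $\Gamma'_f$ is also identically $0$ on almost all such coin-compatible restrictions, contradicting that it accepts $U^{1/2-2\eps}_{2^{\ell'}}$ with probability $\ge\half$. Tracking the surviving fraction of coordinates through the two switching steps against the $2^{-\Omega(2^{\ell})}$ slack is what yields the quantitative bound $q\ge\frac{\ell-\log(2a)}{3}$. For Theorem~\ref{thm:hardness eps} the same skeleton applies but in the bias-dominated regime, where the $\Omega(1/\eps)$ bound already follows from the elementary fact that a depth-$q$ decision tree changes its acceptance probability by at most $O(q\eps)$ between the two coin distributions.

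The step I expect to be the main obstacle is the last one: establishing the sharp bound for the asymmetric coin problem against depth-$3$ circuits that remains valid even when $a$ is as large as $2^{\ell}/10$. One must organise the two applications of the switching lemma, and the couplings between the ``free'' restrictions and the restriction implicit in the coin distribution, so that the union bounds over the $2^{a}$ advice strings and the $2^{\ell}$ inputs are absorbed without the $a$-dependent loss that makes the estimates of \cite{AASY15,RSV21} degrade once $a\ge 2^{\ell/2}$. This is precisely where ``tighter bounds on depth-$3$ circuits for specific versions of the coin problem'' are needed, and it is also what lets the same method improve the local list-decoding lower bounds of \cite{RSV21} in Section~\ref{sec:codes}.
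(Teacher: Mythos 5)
Your opening move is the same as the paper's: translating the reduction into an OR over the $2^a$ advice strings of ($2^{\ell}$-wise ANDs of) width-$q$ CNFs in the noise bits, with uniform acceptance at most $2^{-(2^{\ell}-a)}$ and biased acceptance at least a constant, is exactly Lemma \ref{lem:Red => C} (modulo a repairable slip: for a \emph{fixed} $f$ it is not true that $\Pr_{D \from U}[\Red^{D}(\cdot,\alpha)=f] \le 2^{-2^{\ell}}$ --- the reduction could ignore its oracle and output precisely $f$; the bound holds only on average over $f$, and the paper indeed randomizes $f$ before fixing a good one). The genuine gap is the core step that is supposed to yield $q \ge \frac{\ell-\log(2a)}{3}$: you defer it to an unproved ``asymmetric coin problem'' bound for depth-3 circuits and propose to obtain it by two applications of H{\aa}stad's switching lemma (Theorem \ref{thm:switching}) with parameters $\Theta(1/q)$ and $\Theta(1/a)$, coupled with the biased source. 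This mechanism does not go through. First, the switching-lemma restrictions assign the fixed coordinates \emph{uniformly}, whereas the biased source supplies only a single restriction layer whose free fraction is governed by the bias, not by $1/q$ or $1/a$; a uniformly-valued restriction of a $1-\Theta(1/(qa))$ fraction of the $n=2^{\ell'}$ coordinates is atypical for the biased distribution except with probability $2^{-\Omega(n\cdot\eps^2)}$, so ``running the same restrictions consistently'' with the biased source is not available. If instead you draw the fixed values from the biased distribution, the uniform-side bound $2^{-(2^{\ell}-a)}$ can deteriorate by a likelihood-ratio factor exponential in the number of fixed coordinates, i.e.\ $2^{\Omega(n)}$, which is fatal because $n$ may vastly exceed $2^{\ell}$. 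Second, even granting a coupling, the union bound over the $2^a$ bottom CNFs forces decision-tree height at least $a$ in the first switching step, and the resulting parameter accounting ($p_1 q=O(1)$, $p_2 a=O(1)$, constant final height, constant Markov losses) never produces a trade-off in which $q$ sits in an exponent measured against $2^{\ell}$; there is no visible route from these ingredients to $2a \cdot 2^{3q} \ge 2^{\ell}$.

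The paper's proof of this theorem avoids random restrictions altogether. After Lemma \ref{lem:Red => C}, it pays for the top OR only once, by averaging (Claim \ref{clm:alpha distinguishes}): some single $q$-CNF $C^{\alpha}$ has uniform acceptance at most $2^{-(2^{\ell}-a)}$ yet $U^{1/3}$-acceptance at least $2^{-(a+1)}$. It then eliminates this CNF by $q$ rounds of a \emph{deterministic, adaptive} restriction (Lemma \ref{lem:reduce}): a $v$-CNF with noticeable $U^{1/3}$-acceptance cannot contain many variable-disjoint clauses, hence admits a cover of only $w \cdot 4^{v}$ variables meeting every clause; fixing just those variables lowers the width by one, costs at most a factor $2^{w\cdot 4^{v}}$ on the uniform side, and can be done without decreasing the biased acceptance. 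After $q$ rounds the circuit is constant; positivity of the biased acceptance forces it to be the constant one, and the accumulated uniform-side cost then gives $q\cdot(a+1)\cdot 4^{q} \ge 2^{\ell}-a$, i.e.\ $q \ge \frac{\ell-\log(2a)}{3}$. The point your plan misses is that only $O(q\cdot(a+1)\cdot 4^{q})$ coordinates are ever fixed --- negligible next to both $n$ and $2^{\ell}$ when $q$ is small --- which is what lets the uniform and biased guarantees be tracked simultaneously; any scheme fixing a constant fraction of the coordinates (as random restrictions do) cannot. The switching lemma appears in the paper only in the companion bound $q=\Omega(1/\eps)$ (Lemma \ref{lem:continuing}), where a single restriction layer with parameter tied to $\eps$ is precisely what the biased source provides.
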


A quantitatively better lower bound of $q=\Omega(\frac{\ell}{\eps^2})$ was proven by Grinberg, Shaltiel and Viola \cite{GSV18} for the case that $a \le 2^{\nu \cdot \ell}$ for some constant $\nu>0$. Theorems \ref{thm:hardness eps} and Theorem \ref{thm:hardness ell} achieve a smaller bound on $q$, but apply in the extreme high-end (where $a=2^{(1-o(1)) \cdot \ell}$) all the way up to $a = \frac{2^{\ell}}{10}$. This is especially significant in the case of Theorem \ref{thm:hardness eps} which (as we explained in detail in Section \ref{sec:intro:results:HF}) can be used to show limitations on the PRG composition of Chen and Tell \cite{CT1}.

\paragraph{Roadmap for this section.}
Both Theorem \ref{thm:hardness eps} and Theorem \ref{thm:hardness ell} will be proven by first connecting a black-box reduction to a depth 3 circuit for a version of the ``coin problem''. This connection is stated and proven in Section \ref{sec:reduction to circuit}.
The proofs of Theorem \ref{thm:hardness eps} and Theorem \ref{thm:hardness ell} show that such a depth 3 circuit must have large $q$. These proofs are given in Section \ref{sec:prf:thm:hardness eps} and Section \ref{sec:prf:thm:hardness ell}.

\subsection{Reductions as depth 3 circuits}
\label{sec:reduction to circuit}

The proof of Theorem \ref{thm:hardness eps} and Theorem \ref{thm:hardness ell} rely on the following lemma, which is inspired by an argument of Applebaum et al. \cite{AASY15} (see also \cite{RSV21}) and relates black-box reductions to constant depth circuits.
\begin{lemma}[Reduction to circuit]
\label{lem:Red => C}
Let $(\Con,\Red)$ be a black-box $\HFtorhoHF{(\half+\eps)}$ proof with parameters
$\ell,\ell',a,\rho=1,\rho'=\half+\eps$, such that $\Red$ makes $q$ queries.
For every $\alpha \in \B^a$, there exists a $q$-CNF $C^{\alpha}$ with $2^{q+\ell}$ clauses, over $n=2^{\ell'}$ variables, such that for $C=\bigvee_{\alpha \in \B^a} C^{\alpha}$, we have that:
\begin{itemize}[noitemsep]
\item $\Pr[C(U_n)] \le 2^{-(2^{\ell}-a)}$.
\item For every $x \in \B^n$ such that $\wt(x) \le (\half-\eps) \cdot n$, $C(x)=1$.
\end{itemize}
\end{lemma}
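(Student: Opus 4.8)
I would read the $n=2^{\ell'}$ input bits not as a raw oracle but as the \emph{disagreement pattern} between a candidate oracle $D$ and $\Con_f$ for a cleverly chosen $f\in\Func_{\ell,1}$: set $D:=x\oplus\Con_f$. This is what lets the low--Hamming-weight condition on $x$ encode exactly ``$D$ is $(\half+\eps)$-close to $\Con_f$'', and it is what turns the reduction's behaviour into a bounded-depth decision tree in the variables $x$. Concretely: $D$ agrees with $\Con_f$ on a $1-\wt(x)/n$ fraction of inputs, so $D$ $(\half+\eps)$-$\HFB$ $\Con_f$ iff $\wt(x)\le(\half-\eps)\cdot n$; and since each bit $(\Con_f)_i$ is a fixed constant once $f$ is fixed, a query of $\Red$ to the $i$-th bit of $D$ is just a read of the literal $x_i$ or $\neg x_i$. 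Hence, for each fixed $\alpha\in\B^a$ and $y\in\B^\ell$, the map $x\mapsto\Red^{x\oplus\Con_f}(y,\alpha)$ is computed by a decision tree of depth $\le q$ over $x_1,\dots,x_n$, so the predicate ``$\Red^{x\oplus\Con_f}(y,\alpha)=f(y)$'' is a width-$q$ CNF in $x$ with at most $2^q$ clauses (one clause per ``wrong'' leaf of the tree, listing the negated path literals). Conjoining over all $y\in\B^\ell$ defines, for each $\alpha$, a $q$-CNF $C^\alpha_f$ with at most $2^{q+\ell}$ clauses, and by construction $C^\alpha_f(x)=1$ iff $\Red^{x\oplus\Con_f}(\cdot,\alpha)=f$ as functions.

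\textbf{Covering all low-weight strings.} If $\wt(x)\le(\half-\eps)\cdot n$ then, as noted, $D:=x\oplus\Con_f$ satisfies $D$ $(\half+\eps)$-$\HFB$ $\Con_f$, so Definition~\ref{dfn:bb} (with $\rho=1$) hands us an advice string $\alpha$ with $\Red^D(\cdot,\alpha)=f$, i.e.\ $C^\alpha_f(x)=1$; hence $C_f:=\bigvee_{\alpha\in\B^a}C^\alpha_f$ accepts $x$. The key point for the next step is that this argument works for \emph{every} $f$, so whichever $f$ we end up selecting, the ``accepts all low-weight strings'' property is automatically preserved.

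\textbf{Bounding the acceptance probability via averaging over $f$.} By a union bound over $\alpha$ together with the fact that $x\mapsto x\oplus\Con_f$ is a bijection of $\B^n$,
\[ \Pr_{x\from\B^n}[C_f(x)=1]\ \le\ \sum_{\alpha\in\B^a}\Pr_{x\from\B^n}\!\big[\Red^{x\oplus\Con_f}(\cdot,\alpha)=f\big]\ =\ \sum_{\alpha\in\B^a}\Pr_{D\from\Func_{\ell',1}}\!\big[\Red^D(\cdot,\alpha)=f\big]. \]
Now take the expectation over $f\from\Func_{\ell,1}$: for each fixed $D$ and $\alpha$ the function $\Red^D(\cdot,\alpha)$ is one particular element of $\Func_{\ell,1}$, so $\Pr_f[\Red^D(\cdot,\alpha)=f]=2^{-2^\ell}$, and therefore $\Exp_{f}\big[\Pr_x[C_f(x)=1]\big]\le 2^a\cdot 2^{-2^\ell}=2^{-(2^\ell-a)}$. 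Fix an $f^\ast$ achieving at most this average, and set $C^\alpha:=C^\alpha_{f^\ast}$ and $C:=\bigvee_\alpha C^\alpha$. Then $C^\alpha$ is a $q$-CNF with at most $2^{q+\ell}$ clauses (pad with tautological clauses if one wants exactly $2^{q+\ell}$), $\Pr[C(U_n)=1]\le 2^{-(2^\ell-a)}$ by the choice of $f^\ast$, and $C(x)=1$ whenever $\wt(x)\le(\half-\eps)\cdot n$ by the previous paragraph applied to $f^\ast$.

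\textbf{Where the work is.} There is no deep obstacle; the care is entirely in two bookkeeping points and one observation. The bookkeeping: a depth-$q$ decision tree is simultaneously a width-$q$ CNF (take the OR of negated path-literals over each rejecting leaf), and precomposing that tree with the fixed offset $\Con_{f^\ast}$ neither increases its depth nor breaks the CNF form (it only flips some literals). The observation: property (2) holds for \emph{every} $f$, which is exactly what makes the averaging over $f$ — needed only to get the tiny acceptance probability in property (1) — cost nothing. The one place to be slightly attentive is matching the clause count: $2^q$ clauses per input $y\in\B^\ell$, hence $2^{q+\ell}$ overall, with any slack absorbed by padding.
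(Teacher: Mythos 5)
Your proposal is correct and follows essentially the same route as the paper's proof: the XOR trick $D=x\oplus\Con_f$ turning the $q$-query reduction into depth-$q$ decision trees (hence width-$q$ CNFs, conjoined over all $2^{\ell}$ inputs), the black-box guarantee handling all low-weight $x$ for every $f$, and an averaging over $f$ to get the $2^{-(2^{\ell}-a)}$ acceptance bound. The only difference is cosmetic: you average over $f$ after passing to a uniform oracle $D$, while the paper bounds the joint probability over $(f,z)$ via a maximizing fixed $z^*$ before averaging, which yields the same estimate.
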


\begin{proof}
We will view a string $z$ of length $n=2^{\ell'}$, as a function $z \in \Func_{\ell',1}$ and vice-versa, by $z(y)=z_y$.
The proof of \cite{AASY15} (and the proof presented here) makes use of an idea originating in \cite{ViolaThesis,SV08} (and credited to Madhu Sudan) that the reduction must succeed if given oracle access to $\Con_f \oplus z$ for a string $z$ with $\wt(z) \le n \cdot (\half-\eps)$ (as such an oracle $(\half+\eps)$-$\HFB$ $\Con_f$), but cannot succeed when given oracle access to $\Con_f \oplus z$ for $z \from U_n$ (as $z$ ``wipes out'' the information in $\Con_f$). This will translate into the two conditions in the lemma. Details follow:

For every $f \in \Func_{\ell,1}$, $\alpha \in \B^a$ and $x \in \B^{\ell}$, we define $C_{f,\alpha,x}:\B^n \to \B$ as follows:
\[ C_{f,\alpha,x}(z)=1 \mbox{ iff $\Red^{\Con_f \oplus z}(x,\alpha)=f(x)$.} \]
We have that $\Red$ makes at most $q$ queries, and this implies that for every $f$ and $\alpha$, the function $C_{f,\alpha}$ can be computed by a decision tree with height $q$. Therefore, it can be computed by a $q$-CNF with $2^q$ clauses.
For every $f \in \Func_{\ell,1}$, and $\alpha \in \B^a$ we define $C_{f,\alpha}:\B^n \to \B$ as follows:
\[ C_{f,\alpha}(z)=1 \mbox{ iff $\forall x \in \B^{\ell}$, $\Red^{\Con_f \oplus z}(x,\alpha)=f(x)$.} \]
By definition for every $f$ and $\alpha$, $C_{f,\alpha}(x)=\bigwedge_{x \in \B^{\ell}}C_{f,\alpha,x}$. This means that $C_{f,\alpha}$ is an AND of $2^{\ell}$ $q$-CNFs with $2^q$ clauses, and overall, it can be computed by a $q$-CNF with $2^{q + \ell}$ clauses.
For every $f \in \Func_{\ell,1}$, we define $C_{f}:\B^n \to \B$ as follows:
\[ C_{f,\alpha}(z)=1 \mbox{ iff $\exists \alpha \in \B^a$, s.t. $\forall x \in \B^{\ell}$, $\Red^{\Con_f \oplus z}(x,\alpha)=f(x)$.} \]
This gives that for every $f \in \Func_{\ell,1}$, $C_f$ is an OR of $2^{a}$ $q$-CNFs with $2^{q+\ell}$ clauses.

Furthermore, by the definition of black-box $\HFtorhoHF{(\half+\eps)}$ proof, for every $f \in \Func_{\ell,1}$, if $\wt(z) \le (\half-\eps) \cdot n$, then the function $\Con_f \oplus z$, $\half+\eps$-$\HFB$ breaks $\Con_f$. This in turn implies (by Definition \ref{dfn:bb}) that there exists $\alpha \in \B^a$ such that for every $x \in \B^{\ell}$, $\Red^{\Con_f \oplus z}(x)=f(x)$, meaning that $C_f(z)=1$. This means that for every choice of $f \in \Func_{\ell,1}$, $C_f$ satisfies the second item.

We will now show that there exists an $f \in \Func_{\ell,1}$ such that $C_f$ satisfies the first item.
For a uniformly chosen $z \from U_n$, we have that for every $f \in \Func_{\ell,1}$, $\Con_f \oplus z$ is distributed uniformly over $\B^n$, and contains no information about $f$. It follows that:
\begin{align*}
\Pr_{f \from \Func_{\ell,1},z \from U_n}[C_f(z)=1] &= \Pr_{F \from \Func_{\ell,1},z \from U_n}[\mbox{$\exists \alpha \in \B^a$, s.t. $\forall x \in \B^{\ell}$, $\Red^{\Con_f \oplus z}(x,\alpha)=f(x)$}] \\
&= \Pr_{f \from \Func_{\ell,1},z \from U_n}[\mbox{$\exists \alpha \in \B^a$, s.t.  $\forall x \in \B^{\ell}$, $\Red^{z}(x,\alpha)=f(x)$} ]  \\
&\le Pr_{f \from \Func_{\ell,1}}[\mbox{$\exists \alpha \in \B^a$, s.t.  $\forall x \in \B^{\ell}$, $\Red^{z^*}(x,\alpha)=f(x)$} ],
\end{align*}
for some $z^* \in \B^n$ which maximizes the success probability. We can therefore continue and obtain that:
\begin{align*}
\Pr_{f \from \Func_{\ell,1},z \from U_n}[C_f(z)=1] &\le \Pr_{F \from \Func_{\ell,1}}[\mbox{$\exists \alpha \in \B^a$, s.t.  $\forall x \in \B^{\ell}$, $\Red^{z^*}(x,\alpha)=f(x)$}, ] \\
&\le \sum_{\alpha \in \B^a}\Pr_{f \from \Func_{\ell,1}}[\mbox{$\forall x \in \B^{\ell}$, $\Red^{z^*}(x,\alpha)=f(x)$} ]  \\
&\le 2^a \cdot \frac{1}{2^{2^{\ell}}},
\end{align*}
where the penultimate inequality follows by a union bound, and the last inequality follows because for every $x \in \B^{\ell}$,
\[ \Pr_{f \from \Func_{\ell,1}}[\Red^{z^*}(x,\alpha)=f(x) ] = \half, \]
and these events are independent for the $2^{\ell}$ choices of $x \in \B^{\ell}$.
Finally, by averaging, we conclude that there exists $f \in \Func_{\ell,1}$ such that:
\[ \Pr_{z \from U_n}[C_f(z)=1] \le 2^{-(2^{\ell}-a)}. \]
The final function $C$ will be this function $C_f$ which indeed satisfies the properties in the conclusion of the lemma.
\end{proof}

\subsection{Proof of Theorem \ref{thm:hardness eps}}
\label{sec:prf:thm:hardness eps}

In this section we prove Theorem \ref{thm:hardness eps}. We split the proof into two parts, specified in Lemma \ref{lem:starting} and Lemma \ref{lem:continuing} below. This splitting will allow us to use Lemma \ref{lem:continuing} in the application to lower bounds on local list-decoding in Section \ref{sec:codes}.

\begin{lemma}
\label{lem:starting}
Assume the conditions of Theorem \ref{thm:hardness eps}. For every $\alpha \in \B^a$, there exists a $q$-CNF $C^{\alpha}$ over $n=2^{\ell'}$ variables, such that for $C=\bigvee_{\alpha \in \B^a} C^{\alpha}$, and $p=\frac{\eps}{10}$:
\begin{itemize}[noitemsep]
\item $\Pr[C(U_n)=1] \le 2^{-(2^{\ell}-a)}$.
\item $\Pr_{\rho \from \Rest^n_p}[\Pr_{x \from U^{1/3}_{n \cdot p}}[C(\Fill_{\rho}(x))=1] \ge 0.99] \ge 0.99$.
\end{itemize}
\end{lemma}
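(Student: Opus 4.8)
The first bullet is immediate. Apply Lemma~\ref{lem:Red => C} to the given black-box $\HFtorhoHF{(\half+\eps)}$ proof $(\Con,\Red)$ with the parameters of Theorem~\ref{thm:hardness eps}: this produces, for each $\alpha\in\B^a$, a $q$-CNF $C^\alpha$ over $n=2^{\ell'}$ variables (with $2^{q+\ell}$ clauses) such that $C=\bigvee_{\alpha\in\B^a}C^\alpha$ satisfies $\Pr[C(U_n)=1]\le 2^{-(2^\ell-a)}$ -- which is exactly the first conclusion we need -- and, in addition, $C(y)=1$ for every $y\in\B^n$ with $\wt(y)\le(\half-\eps)n$. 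We keep this same circuit $C$ and carry forward only these two facts.

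For the second bullet the plan is to reduce, via two applications of Markov's inequality (first over the choice of $\rho\from\Rest^n_p$, then over the choice of $x\from U^{1/3}_{pn}$, each time losing only a constant factor which we absorb into the constants $0.99$), to the single statement that
\[ \Pr_{\rho\from\Rest^n_p,\; x\from U^{1/3}_{pn}}\bigl[\,C(\Fill_\rho(x))=1\,\bigr]\;\ge\;1-10^{-3}, \]
with $p=\tfrac{\eps}{10}$. So it suffices to isolate a ``good event'' of probability at least $1-10^{-3}$ over the joint choice of $(\rho,x)$ on which $C(\Fill_\rho(x))=1$. The natural route is to track the oracle $\Con_f\oplus\Fill_\rho(x)$ that the reduction is implicitly run against: a uniformly random $\rho\in\Rest^n_p$ leaves only a $p=\eps/10$ fraction of coordinates free, so the $(1-p)$-fraction of fixed coordinates already exceeds $\half+\eps$, and the $\tfrac13$-biased fill on the free block contributes only an $O(p)$-fraction of extra ``bad'' coordinates. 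One then wants to argue that, with probability $1-10^{-3}$ over $(\rho,x)$, the oracle the reduction queries is indistinguishable (on the coordinates it actually reads) from an oracle that $(\half+\eps)$-$\HFB$ $\Con_f$, so that the definition of a black-box proof hands us an $\alpha$ with $\Red^{\,\cdot}(\cdot,\alpha)=f$, i.e.\ $C(\Fill_\rho(x))=1$; combined with the first bullet this closes the lemma.

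The step I expect to be the main obstacle is exactly this last point, and it is more delicate than the crude fraction-counting above: a uniformly random $\rho\in\Rest^n_p$ assigns \emph{uniform} bits to its $(1-p)n$ fixed coordinates, so $\Fill_\rho(x)$ is in fact a near-uniform $n$-bit string of weight $\approx(\half-\tfrac{\eps}{60})n$, which does \emph{not} lie in the Hamming ball certified by Lemma~\ref{lem:Red => C}. Hence the good event cannot simply be ``$\Fill_\rho(x)$ is light.'' Making it work requires exploiting that $\Red$ makes only $q$ oracle queries -- so that only few coordinates of the oracle are ever inspected -- which is precisely the quantity the theorem is bounding and the same structural feature that powers the switching-lemma argument (Theorem~\ref{thm:switching}) used downstream. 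Concretely, the fallback I would pursue if the direct count does not close is to first pass $C$ through a random restriction à la Theorem~\ref{thm:switching}, collapsing the $\mathrm{OR}$-of-$q$-CNFs to a low-depth object whose behaviour on the biased free block $U^{1/3}_{pn}$ can be read off, and only then invoke the ``accepts light strings'' guarantee on the surviving free coordinates; this is also the form in which Lemma~\ref{lem:continuing} will want to consume the output.
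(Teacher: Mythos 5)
Your first bullet is exactly the paper's step and is fine. For the second bullet there is a genuine gap: you never establish it, and the fallback you sketch is not a viable route. The paper's proof is precisely the ``crude fraction-counting'' you set aside, and it uses nothing about $C$ beyond the two facts from Lemma~\ref{lem:Red => C} (in particular, no appeal to the query bound $q$ is needed here; the switching lemma, Theorem~\ref{thm:switching}, enters only in Lemma~\ref{lem:continuing}): by a Chernoff bound, with probability at least $0.99$ over $\rho \from \Rest^n_p$ the coordinates fixed to $1$ number at most $(\half-0.49p)n$, and with probability at least $0.99$ over $x \from U^{1/3}_{n\cdot p}$ one has $\wt(x)\le 0.34\,pn$; on the intersection, $\wt(\Fill_\rho(x))\le(\half-0.15p)n$, so the ``accepts every string of weight at most $(\half-\eps)n$'' guarantee applies and $C(\Fill_\rho(x))=1$. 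Your numerical objection is correct as far as it goes: with $p=\eps/10$ the typical weight is about $(\half-\eps/60)n$, and the paper's inequality $(\half-0.15p)n\le(\half-\eps)n$ ``by $p\le\eps/10$'' indeed goes the wrong way. But the right response is to recalibrate the constant, not to change the argument: the counting closes whenever $p$ is a sufficiently large constant multiple of $\eps$ (e.g.\ $p=10\eps$, so that $0.15p\ge\eps$), which is evidently the intended setting, and it costs only constant-factor adjustments downstream (Lemma~\ref{lem:continuing} merely needs $7pq<1/10$, so the final bound remains $q=\Omega(1/\eps)$). In short, you spotted a miscalibrated constant in the statement but then abandoned the correct (and simple) proof.

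The fallback itself would not work. The plan to first hit $C$ with a random restriction and then ``invoke the accepts-light-strings guarantee on the surviving free coordinates'' has no content: the guarantee of Lemma~\ref{lem:Red => C} concerns full $n$-bit inputs of low Hamming weight, and after fixing $(1-p)n$ uniform bits the restricted circuit inherits no analogous guarantee on the free block --- that is exactly the difficulty you identified, and restricting does not remove it, since only the global weight of $\Fill_\rho(x)$ can trigger the guarantee. Moreover, the switching-lemma collapse of the OR of $q$-CNFs is the content of Lemma~\ref{lem:continuing}, which \emph{consumes} (a weak form of) the biased-acceptance statement you are trying to prove; invoking it here is structurally circular and still leaves that statement unproved. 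Also, your intended appeal to the definition of a black-box proof (``the oracle is indistinguishable on the queried coordinates from one that $(\half+\eps)$-$\HFB$ $\Con_f$'') is not available: the definition only promises an advice string for oracles that genuinely agree with $\Con_f$ on a $\half+\eps$ fraction of all coordinates, not for oracles that merely look good on the few coordinates read.
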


The next lemma shows that under (a weak form) of the conclusion of Lemma \ref{lem:starting}, if $a$ is slightly smaller than $2^{\ell}$ then $q=\Omega(\frac{1}{\eps})$.

\begin{lemma}
\label{lem:continuing}
Let $C$ be a circuit such that $C=\bigvee_{\alpha \in \B^a} C^{\alpha}$, where for each $\alpha \in \B^a$, $C^{\alpha}$ is a $q$-CNF over $n$ variables. Let $p=\frac{\eps}{10}$ and assume that:
\begin{itemize}[noitemsep]
\item $\Pr[C(U_n)=1] \le 2^{-(2^{\ell}-a)}$.
\item $\Pr_{\rho \from \Rest^n_p}\left[\Pr_{x \from U^{1/3}_{n \cdot p}}[C(\Fill_{\rho}(x))=1] \ge 0.01\right] \ge 0.01$.
\item $a \le \frac{2^{\ell}}{10}$.
\end{itemize}
Then
$q \ge \frac{1000}{\eps}$.
\end{lemma}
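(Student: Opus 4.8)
The plan is to assume $q<1000/\eps$ for contradiction, apply Håstad's switching lemma (Theorem~\ref{thm:switching}) to each $q$-CNF $C^{\alpha}$, and show that a suitable random restriction collapses the whole circuit $C$ to the constant $0$, which is incompatible with the second hypothesis.

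\medskip
\noindent\emph{Step 1: reformulate the second hypothesis as a uniform-fill random restriction.} If $\rho\from\Rest^n_p$ and $x\from U^{1/3}_{np}$, then each coordinate of $\Fill_{\rho}(x)$ is $1$ with probability $(1-p)\cdot\tfrac12+p\cdot\tfrac13=\tfrac12-\tfrac p6$. More usefully, writing a $1/3$-biased bit as ``$0$ with probability $1/3$, uniform with probability $2/3$'', the distribution $\Fill_{\rho}(U^{1/3}_{np})$ is exactly $\Fill_{\sigma}(\bar 0)$ for $\sigma\from\Rest^{n}_{p/3}$, a genuine uniform-fill random restriction whose $pn/3$ free coordinates are afterwards set to $0$. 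Hence the second hypothesis becomes: for a $\ge 0.0001$ fraction of $\sigma\from\Rest^n_{p^{*}}$ with $p^{*}=\Theta(\eps)$, the restricted circuit $C_{\sigma}$ is not identically $0$ (if $C_\sigma\equiv 0$ as a function of its free variables it is $0$ at every point, in particular at $\bar 0$). This reformulation is the point of the $1/3$-bias and of $p=\eps/10$: it exposes a uniform-fill restriction of free fraction $\Theta(\eps)$, which is what later forces $q=\Omega(1/\eps)$.

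\medskip
\noindent\emph{Step 2: switching plus a device for the huge top fan-in.} Each $C^{\alpha}$ is a $q$-CNF, so under $\sigma\from\Rest^n_{p^{*}}$ it fails to collapse to a decision tree of height $h$ with probability at most $(7p^{*}q)^{h}$; under the assumption $q<c/\eps$ with $c$ a suitable constant this base is bounded away from $1$, so the failure probability is $2^{-\Omega(h)}$. The real obstacle is the top OR gate, whose fan-in $2^{a}$ can be as large as $2^{2^{\ell}/10}$, so a naive union bound over $\alpha$ is useless. This is where the first hypothesis enters: a height-$h$ decision tree accepting fewer than a $2^{-h}$ fraction of uniform inputs on its free variables has no accepting leaf and is therefore identically $0$. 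Since $\Exp_{\sigma}[\Pr_{U}[C^{\alpha}_{\sigma}=1]]=\Pr[C^{\alpha}(U_n)=1]\le\Pr[C(U_n)=1]\le 2^{-(2^{\ell}-a)}$, Markov's inequality gives that for all but a $2^{h-(2^{\ell}-a)}$ fraction of $\sigma$ the tree $C^{\alpha}_{\sigma}$ accepts less than a $2^{-h}$ fraction of uniform; combining with the switching bound, $C^{\alpha}_{\sigma}\equiv 0$ except for a $2^{-\Omega(h)}+2^{h-(2^{\ell}-a)}$ fraction of $\sigma$. Choosing $h$ in the window $\bigl(a+O(1),\ \tfrac{2^{\ell}-a}{2}\bigr)$, which is nonempty exactly because $a\le 2^{\ell}/10$, makes this fraction smaller than $2^{-a}$, so the union bound over the $2^{a}$ advice strings now goes through and yields $C_{\sigma}\equiv 0$ for a $\ge 1-0.0001$ fraction of $\sigma\from\Rest^n_{p^{*}}$.

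\medskip
\noindent\emph{Step 3: contradiction.} Since the (reformulated) second hypothesis first draws $\sigma\from\Rest^n_{p^{*}}$ and then fills in its free coordinates, $C_{\sigma}\equiv 0$ for a $\ge 1-0.0001$ fraction of $\sigma$ forces $\Pr[C(\Fill_{\rho}(U^{1/3}_{np}))=1]\le 0.0001$, contradicting the second hypothesis. Hence $q=\Omega(1/\eps)$. I expect the two places needing care to be: (i) the reformulation of $U^{1/3}$ as a \emph{uniform-fill} random restriction, which is what makes $\Exp_{\sigma}[\Pr_{U}[C^{\alpha}_{\sigma}=1]]$ controllable by $\Pr[C^{\alpha}(U_n)=1]$ rather than (circularly) by the biased acceptance probability; and (ii) the constant bookkeeping --- propagating the switching-lemma factor $7$, the free fraction $p^{*}=\Theta(\eps)$, and the weak-form constants $0.01$, and optimizing the conversion from the $U^{1/3}$ test distribution into a random restriction, so that the window for $h$ is nonempty and the threshold on $q$ comes out as the stated $1000/\eps$. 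This constant optimization is the main technical obstacle; the structural core of the argument is Step~2.
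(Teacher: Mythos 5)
Your argument is sound and shares the paper's skeleton (H\r{a}stad switching applied to each $C^{\alpha}$, a union bound over the $2^a$ advice strings made affordable by the doubly-exponentially small uniform acceptance probability, and the hypothesis $a \le 2^{\ell}/10$ invoked at the end), but your endgame is genuinely different. The paper keeps the biased distribution around: it finds one restriction $\rho \from \Rest^n_p$ under which every $C^{\alpha}_{\rho}$ becomes a height-$a$ decision tree while the uniform acceptance stays below $2^{-(2^{\ell}-a-10)}$ and the $U^{1/3}$-acceptance stays above $0.01$; it then rewrites $C_{\rho}$ as an $a$-DNF with $2^{2a}$ clauses, extracts a single clause distinguishing $U_t$ from $U^{1/3}_t$ with ratio at least $2^{2^{\ell}-3a-20}$, and contradicts the fact that a conjunction of $t\le a$ literals has ratio at most $(4/3)^t$. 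You instead absorb the $1/3$-biased fill into a second, finer uniform-fill restriction whose free coordinates are forced to $0$, and show the whole circuit collapses to the constant $0$ under such a restriction except with tiny probability, via the observation that a height-$h$ decision tree accepting fewer than a $2^{-h}$ fraction of uniform inputs is identically zero; so you exploit a different feature of $U^{1/3}$ (it dominates a mixture containing a constant-weight point mass at $0$) where the paper exploits the multiplicative gap between literal probabilities $1/2$ and $2/3$, and your version is somewhat more distribution-agnostic while the paper's clause-ratio step is more direct and keeps all the quantitative slack in one place. Two points to tighten: (i) the identity in your Step 1 is not exact, since the number of forced-to-zero coordinates is Binomial$(pn,1/3)$ rather than exactly $pn/3$; conditioning on that number $t$ repairs it, because given $t$ the forced set is a uniform $t$-subset with i.i.d.\ uniform values elsewhere and $t/n\le p$, so the switching bound $(7pq)^h$, the Markov step, and the choice of $h$ apply verbatim for every $t$; (ii) neither your bookkeeping nor, in fact, the paper's own proof yields the literal threshold $1000/\eps$ (with $p=\eps/10$ the switching lemma needs $7pq$ bounded below $1$, which forces $q\le c/\eps$ for a small constant $c$), so you should state the conclusion as $q\ge c/\eps$ for a suitable constant, which is all that Theorem \ref{thm:hardness eps} needs.
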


Together, Lemma \ref{lem:starting} and Lemma \ref{lem:continuing} imply Theorem \ref{thm:hardness eps}. The proofs of Lemma \ref{lem:starting} and Lemma \ref{lem:continuing} are given in Sections \ref{sec:prf:lem:starting} and \ref{sec:prf:lem:continuing}.

\subsubsection{Proof of Lemma \ref{lem:starting}}
\label{sec:prf:lem:starting}

We first apply Lemma \ref{lem:Red => C} and obtain that for every $\alpha \in \B^a$, there exists a $q$-CNF $C^{\alpha}$ over $n=2^{\ell'}$ variables, such that:
\begin{itemize}[noitemsep]
\item $\Pr[C^{\alpha}(U_n)=1] \le 2^{-(2^{\ell}-a)}$.
\item For every $x \in \B^n$ such that $\wt(x) \le (\half-\eps) \cdot n$, there exists $\alpha \in \B^a$ such that $C^{\alpha}(x)=1$.
\end{itemize}

When choosing $\rho \from \Rest^n_p$ we expect that $\frac{1}{n} \cdot |\set{i:\rho(i)=1}|=\half-\frac{p}{2}$. Therefore, by a Chernoff bound:
\[ \Pr_{\rho \from \Rest^n_p}[ \frac{1}{n} \cdot|\set{i:\rho(i)=1}| \ge \half-0.49 \cdot p] \cdot  \le 2^{-\Omega(p^2 \cdot n)},\]
which is smaller than $0.01$, by the assumption that $\ell' \ge \log \frac{1}{\eps^2} + \Omega(1)$, and $n=2^{\ell'}$.
If this event occurs, then for every $x \in \B^{n \cdot p}$,
\[\wt(\Fill_{\rho}(x)) \le (\half-0.49 \cdot p) \cdot n + \wt(x). \]
This means that if $\wt(x) \le 0.34 \cdot p \cdot n$, then
\[ \wt(\Fill_{\rho}(x)) \le n \cdot (\half- (0.49-0.34) \cdot p) \le n \cdot(\half- \eps), \]
By our assumption that $p \le \frac{\eps}{10}$. Applying a Chernoff bound, we conclude that:
\[ \Pr_{x \from U^{1/3}_{n \cdot p}}[\wt(x) \ge 0.34 \cdot p \cdot n] \le 2^{-\Omega(pn)}, \]
which is smaller than $0.01$ by the assumption that $\ell' \ge \log \frac{1}{\eps^2} + \Omega(1)$, and $n=2^{\ell'}$.

\subsubsection{Proof of Lemma \ref{lem:continuing}}
\label{sec:prf:lem:continuing}

We want to show that following a random restriction, each $C^{\alpha}$ is a low height decision tree.

\begin{claim}
If $q < \frac{1000}{\eps}$ then there exists a restriction $\rho:[n] \to \set{0,1,*}$ with $\Free(\rho)=n \cdot p$ such that:
\begin{itemize}[noitemsep]
\item For every $\alpha \in \B^a$, $C^{\alpha}_{\rho}$ is a decision tree of height $a$.
\item $\Pr[C_{\rho}(U_n) = 1] \le 2^{-(2^{\ell}-a-10)}$.
\item $\Pr[C_{\rho}(U^{1/3}_n)=1] \ge 0.01$.
\end{itemize}
\end{claim}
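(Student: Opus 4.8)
The plan is to prove the claim by the probabilistic method applied to a single random restriction $\rho\from\Rest^n_p$: I would show that each of the three required properties fails for only a small fraction of restrictions, so that a union bound produces some $\rho$ — automatically with $\Free(\rho)=np$ — satisfying all three at once. (The purpose of isolating such a $\rho$ is that $C_\rho$ then becomes a disjunction of $2^a$ shallow decision trees, whose acceptance probability under $U^{1/3}_{np}$ is tightly controlled by its acceptance probability under $U_{np}$; combined with the second and third bullets this is what yields the contradiction that finishes the proof of Lemma~\ref{lem:continuing}. Here I only need to produce $\rho$.)

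For the shallowness bullet I would apply Hastad's switching lemma (Theorem~\ref{thm:switching}) separately to each of the $2^a$ width-$q$ CNFs $C^\alpha$: for $\rho\from\Rest^n_p$, the probability that $C^\alpha_\rho$ is not computed by a decision tree of height $a$ is at most $(7pq)^a$. The upper bound on $q$ in the hypothesis, together with $p=\eps/10$, is exactly what makes the switching parameter $7pq$ a small constant, so a union bound over the $2^a$ choices of $\alpha$ shows that with probability at least $1-2^a(7pq)^a=1-(14pq)^a>0.992$ every $C^\alpha_\rho$ is a height-$a$ decision tree. (In the worst case $a=1$ this forces $14pq<0.008$, so the hypothesis really must make $q$ a sufficiently small constant times $1/\eps$; alternatively, switching to trees of height $a+O(1)$ needs only $7pq\le\tfrac12$ and is harmless downstream — I state it with height $a$.)

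For the ``few uniform acceptances'' bullet, the key observation is that a uniformly chosen $\rho\in\Rest^n_p$ carries uniformly random bits on its fixed coordinates, so $\Fill_\rho(U_{np})$ has the same distribution as $U_n$. Hence $\Exp_{\rho\from\Rest^n_p}\bigl[\Pr_{U_{np}}[C_\rho=1]\bigr]=\Pr[C(U_n)=1]\le 2^{-(2^\ell-a)}$, and Markov's inequality gives $\Pr_\rho\bigl[\Pr_{U_{np}}[C_\rho=1]>2^{10}\cdot 2^{-(2^\ell-a)}\bigr]<2^{-10}$; this is precisely the failure event for the second bullet, so it occurs for fewer than a $2^{-10}$ fraction of restrictions. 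For the ``many biased acceptances'' bullet I would simply quote the hypothesis of Lemma~\ref{lem:continuing}: the event $\Pr_{x\from U^{1/3}_{np}}[C(\Fill_\rho(x))=1]\ge 0.01$ — which is exactly the third bullet — has probability at least $0.01$ over $\rho\from\Rest^n_p$.

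Finally I would combine these by a union bound over the three failure events: the first has probability $<0.008$, the second $<2^{-10}$, and the third $\le 0.99$, and $0.008+2^{-10}+0.99<1$; therefore some $\rho$ with $\Free(\rho)=np$ satisfies all three bullets, which is the claim. The one delicate point I expect is the switching step: the three estimates combine only because the third property survives on a mere $0.01$-fraction of restrictions, so the switching-lemma union bound has to be pushed genuinely below (about) $0.01$ rather than merely below $1$; this is exactly what pins $q$ down to a small constant times $1/\eps$ and is the place where the hypothesis on $q$ is really used. The remaining steps are routine Markov and union-bound computations.
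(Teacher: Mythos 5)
Your proposal is correct and follows essentially the same route as the paper: a union bound over a single random $\rho\from\Rest^n_p$, with Hastad's switching lemma (Theorem~\ref{thm:switching}) plus a union bound over the $2^a$ CNFs for the first bullet, Markov's inequality (using that $\Fill_\rho(U_{np})\sim U_n$) for the second, and the $0.01$-probability hypothesis of Lemma~\ref{lem:continuing} for the third, exactly as in the paper. The quantitative caveat you flag — that $7pq$ must genuinely be a small constant, which pins $q$ to a small constant times $1/\eps$ — is a real constants issue, but it is already present in the paper's own write-up (which asserts $(7pq)^a\le(7/100)^a$ under the stated hypothesis on $q$), so your reading matches the intended argument.
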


\begin{proof}
We will show that when choosing $\rho \from \Rest^n_p$ there is a positive probability to obtain a $\rho$ that satisfies all three items.
This will follow by a union bound in which we show that the probability that each of the items does not hold is small.

For the first item, we note that by Theorem \ref{thm:switching} for every $\alpha \in \B^a$, the probability over $\rho \from \Rest^n_p$ that $C^{\alpha}_{\rho}$ does not have a decision tree of height $a$, is at most $(7 \cdot p \cdot q)^a \le (7/100)^a$.
By a union bound over the $2^a$ choices of $\alpha \in \B^a$, the probability over $\rho \from \Rest^n_p$ that there exists $\alpha \in \B^a$ such that $C^{\alpha}_{\rho}$ does not have a decision tree of height $a$, is at most $0.001$.

For the second item, we recall that by Lemma \ref{lem:starting} we have that:
$\Pr[C(U_n)=1] \le 2^{-(2^{\ell}-a)}$.
Therefore, by Markov's in equality we conclude that with probability at least $0.999$ over $\rho \from \Rest^n_p$:
\[ \Pr[C_{\rho}(U_n) = 1] \le 1000\cdot 2^{-(2^{\ell}-a)} \le 2^{-(2^{\ell}-a-10)}.\]

\noindent
Finally, by the second item in Lemma \ref{lem:starting} we have that with probability at least $0.01$ over $\rho \from \Rest^n_p$:
\[ \Pr[C_{\rho}(U^{1/3}_n)=1] \ge 0.01 \]
Overall, by a union bound with probability at least $1-(0.99+0.001 + 0.001) >0$ over $\rho \from \Rest^n_p$, we obtain a $\rho$ that satisfies all three conditions.
\end{proof}

Let $\rho$ be the restriction guaranteed by the claim. We can view $C_{\rho}$ as $C_{\rho}=\bigvee_{\alpha \in \B^a} C^{\alpha}_{\rho}$. This means that $C_{\rho}$ is an OR of $2^a$ decision trees of height $a$. Each such decision tree can be replaced by an $a$-DNF with $2^a$ clauses, and overall, we get that $C_{\rho}$ is an $a$-DNF with $2^{2a}$ clauses. This gives that:

\begin{claim}
\label{clm:DNF distinguishes}
If $q < \frac{1000}{\eps}$ then there exists an $a$-DNF $C'$ with $p \cdot n$ variables and $2^{2a}$ clauses such that:
\begin{itemize}[noitemsep]
\item $\Pr[C'(U_{p \cdot n}) = 1] \le 2^{-(2^{\ell}-a-10)}$.
\item $\Pr[C'(U^{1/3}_{p \cdot n})=1] \ge 0.01$.
\end{itemize}
\end{claim}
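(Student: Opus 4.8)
The plan is essentially bookkeeping: Claim~\ref{clm:DNF distinguishes} just repackages the restriction $\rho$ produced by the previous claim into DNF normal form. First I would invoke that claim, which (under the hypothesis $q<\frac{1000}{\eps}$) hands us a restriction $\rho:[n]\to\set{0,1,*}$ with $\Free(\rho)=p\cdot n$ such that: every $C^\alpha_\rho$ has a decision tree of height $a$; $\Pr[C_\rho(U_{p\cdot n})=1]\le 2^{-(2^\ell-a-10)}$; and $\Pr[C_\rho(U^{1/3}_{p\cdot n})=1]\ge 0.01$. I would then set $C':=C_\rho$, a Boolean function on $p\cdot n$ variables, and check that it has the claimed syntactic form.

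The only substantive step is the conversion to DNF form. A decision tree of height $a$ can be written as a DNF whose terms are exactly the root-to-leaf paths ending in a $1$-leaf: each such term is a conjunction of at most $a$ literals, and there are at most $2^a$ leaves, so each $C^\alpha_\rho$ is an $a$-DNF with at most $2^a$ clauses. Since restriction commutes with the top OR gate, $C_\rho=\bigvee_{\alpha\in\B^a}C^\alpha_\rho$ is an OR of $2^a$ many $a$-DNFs, each with at most $2^a$ clauses; collapsing the two layers of OR yields an $a$-DNF with at most $2^a\cdot 2^a=2^{2a}$ clauses. This gives the stated width and clause-count of $C'$.

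The two probability bounds are then immediate: the first bullet of the claim is exactly the second property of the previous claim with $C'=C_\rho$, and the second bullet is exactly the third property. I do not anticipate any obstacle here: all the technical work — the application of Hastad's switching lemma (Theorem~\ref{thm:switching}) to each $C^\alpha$, the union bound over the $2^a$ choices of $\alpha$, and the Markov-type arguments controlling $\Pr[C_\rho(U_{p\cdot n})=1]$ and $\Pr[C_\rho(U^{1/3}_{p\cdot n})=1]$ — was already carried out in the preceding claim; this statement merely records its output in the normal form needed for the coin-problem lower bound pursued in the remainder of Section~\ref{sec:prf:lem:continuing}.
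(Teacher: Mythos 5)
Your proposal is correct and matches the paper's own argument: the paper likewise takes the restriction $\rho$ from the preceding claim, rewrites each height-$a$ decision tree $C^{\alpha}_{\rho}$ as an $a$-DNF with at most $2^a$ clauses, collapses the OR over the $2^a$ values of $\alpha$ to get an $a$-DNF with $2^{2a}$ clauses, and inherits the two probability bounds directly.
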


This means that one of the $2^{2a}$ clauses of $C'$ must satisfy the following:

\begin{claim}
\label{clm:clause distinguishes}
If $q <\frac{1000}{\eps}$ then there exists a DNF clause $\bar{C}$ over $t \le a$ literals such that:
\begin{itemize}[noitemsep]
\item $\Pr[\bar{C}(U_t) = 1] \le 2^{-(2^{\ell}-a-10)}$.
\item $\Pr[\bar{C}(U^{1/3}_t)=1] \ge \frac{0.01}{2^{2a}} \ge 2^{-(2a+10)}$.
\end{itemize}
\end{claim}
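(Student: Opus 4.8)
The plan is to deduce Claim~\ref{clm:clause distinguishes} from Claim~\ref{clm:DNF distinguishes} by a single averaging step over the terms of the DNF. Assuming $q < \frac{1000}{\eps}$, Claim~\ref{clm:DNF distinguishes} furnishes an $a$-DNF $C'$ over the $p \cdot n$ variables with at most $2^{2a}$ terms, satisfying $\Pr[C'(U_{p\cdot n})=1] \le 2^{-(2^{\ell}-a-10)}$ and $\Pr[C'(U^{1/3}_{p\cdot n})=1] \ge 0.01$. Write $C' = \bigvee_i \bar{C}_i$, where each $\bar{C}_i$ is a conjunction of at most $a$ literals. The goal is to exhibit one index $i^{*}$ for which $\bar{C} := \bar{C}_{i^{*}}$ has both required properties; we then set $t \le a$ to be the number of distinct variables appearing in $\bar{C}$.

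Since $C'(z)=1$ forces $\bar{C}_i(z)=1$ for some $i$, a union bound gives $\sum_i \Pr[\bar{C}_i(U^{1/3}_{p\cdot n})=1] \ge \Pr[C'(U^{1/3}_{p\cdot n})=1] \ge 0.01$, so some term $\bar{C}_{i^{*}}$ satisfies $\Pr[\bar{C}_{i^{*}}(U^{1/3}_{p\cdot n})=1] \ge \frac{0.01}{2^{2a}} \ge 2^{-(2a+10)}$, the last step using $0.01 \ge 2^{-10}$. Conversely, $\bar{C}_{i^{*}}(z)=1$ implies $C'(z)=1$, so $\Pr[\bar{C}_{i^{*}}(U_{p\cdot n})=1] \le \Pr[C'(U_{p\cdot n})=1] \le 2^{-(2^{\ell}-a-10)}$. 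Finally, $\bar{C}_{i^{*}}$ depends only on its $t$ variables, and under both $U_{p\cdot n}$ and $U^{1/3}_{p\cdot n}$ those coordinates are i.i.d.\ and distributed as $U_t$, resp.\ $U^{1/3}_t$; hence $\Pr[\bar{C}(U_t)=1] = \Pr[\bar{C}_{i^{*}}(U_{p\cdot n})=1]$ and $\Pr[\bar{C}(U^{1/3}_t)=1] = \Pr[\bar{C}_{i^{*}}(U^{1/3}_{p\cdot n})=1]$, which yields exactly the two displayed bounds.

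There is no real obstacle here: the only content is the elementary fact that each term of a DNF strengthens the accepting event relative to the whole DNF, plus the observation that passing to the $t$ variables of a single term leaves acceptance probabilities unchanged under product distributions; the constant bookkeeping ($0.01 \ge 2^{-10}$, at most $2^{2a}$ terms, $t \le a$) is routine. The work proper resumes afterwards, where the two bounds on this small clause — in combination with $t \le a$ and the hypothesis $a \le 2^{\ell}/10$ — are to be shown incompatible, forcing $q \ge \frac{1000}{\eps}$.
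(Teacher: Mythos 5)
Your proof is correct and matches the paper's intended argument: the paper derives this claim from Claim~\ref{clm:DNF distinguishes} in one line ("one of the $2^{2a}$ clauses of $C'$ must satisfy..."), i.e., exactly the averaging/union bound over terms plus the trivial monotonicity $\bar{C}_{i^*}\Rightarrow C'$ that you spell out. The restriction to the $t\le a$ relevant coordinates under the product distributions is the same routine bookkeeping the paper leaves implicit.
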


We are now ready to prove Lemma \ref{lem:continuing}. We assume for the purpose of contradiction that $q < \frac{1000}{\eps}$ and show that Claim \ref{clm:clause distinguishes} yields a contradiction. This is because a DNF clause over $t \le a$ literals cannot distinguish $U_t$ from $U^{1/3}_t$ with the parameters stated in Claim \ref{clm:clause distinguishes}.

More precisely, as under $U^{1/3}_t$, every literal evaluates to one with probability that is upper bounded by $2/3$, we conclude that:
\[ \Pr[\bar{C}(U^{1/3}_t)=1] \le \left(\frac{2}{3}\right)^t. \]
Under $U_t$, every literal evaluates to one with probability half. Therefore,
\[ \Pr[\bar{C}(U_t)=1] = \left(\frac{1}{2}\right)^t. \]
By Claim \ref{clm:clause distinguishes} it follows that:
\[ \left(\frac{4}{3}\right)^t = \frac{\left(\frac{2}{3}\right)^t}{\left(\frac{1}{2}\right)^t} \ge \frac{2^{-(2a+10)}}{2^{-(2^{\ell}-a-10)}} = 2^{2^{\ell}-3a-20}. \]
which gives that:
\[ a \ge t \ge \frac{2^{\ell}-3a-20}{\log \frac{4}{3}} > \frac{2^{\ell}}{10}, \]
using the assumption that $a \le \frac{2^{\ell}}{10}$, and the lemma follows.

\subsection{Proof of Theorem \ref{thm:hardness ell}}
\label{sec:prf:thm:hardness ell}

In this section we prove Theorem \ref{thm:hardness ell}.
We first apply Lemma \ref{lem:Red => C} and obtain that for every $\alpha \in \B^a$, there exists a $q$-CNF $C^{\alpha}$ with $2^{q+\ell}$ clauses, over $n=2^{\ell'}$ variables, such that for $C=\bigvee_{\alpha \in \B^a} C^{\alpha}$, we have that:
\begin{itemize}[noitemsep]
\item $\Pr[C(U_n)=1] \le 2^{-(2^{\ell}-a)}$.
\item For every $x \in \B^n$ such that $\wt(x) \le (\half-\eps) \cdot n$, there exists $\alpha \in \B^a$ such that $C^{\alpha}(x)=1$.
\end{itemize}

\noindent
As there are $2^a$ such CNFs, it follows that one of them satisfies:

\begin{claim}
\label{clm:alpha distinguishes}
There exists $\alpha \in \B^a$ such that:
\begin{itemize}[noitemsep]
\item $\Pr[C^{\alpha}(U_n)=1] \le 2^{-(2^{\ell}-a)}$.
\item $\Pr[C^{\alpha}(U^{1/3}_n)=1] \ge 2^{-(a+1)}$.
\end{itemize}
\end{claim}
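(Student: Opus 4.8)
The plan is to prove the two bullets almost independently: the first will hold for \emph{every} $\alpha \in \B^a$, and the second will single out one particular $\alpha$, which then automatically satisfies the first as well.

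\textbf{First bullet.} Since $C = \bigvee_{\alpha \in \B^a} C^\alpha$, each disjunct implies $C$ pointwise, i.e.\ $C^\alpha(x) = 1 \Rightarrow C(x) = 1$ for all $x \in \B^n$. Hence $\Pr[C^\alpha(U_n) = 1] \le \Pr[C(U_n) = 1] \le 2^{-(2^\ell - a)}$, using the first property of $C$ recorded just above (inherited from Lemma \ref{lem:Red => C}). So this bullet is free and imposes no constraint on which $\alpha$ we choose.

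\textbf{Second bullet.} I would first show that $C$ itself accepts $U^{1/3}_n$ with probability at least $\half$. Draw $x \from U^{1/3}_n$; then $\Exp[\wt(x)] = n/3$, and since $\eps \le \frac{1}{10}$ we have $\half - \eps \ge \frac{2}{5} > \frac{1}{3}$, so by a Chernoff bound $\Pr_x[\wt(x) > (\half - \eps)\cdot n] = 2^{-\Omega(n)}$, which is below $\half$ because the hypothesis $\ell' \ge \log \frac{1}{\eps^2} + \Omega(1)$ forces $n = 2^{\ell'}$ to exceed a sufficiently large absolute constant. Whenever $\wt(x) \le (\half - \eps)\cdot n$, the second property of $C$ recorded above supplies an $\alpha \in \B^a$ with $C^\alpha(x) = 1$, hence $C(x) = 1$; therefore $\Pr[C(U^{1/3}_n) = 1] \ge \half$. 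Now a union bound over the $2^a$ disjuncts gives $\half \le \Pr[C(U^{1/3}_n) = 1] \le \sum_{\alpha \in \B^a} \Pr[C^\alpha(U^{1/3}_n) = 1] \le 2^a \cdot \max_{\alpha} \Pr[C^\alpha(U^{1/3}_n) = 1]$, so some $\alpha^\star$ satisfies $\Pr[C^{\alpha^\star}(U^{1/3}_n) = 1] \ge 2^{-(a+1)}$. That same $\alpha^\star$ obeys the first bullet by the paragraph above, which proves the claim.

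I do not expect any genuine obstacle here: this is essentially a two-line averaging argument dressed up with one Chernoff estimate. The only point requiring a moment of care is that the gap between $\frac{1}{3}$ (the bias of $U^{1/3}_n$) and $\half - \eps \ge \frac{2}{5}$ is a positive constant under $\eps \le \frac{1}{10}$, and that $n$ is large enough for the Chernoff tail to drop below $\half$ — both guaranteed by the standing assumptions of Theorem \ref{thm:hardness ell}. One could equally well push the $\half$ up to $1 - o(1)$, but the weaker bound already matches the constant $2^{-(a+1)}$ claimed.
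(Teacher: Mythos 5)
Your proposal is correct and follows essentially the same route as the paper: the first bullet holds for every $\alpha$ because the top gate of $C$ is an OR, and the second comes from a Chernoff bound showing $\Pr[C(U^{1/3}_n)=1] \ge \half$ followed by averaging over the $2^a$ disjuncts. Your careful phrasing of the Chernoff step (bounding $\Pr[\wt(x) > (\half-\eps)\cdot n]$ rather than the complementary event) is in fact the intended reading of the paper's displayed inequality, which contains a small typo in the direction of the event.
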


\begin{proof}
The first item follows because the top gate of $C$ is an OR gate.
For the second item, by a Chernoff bound, and our assumption that $\eps \le \frac{1}{10}$:
\[ \Pr_{x \from U^{1/3}_n}[\wt(x) \le (\half-\eps) \cdot n] \le 2^{-\Omega(n)} \le \half,\]
meaning that:
\[ \Pr_{x \from U^{1/3}_n}[C(x)=1] \ge \half,\]
and by averaging, there exists $\alpha \in \B^a$ such that:
\[ \Pr_{x \from U^{1/3}_n}[C^{\alpha}(x)=1] \ge \frac{1}{2 \cdot 2^a}. \]

\end{proof}

Let $C_0=C^{\alpha}$ that is guaranteed by Claim \ref{clm:alpha distinguishes}.
We will start from $C_0$ and will iteratively apply the following lemma (which is proven in Section \ref{sec:prf:lem:reduce}).

\begin{lemma}
\label{lem:reduce}
For every $v$-CNF $D$ over $m$ variables, if $\Pr[D(U^{1/3}_m)=1] \ge 2^{-w}$ then
there exists a restriction $\rho:[m] \to \set{0,1,*}$ with $\Free(\rho)=m'=m-w \cdot 4^v$, such that:
\begin{itemize}[noitemsep]
\item $D_{\rho}$ is a $(v-1)$-CNF over $m'$ variables.
\item $\Pr[D_\rho(U_{m'})=1] \le \Pr[D(U_m)=1] \cdot 2^{w \cdot 4^{v}}$.
\item $\Pr[D_\rho(U^{1/3}_{m'})=1] \ge \Pr[D(U^{1/3}_{m})=1]$.
\end{itemize}
\end{lemma}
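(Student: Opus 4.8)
The plan is to prove Lemma~\ref{lem:reduce} by decoupling the two jobs the restriction must do: turning $D$ into a $(v-1)$-CNF, and keeping the acceptance probability under $U^{1/3}$ from dropping. The key point is that the first job depends only on \emph{which} variables get fixed, not on \emph{how}. So first I would put $D=\bigwedge_i \mathcal{C}_i$ in simplified form (no trivial clause, no repeated literal in a clause) and let $W$ be the set of clauses of width exactly $v$; every clause outside $W$ already has width $\le v-1$. Then I would greedily build a \emph{maximal} subfamily $\mathcal{M}\subseteq W$ of pairwise variable-disjoint clauses and let $S_0$ be the set of variables occurring in $\mathcal{M}$, so $|S_0|\le v\cdot|\mathcal{M}|$. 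By maximality every clause of $W$ shares a variable with $S_0$, so for \emph{any} restriction whose fixed set contains $S_0$, every width-$v$ clause of $D$ either becomes satisfied (and vanishes) or loses a literal; hence the restricted $D$ is a $(v-1)$-CNF (and if the restriction falsifies a clause of $\mathcal{M}$ then it is identically $0$, which we also count as a $(v-1)$-CNF).

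The second step is to bound $|S_0|$ by $w\cdot 4^v$. Here I would use that $D(U^{1/3}_m)=1$ forces every clause of $\mathcal{M}$ to be satisfied; since these clauses are variable-disjoint these events are independent, and a width-$v$ clause is satisfied under $U^{1/3}_m$ with probability at most $1-3^{-v}$ (each of its literals is false with probability at least $1/3$). Hence $2^{-w}\le\Pr[D(U^{1/3}_m)=1]\le(1-3^{-v})^{|\mathcal{M}|}$, which gives $|\mathcal{M}|\le w\cdot 3^v\cdot\ln 2$ and therefore $|S_0|\le v\cdot w\cdot 3^v\cdot\ln 2\le w\cdot 4^v$, using the elementary inequality $v\cdot 3^v\cdot\ln 2\le 4^v$, which holds for every integer $v\ge 1$. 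I would then enlarge $S_0$ to a set $T$ of exactly $w\cdot 4^v$ variables (possible since the statement is only meaningful when $m\ge w\cdot 4^v$), and restrict only the variables in $T$, leaving $m'=m-w\cdot 4^v$ variables free.

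It remains to pick the values. For $\tau\in\B^{|T|}$ let $\rho_\tau$ be the restriction with fixed set $T$ and assigned values $\tau$. Splitting $U^{1/3}_m$ into its coordinates inside and outside $T$ gives
\[ \Exp_{\tau\from U^{1/3}_{|T|}}\big[\,\Pr[D_{\rho_\tau}(U^{1/3}_{m'})=1]\,\big]=\Pr[D(U^{1/3}_m)=1], \]
so some $\tau^*$ attains at least this average; set $\rho=\rho_{\tau^*}$, so $\Free(\rho)=m'$. By the first step $D_\rho$ is a $(v-1)$-CNF; by the choice of $\tau^*$, $\Pr[D_\rho(U^{1/3}_{m'})=1]\ge\Pr[D(U^{1/3}_m)=1]$; and since $\Pr[D(U_m)=1]$ is the average over the $2^{|T|}$ assignments to $T$ of nonnegative conditional probabilities, one of which is $\Pr[D_\rho(U_{m'})=1]$, we get $\Pr[D_\rho(U_{m'})=1]\le 2^{|T|}\cdot\Pr[D(U_m)=1]=2^{w\cdot 4^v}\cdot\Pr[D(U_m)=1]$. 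This gives all three conclusions.

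I expect the only genuinely delicate point to be the decoupling in the first step: realizing that fixing a maximal variable-disjoint matching of the wide clauses forces the width down no matter which values are used, which is exactly what lets the $U^{1/3}$ guarantee be obtained for free by averaging over those values. The remaining ingredients — the independence estimate bounding $|\mathcal{M}|$, the numerical inequality $v\cdot 3^v\cdot\ln 2\le 4^v$, and the two averaging arguments in the last step — should be routine.
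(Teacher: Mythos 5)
Your proposal is correct and follows essentially the same route as the paper: a greedy/maximality argument produces a small set of variables hitting every wide clause (its size bounded via the independence of variable-disjoint clauses under $U^{1/3}$, exactly as in the paper's cover lemma), and the values are then fixed by averaging under $U^{1/3}$, with the uniform acceptance probability controlled by the trivial $2^{|T|}$ conditioning factor. The only cosmetic differences are that you cover only the width-exactly-$v$ clauses and phrase the size bound via $v\cdot 3^v\ln 2\le 4^v$ rather than the paper's contrapositive ``no small cover implies many disjoint clauses'' formulation.
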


More precisely, at every step $i>0$, we apply Lemma \ref{lem:reduce} with $w=a+1$ on $D=C_{i-1}$, and set $C_i=D_{\rho}$. We use $m_i$ to denote the input length of $C_i$. We will maintain the following invariant (and note that this indeed holds for $i=0$).

\begin{description}[noitemsep]
\item[Invariant:] At step $i$ the following hold:
\begin{itemize}[noitemsep]
\item $C_i$ is a $(q-i)$-CNF.
\item $\Pr[C_i(U_{m_i})=1] \le 2^{-\left(2^{\ell} - a-i \cdot (a+1) \cdot 4^{q}\right)}$.
\item $\Pr[C_i(U^{1/3}_{m_i})=1] \ge 2^{-(a+1)}$.
\end{itemize}
\end{description}

It immediately follows that this invariant holds for every $i \le q$.
This means that at the conclusion of this process, for $i=q$ we have that:
\begin{itemize}[noitemsep]
\item $C_m$ is a constant.
\item $\Pr[C_q(U_{m_q})=1] \le 2^{-\left(2^{\ell} - a- q \cdot (a+1) \cdot 4^{q}\right)}$.
\item $\Pr[C_q(U^{1/3}_{m_q})=1] \ge 2^{-(a+1)}$.
\end{itemize}

\noindent
Therefore, by the third item, it must be that $C_m$ is the constant one, and by the second item it must be that:
\[ 2^{-\left(2^{\ell} -a - q \cdot (a+1) \cdot 4^{q}\right)} \ge 1.\]
which implies that:
\[ q \ge \frac{\ell-\log(a+1)}{3}. \]

\subsubsection{Proof of Lemma \ref{lem:reduce}}
\label{sec:prf:lem:reduce}

In this section, we prove Lemma \ref{lem:reduce}. We first observe that if a $v$-CNF does not have many clauses that have disjoint variables, then there must be a small ``cover'' (namely, a set of variables such that each clause contains a variable from the cover).

\begin{proposition}
\label{prop:independence or cover}
In every $v$-CNF $D$, if there does not exist $t$ clauses with disjoint variables, then there is a subset $S$ of $s=(t-1) \cdot v$ variables, such that every clause of $D$ contains a variable in $S$.
\end{proposition}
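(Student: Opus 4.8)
The plan is to run the standard ``maximal packing / greedy cover'' argument, which is the combinatorial core behind statements of this kind (it is exactly the duality between a maximal matching and a vertex cover, here phrased for the variable-sets of clauses). I would start by considering a maximal collection $D_1,\ldots,D_{t'}$ of clauses of $D$ whose variable-sets are pairwise disjoint (such a maximal collection exists since $D$ has finitely many clauses; ``maximal'' means no further clause of $D$ can be added while keeping the variable-sets pairwise disjoint).

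The first key step is to observe that $t' \le t-1$: by hypothesis there do not exist $t$ clauses of $D$ with pairwise disjoint variables, and $D_1,\ldots,D_{t'}$ are such clauses, so $t' < t$. Next, let $S$ be the set of all variables appearing in $D_1 \cup \cdots \cup D_{t'}$; since $D$ is a $v$-CNF, each $D_i$ has at most $v$ variables, so $|S| \le t' \cdot v \le (t-1) \cdot v = s$. (If one insists that $|S|$ be exactly $s$, simply add arbitrary further variables to $S$ until $|S| = s$; this only makes the covering property easier.)

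The final step is to verify that $S$ is a cover, i.e.\ every clause $C$ of $D$ contains a variable in $S$. If $C$ is one of the $D_i$ this is immediate. Otherwise, were $C$ to contain no variable of $S$, then $C$'s variable-set would be disjoint from each of $D_1,\ldots,D_{t'}$, so $D_1,\ldots,D_{t'},C$ would be a strictly larger collection of clauses with pairwise disjoint variable-sets, contradicting maximality of the chosen collection. Hence every clause meets $S$, proving the proposition.

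I do not expect any genuine obstacle here; the whole content is the maximality argument, and the only thing to be careful about is the bookkeeping on sizes ($|S| \le (t-1)v$, with padding if an exact equality is wanted) and phrasing ``disjoint variables'' precisely as ``pairwise disjoint variable-sets''. This proposition will presumably be used in the proof of Lemma~\ref{lem:reduce}: if the $v$-CNF $D$ has $t$ clauses on disjoint variable blocks then, under $U^{1/3}_m$, each such clause fails independently with probability at least a constant, so $t$ cannot be too large given the lower bound $\Pr[D(U^{1/3}_m)=1]\ge 2^{-w}$; and if it has no such $t$ clauses, the small cover $S$ lets one restrict the $s=(t-1)v$ variables of $S$ to kill the top level of the CNF, turning $D$ into a $(v-1)$-CNF while controlling the change in acceptance probabilities.
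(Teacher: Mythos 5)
Your proof is correct and follows essentially the same approach as the paper: the paper builds the same maximal pairwise-disjoint collection greedily (marking clauses one by one and adding their variables to $S$), while you select a maximal collection outright, but the size bound $|S|\le (t-1)\cdot v$ and the maximality argument showing every clause meets $S$ are identical.
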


\begin{proof}
We go over the clauses one by one, staring with an empty $S$. Every time the current clause does not contain a variable from $S$, we mark the current clause, and add its variables to $S$. Every time we mark a clause, its variables are disjoint from all variables of previously marked clauses. By our assumption, we mark at most $t-1$ clauses, and so when we conclude the size of $S$ is at most $(t-1) \cdot v$.
\end{proof}

\begin{lemma}
\label{lem:find cover}
For every $v$-CNF $D$ over $m$ variables, if $\Pr[D(U^{1/3}_m)=1] \ge 2^{-w}$ then
there exists a set $S$ of size $s=w \cdot 4^{v}$ variables such that every clause of $D$ contains at least one variable in $S$.
\end{lemma}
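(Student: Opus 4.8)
The plan is to derive the cover from Proposition~\ref{prop:independence or cover} together with one elementary probabilistic estimate. First I would fix $t$ to be the least integer with $(1-3^{-v})^t < 2^{-w}$. Using $-\ln(1-x)\ge x$ one gets $(t-1)\cdot 3^{-v} \le w\ln 2$, hence $t-1 \le w\ln 2\cdot 3^{v}$ and therefore $(t-1)\cdot v \le w\ln 2\cdot v\cdot 3^{v} \le w\cdot 4^{v}$, the last step being the elementary fact that $v\ln 2 \le (4/3)^{v}$ for every integer $v\ge 1$ (the ratio $(4/3)^{v}/v$ is minimized over integers at $v=3,4$, where it is $\approx 0.79 > \ln 2$). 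So it suffices to prove that $D$ contains no $t$ clauses with pairwise disjoint variable sets: then Proposition~\ref{prop:independence or cover} produces a set $S$ with $|S| = (t-1)\cdot v \le w\cdot 4^{v}$ meeting every clause, and one may pad $S$ with arbitrary variables (or take $S=[m]$ if $m < w\cdot 4^{v}$) to reach size exactly $w\cdot 4^{v}$ if the exact size is wanted.

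For the key step I would argue by contradiction. Before starting, preprocess $D$ harmlessly: delete any tautological clause (one containing some variable together with its negation) and collapse repeated literals inside a clause; this changes neither the function computed by $D$ nor the fact that it is a $v$-CNF. Now suppose $K_1,\dots,K_t$ are clauses of the (preprocessed) $D$ on pairwise disjoint variable sets. Each $K_i$ is an OR of at most $v$ literals on distinct variables, so under $U^{1/3}_m$ the event ``$K_i$ is falsified'' has probability $\prod(\text{literal false})\ge(1/3)^{v}=3^{-v}$, because a positive literal is false with probability $2/3$ and a negative literal with probability $1/3$, and these factorize since the variables are distinct. Since $D(x)=1$ forces every clause, in particular every $K_i$, to be satisfied, and the events ``$K_i$ satisfied'' depend on disjoint variable blocks and are hence independent, we obtain $\Pr[D(U^{1/3}_m)=1]\le\prod_{i=1}^{t}\Pr[K_i\text{ satisfied}]\le(1-3^{-v})^{t}<2^{-w}$, contradicting the hypothesis $\Pr[D(U^{1/3}_m)=1]\ge 2^{-w}$.

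The only real subtlety — and the step I would be most careful about — is the handling of degenerate clauses: a clause containing both $x_i$ and $\bar x_i$ is a tautology, so its ``falsified'' probability is $0$ rather than at least $3^{-v}$, which would break the bound; this is precisely why the preprocessing step is needed. The other thing worth double-checking is the closed-form arithmetic, namely that the least $t$ with $(1-3^{-v})^{t}<2^{-w}$ really does satisfy $(t-1)v\le w\cdot 4^{v}$; this reduces to $v\ln 2\le(4/3)^{v}$, which as noted holds for all $v\ge 1$ with room to spare. Everything else is a single line of independence plus a union over clauses, with no Chernoff-type estimate required.
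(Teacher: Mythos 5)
Your proposal is correct and is essentially the paper's own argument: both proofs combine Proposition~\ref{prop:independence or cover} with the observation that $t$ clauses on pairwise disjoint variables are simultaneously satisfied under $U^{1/3}_m$ with probability at most $(1-3^{-v})^t$, and derive a contradiction with $\Pr[D(U^{1/3}_m)=1]\ge 2^{-w}$ when no small cover exists. The only differences are bookkeeping — you fix the minimal $t$ with $(1-3^{-v})^t<2^{-w}$ and verify $(t-1)v\le w\cdot 4^v$ via $v\ln 2\le (4/3)^v$, while the paper sets $t=s/v$ for $s=w\cdot 4^v$ and checks $e^{-3^{-v}t}<e^{-4^{-v}s}\le 2^{-w}$ — plus your (harmless, and slightly more careful) preprocessing of tautological clauses, which the paper leaves implicit.
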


\begin{proof}
If there does not exists a set $S$ with the required properties, then by Proposition \ref{prop:independence or cover}, $D$ has $t=\frac{s}{v}$ clauses that have disjoint variables. Under $U^{1/3}_m$, each literal of $D$ has probability at least $1/3$ to evaluate to zero. This gives that each clause has probability at most $1-\left(\frac{1}{3}\right)^v$ to evaluate to one. Consequently, the probability that $t$ disjoint clauses evaluate to one is at most $\left(1-\left(\frac{1}{3}\right)^v\right)^t$, implying that:
\[ \Pr[D(U^{1/3}_m)=1] \le \left(1-\left(\frac{1}{3}\right)^v\right)^t \le e^{-3^{-v} \cdot t} < e^{-4^{-v}\cdot s} \le 2^{-w}. \]
and we get a contradiction.
\end{proof}

We are ready to prove Lemma \ref{lem:reduce}. By Lemma \ref{lem:find cover} there exists a set $S$ of size $s=w \cdot 4^{v}$ variables such that every clause of $D$ contains at least one variable in $S$. Let $A$ be the set of all restrictions that fix the variables in $S$, while leaving the other variables free. More precisely,
\[ A=\set{\rho:[m] \to \set{0,1,*}: \rho(i) \ne * \mbox{ iff } i \in S}. \]
We will show that there exists a $\rho \in A$ that satisfies the guarantees of Lemma \ref{lem:reduce} with positive probability.

The first item of Lemma \ref{lem:reduce} holds for every $\rho \in A$. This is because every clause of $D$ has a variable in $S$ and following the restriction, each clause is over at most $v-1$ variables.

The second item of Lemma \ref{lem:reduce} also holds for every $\rho \in A$. For every choice $y \in \B^{w \cdot 4^v}$ of the restricted bits in $\rho$, let $E_{\rho,y}$ denote the event  $\set{X|_{[n] \setminus \Select(\rho)}=y}$. For every $\rho \in A$ we have that:
\begin{align*}
\Pr[D_{\rho}[D(U_{m'})=1] &= \Pr_{X \from U_m}[D(X)=1|E_{\rho,y}] \\
&\le \frac{\Pr_{X \from U_m}[D(X)=1]}{\Pr_{X \from U_m}[
E_{\rho,y}]} \\
&= \Pr[D(U_m)=1] \cdot 2^{w \cdot 4^v}.
\end{align*}
The third item follows because by averaging, there exists $\rho \in A$ and $y \in \B^{w \cdot 4^v}$ such that $\Assign(\rho)=y$ and:
\[ \Pr_{X \from U^{1/3}_{m}}[D(X)=1|E_{\rho,y}] \ge \Pr_{X \from U^{1/3}_{m}}[D(X)=1], \]
For this choice of $\rho$ and $y$, we have that
\begin{align*}
\Pr[D_\rho(U^{1/3}_{m'})=1] &= \Pr_{X \from U^{1/3}_{m}}[D(X)=1|X \in E_{\rho,y}] \\
&\ge \Pr_{X \from U^{1/3}_{m}}[D(X)=1] \\
&= \Pr[D(U^{1/3}_{m'})=1].
\end{align*}
This concludes the proof of Lemma \ref{lem:reduce}.

\section{Improved lower bounds for local list-decoding algorithms}
\label{sec:codes}

In this section we show that the techniques developed in Section \ref{sec:hardness amplification} can be used to get an improved lower bounds on the number of queries of local list-decoding algorithms. In Section \ref{sec:codes:dfn} we review the definition of locally list-decodable codes, and local decoding algorithms. In Section \ref{sec:codes:results} we state our improved bound and compare it to previous work. In Section \ref{sec:codes:proof} we prove the improved bound.

\subsection{Definition of locally list-decodable codes}
\label{sec:codes:dfn}

List-decodable codes are a natural extension of (uniquely decodable) error-correcting codes, as it allows (list) decoding for error regimes where unique decoding is impossible. This is an extensively studied area; see \cite{Gur-survey06} for a survey. In this paper, we will be interested in list-decoding of binary codes.

\begin{definition}[List-decodable code]
For a function $\Enc:\B^k \ar \B^n$, and $w \in \B^n$, we define
\[ \List^{\Enc}_{\alpha}(w)=\set{m \in \B^k: \mathsf{dist}(\Enc(m),w) \le \alpha}. \]
We say that $\Enc$ is $(\alpha,L)$-list-decodable if for every $w \in \B^n$, $|\List^{\Enc}_{\alpha}(w)| \le L$.
\end{definition}
The task of \emph{algorithmic} list-decoding is to produce the list $\List^{\Enc}_{\alpha}(w)$ on input $w \in \B^n$.

\emph{Local} unique decoding algorithms are algorithms that given an index $i \in [k]$, make few oracle queries to $w$, and reproduce the bit $m_i$ (with high probability over the choice of their random coins), where $m$ denotes the unique codeword close to $w$. This notion of \emph{local decoding} has many connections and applications in computer science and mathematics~\cite{Yekhanin12}.

We will be interested in \emph{local} list-decoding algorithms
that receive oracle access to a received word $w \in \B^n$, as well as inputs $i \in [k]$ and $j \in [L]$.
We will require that for every $m \in \List^{\Enc}_{\alpha}(w)$, with high probability, there exists a $j \in [L]$ such that for every $i \in [k]$, when $\Dec$ receives oracle access to $w$ and inputs $i,j$, it produces $m_i$ with high probability over its choice of random coins. This motivates the next definition.

\begin{definition}[Randomized local computation]
\label{dfn:randomized local computation}
We say that a procedure $P(i,r)$ \remph{locally computes} a string $m \in \B^k$ with error $\delta$, if for every $i \in [k]$, $\Pr[P(i,R)=m_i] \ge 1-\delta$ (where the probability is over a uniform choice of the ``string of random coins'' $R$).
\end{definition}

The definition of local list-decoders considers an algorithmic scenario that works in two steps:
\begin{itemize}
\item At the first step (which can be thought of as a preprocessing step) the local list-decoder $\Dec$ is given oracle access to $w$ and an index $j \in [L]$. It tosses random coins (which we denote by $r^{\shared}$).
\item At the second step, the decoder receives the additional index $i \in [k]$, and tosses additional coins $r$.
\item It is required that for every $w \in \B^n$ and $m \in \List^{\Enc}_{\alpha}(w)$, with probability $2/3$ over the choice of the shared coins $r^{\shared}$, there exists $j \in [L]$ such that when the local list-decoder receives $j$, it locally computes $m$ (using its ``non-shared'' coins $r$). The definition uses two types of random coins because the coins $r^{\shared}$ are ``shared'' between different choices of $i \in [k]$ and allow different $i$'s to ``coordinate''. The coins $r$, are chosen independently for different choices of $i \in [k]$.
\end{itemize}

This is formally stated in the next definition.

\begin{definition}[Local list-decoder]
\label{dfn:LLD}
Let $\Enc:\B^k \ar \B^n$ be a function. An \remph{$(\alpha,L,q,\delta)$-local list-decoder (LLD)} for $\Enc$ is an oracle procedure $\Dec^{(\cdot)}$ that receives oracle access to a word $w \in \B^n$, and makes at most $q$ calls to the oracle. The procedure $\Dec$ also receives inputs:
\begin{itemize}
\item $i \in [k]$ : The index of the symbol that it needs to decode.
\item $j \in [L]$ : An index to the list.
\item Two strings $r^{\shared},r$ that are used as random coins.
\end{itemize}
It is required that for every $w \in \B^n$, and for every $m \in \List^{\Enc}_{\alpha}(w)$, with probability at least $2/3$ over choosing a uniform string $r^{\shared}$, there exists $j \in [L]$ such that the procedure \[P_{w,j,r^{\shared}}(i,r)=\Dec^w(i,j,r^{\shared},r) \] locally computes $m$ with error $\delta$. If we omit $\delta$, then we mean $\delta=1/3$.
\end{definition}

\noindent
See \cite{RSV21} for a discussion on the generality of this definition, and on past work in this area.

\subsection{Our Results}
\label{sec:codes:results}

Ron-Zewi, Shaltiel and Varma \cite{RSV21} showed lower bounds on the number of queries of local list decoders.
We use our improved techniques to prove the following theorem.

\begin{theorem}[Improved lower bounds for small $\eps$]
\label{thm:main:code:small eps}
There exist constants $c_1,c_2>1$ such that for every $L \le 2^{\frac{k}{20}}$, $\delta <\frac{1}{3}$, $n \ge \frac{c_2}{\eps^2}$, and every $(\half-\eps,L,q,\delta)$-local list-decoder for $\Enc:\B^k \ar \B^n$, we have that $q \ge \frac{1}{c_1 \cdot \log (k) \cdot \eps}$.
\end{theorem}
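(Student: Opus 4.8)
The plan is to transport the circuit argument of Section~\ref{sec:hardness amplification} (Lemma~\ref{lem:Red => C}, Lemma~\ref{lem:starting}, Lemma~\ref{lem:continuing}) from black-box hardness-amplification reductions to local list-decoders. The dictionary is: the message length $k$ plays the role of $2^{\ell}$; $\log L$ plays the role of the advice length $a$; the block length $n=2^{\ell'}$ is $n$; the decoding radius $\half-\eps$ corresponds to the weight bound $\wt(z)\le(\half-\eps)n$; and the extra $\log k$ factor in the conclusion is exactly the price of derandomizing the decoder.

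\textbf{Step 1 (from an LLD to a depth-$3$ circuit).} Starting from an $(\half-\eps,L,q,\delta)$-LLD $\Dec$, first amplify: replace each invocation by the majority of $\Theta(\log k)$ independent runs with fresh non-shared coins, reducing the per-index error from $\delta<\frac13$ to below $\frac{1}{10k}$; the resulting procedure $\Dec'$ makes $q'=O(q\log k)$ queries. A union bound over $i\in[k]$ gives: for every $w$ and every $m\in\List^{\Enc}_{\half-\eps}(w)$, with probability $\ge\frac23$ over $r^{\shared}$ there is $j\in[L]$ with $\Pr_r[\forall i\in[k]:\Dec'^{w}(i,j,r^{\shared},r)=m_i]\ge\frac23$. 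Now pick a uniform message $M\from\B^{k}$ and a noise vector $z\in\B^{n}$, and mimic the proof of Lemma~\ref{lem:Red => C}. When $\wt(z)\le(\half-\eps)n$ we have $M\in\List^{\Enc}_{\half-\eps}(\Enc(M)\oplus z)$, so the guarantee above applies; when $z\from U_n$ the word $\Enc(M)\oplus z$ is uniform and independent of $M$, so for each fixed $j$ the event ``$\forall i:\Dec'^{\Enc(M)\oplus z}(i,j,r^{\shared},r)=M_i$'' has probability $2^{-k}$ over $M$, and a union bound over the $L$ choices of $j$ bounds the overall decoding probability by $L\cdot 2^{-k}$. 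By the usual averaging (fix $r^{\shared}$ using that its $\frac23$-success holds for all words of the relevant family at once; then fix a hard message $M^*$ and non-shared coins $r^*$) we obtain a fixed $M^*,r^{\shared,*},r^*$ for which, setting $C^{\alpha}(z)=\bigwedge_{i\in[k]}\big[\Dec'^{\Enc(M^*)\oplus z}(i,\alpha,r^{\shared,*},r^*)=M^*_i\big]$ with advice $\alpha=j\in\B^{\log L}$ and $C=\bigvee_{\alpha\in\B^{\log L}}C^{\alpha}$, we have: $C^{\alpha}$ is a $q'$-CNF (each answer is a height-$q'$ decision tree in the bits of $z$, and $C^{\alpha}$ is the AND of $k$ such functions); $\Pr[C(U_n)=1]\le 2^{-(k-\log L-O(1))}$; and $C(z)=1$ for a constant fraction of the $z$ with $\wt(z)\le(\half-\eps)n$.

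\textbf{Step 2 (coin-problem lower bound).} From here the argument is the one in the proofs of Lemma~\ref{lem:starting} and Lemma~\ref{lem:continuing}, with $a=\log L+O(1)$ and $q=q'$. Using $n\ge c_2/\eps^2$ and two Chernoff bounds (exactly as in Lemma~\ref{lem:starting}), the last property upgrades to: for $p=\frac{\eps}{10}$, with probability $\ge 0.01$ over $\rho\from\Rest^{n}_{p}$, with probability $\ge 0.01$ over $x\from U^{1/3}_{n\cdot p}$ we have $C(\Fill_{\rho}(x))=1$ (a random restriction followed by a $\frac13$-biased assignment to the free coordinates has weight below $(\half-\eps)n$ with high probability). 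Since $\log L\le k/20\le k/10$ by the hypothesis $L\le 2^{k/20}$, the hypotheses of Lemma~\ref{lem:continuing} hold with $2^{\ell}$ there equal to $k$, so $q'\ge\frac{1000}{\eps}$. As $q'=O(q\log k)$ this gives $q\ge\frac{1}{c_1\cdot\log k\cdot\eps}$ for an appropriate constant $c_1$ (and $c_2$ is the constant from the Chernoff bounds).

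\textbf{Main obstacle.} The Chernoff estimates and the switching-lemma/DNF-clause manipulations are inherited essentially verbatim from Section~\ref{sec:hardness amplification}; the delicate part is Step~1. A local list-decoder, unlike a black-box reduction, is randomized with two kinds of coins and its ``correct list index'' $j$ depends on the received word, so one must: keep $j$ --- and nothing else --- as the $\le k/10$ bits of advice feeding the top OR gate; eliminate the shared coins by an averaging argument that uses that the $\frac23$-success over $r^{\shared}$ holds for all words of the relevant family simultaneously; and eliminate the non-shared coins only after amplifying the per-index success probability enough to union-bound over all $k$ indices and thus obtain a single $r$ that decodes all of $M^*$. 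This last amplification is what forces the $\log k$ loss and explains its appearance in the bound; arranging the order of quantifiers (fix $r^{\shared}$, then $M^*$, then $r^*$, leaving $j$ free) while preserving both circuit properties is the crux of the proof.
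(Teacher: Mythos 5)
Your overall strategy is the paper's: amplify to per-index error $O(1/k)$ at a cost of $q'=O(q\log k)$ queries, fix the decoder's coins, turn the decoder into a depth-3 circuit $C=\bigvee_{j\in[L]}\bigwedge_{i\in[k]}\bigl[\overline{\Dec}^{\Enc(M^*)\oplus z}(i,j)=M^*_i\bigr]$ whose acceptance probability on uniform $z$ is at most $L\cdot 2^{-k}$ up to Markov losses, and feed it into Lemma \ref{lem:continuing} with $a=\log L+O(1)$ and $2^{\ell}=k$. This matches Section \ref{sec:codes:proof} step for step, except at one junction, and that junction is a genuine gap. You fix $r^{\shared,*},M^*,r^*$ first, retaining only that $C(z)=1$ for ``a constant fraction of the $z$ with $\wt(z)\le(\half-\eps)n$'', and then claim this upgrades, ``exactly as in Lemma \ref{lem:starting}'', to the nested statement over $\rho\from\Rest^n_p$ and $x\from U^{1/3}_{n\cdot p}$. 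But the Chernoff argument in Lemma \ref{lem:starting} only shows that $\Fill_{\rho}(x)$ lands in the low-weight ball with high probability; it yields $C(\Fill_{\rho}(x))=1$ there only because the circuit coming from the \emph{deterministic} reduction of Lemma \ref{lem:Red => C} accepts \emph{every} low-weight string. Once the coins of a randomized decoder are fixed, that universal guarantee is gone: you only know $C=1$ on a constant-measure subset of the ball, under whatever measure you averaged against (you leave this unspecified; implicitly it is uniform over low-weight $z$). A constant fraction of the ball under that measure need not have constant probability under the restriction-plus-biased-fill distribution --- the two distributions concentrate at different weight levels and their densities differ by factors exponential in $\eps^2 n$, which is unbounded since only $n\ge c_2/\eps^2$ is assumed. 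So the inference opening your Step 2 does not go through as stated. A related symptom is the phrase ``fix $r^{\shared}$ using that its $2/3$-success holds for all words of the relevant family at once'': no single $r^{\shared}$ works for all received words simultaneously; one can only average over a distribution on $(m,z)$.

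The repair is exactly what the paper does: introduce the noise distribution you ultimately need --- the experiment $\RNSY$, which first chooses $(m,\rho)$ and then fills the free coordinates from $U^{1/3}$ --- \emph{before} fixing any coins, use the two Chernoff bounds at that stage to argue that the $\RNSY$ noise lies within the decoding radius except with probability $0.02$, so the list-decoding guarantee applies to most outcomes of $\RNSY$, and then perform all the averaging over coins and over the message with respect to $\RNSY$, keeping the two-level structure ``with constant probability over $(m,\rho)$, with constant probability over the fill'' (this is Lemma \ref{lem:LLD to ARLLD} and Claim \ref{clm:starting:code}); that nested statement, not the ``fraction of the low-weight ball'' statement, is what Lemma \ref{lem:continuing} consumes. (The order in which you fix $M^*$ and $r^*$ is not the issue; applying Markov to both circuit properties at each fixing, either order works.) With this reordering your argument coincides with the paper's proof, including the final accounting $q'\ge\Omega(1/\eps)$ and hence $q\ge\Omega(1/(\eps\log k))$.
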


\noindent
The previous bounds of \cite{RSV21} come in two forms:
\begin{itemize}[noitemsep]
\item For $L \le 2^{k^{0.9}}$ and $\eps \ge \frac{1}{k^{\Omega(1)}}$, \cite{RSV21} obtain tight bounds (up to constants), showing that: $q = \Omega(\frac{\log(1/\delta)}{\eps^2})$.
\item If these conditions are not met (and in particular, if $\eps$ is smaller than $1/k$), \cite{RSV21} obtain weaker bounds, showing that if $\delta < \frac{1}{3}$, $q=\Omega(\frac{1}{\sqrt{\eps} \cdot \log k}) - O(\log L)$.
\end{itemize}

Theorem \ref{thm:main:code:small eps} improves upon the second item above (although it does not match the optimal bound of the first item). More specifically, Theorem \ref{thm:main:code:small eps} replaces $\sqrt{\eps}$ with $\eps$, and does not have the additive term of ``$-O(\log L)$''.

Not surprisingly, these improvements directly correspond to ``the extreme high-end''. Making this analogy, (namely, setting $L=2^a$, and $k=2^{\ell}$) we have that for $L=2^{k^{1-o(1)}}$, previous work does not give any bound if $\eps \ge \frac{1}{k^{1.9}}$, and in particular for $\eps \approx \frac{1}{k}$.
In contrast, Theorem \ref{thm:main:code:small eps} gives a bound that is $\Omega(\frac{1}{\log k \cdot \eps})$, which is not far from the known upper bound of $O(\frac{1}{\eps^2})$, and is polynomially related to the upper bound for $\eps \le \frac{1}{\log^2 k}$.

\subsection{Proof of Theorem \ref{thm:main:code:small eps}}
\label{sec:codes:proof}

In this section we prove Theorem \ref{thm:main:code:small eps}. The proof uses a similar approach as that of \cite{RSV21} (which is in turn based on \cite{AASY15}) to transform an LLD into a depth 3 circuit for a certain problem. (We make stronger requirements on the circuit, and therefore, need to redo the reduction, taking care to obtain these stronger requirements). Once we obtain a depth 3 circuit, we use Lemma \ref{lem:continuing} to show improved lower bounds. Details follow:

We assume that $L \le 2^{\frac{k}{20}}$, and $n \ge \frac{c_2}{\eps^2}$.
Our goal is to prove lower bounds on the number of queries $q$ of $(\half-\eps,L,q,\delta)$-local list-decoders for $\delta<\frac{1}{3}$. It is possible to amplify the error probability $\delta$ from $1/3$ to $1/20k$ as follows: After choosing the random string $r^{\shared}$, we choose $e=O(\log k)$ independent uniform strings $r_1,\ldots,r_e$, and apply $\Dec^{(\cdot)}(i,j,r_{\ell},r^{\shared})$ for all choices of $\ell \in [e]$. We then output the majority vote of the individual $e$ outputs. It is standard that this gives a $(\half - \eps, q'=O(q \cdot \log k), L, 1/20k)$-LLD for $\Enc: \B^k \to \B^n$.

We plan to reduce to Lemma \ref{lem:continuing}, and set $p=\frac{\eps}{10}$ (as in that lemma).
Our next step is to fix the random coins of the decoder, and obtain a \emph{deterministic} decoder $\overline{Dec}$ which succeeds in a specific experiment in which: a message $m \from \B^k$ is chosen uniformly, and a ``noise string'' $z$ is chosen by choosing $\rho \from \Rest^n_p$ and filling the unrestricted variables with $U^{1/3}_{p \cdot n}$. Finally, a ``received word''  $w$ is obtained by $w=\Enc(m) \oplus z$. This is defined formally below:

\begin{definition}[The experiment $\RNSY$]
We consider the following experiment (which we denote by $\RNSY$): The experiment works in two steps. The first step (denoted $\RNSY^1$) works as follows:
\begin{itemize}[noitemsep]
\item A message $m \from \B^k$ is chosen uniformly.
\item A restriction $\rho \from R^n_p$ is chosen uniformly.
\end{itemize}
The second step is defined for a fixed $m \in \B^k$ and $\rho \in R^n_p$. It is denoted by $\RNSY^2(m,\rho)$ and works as follows:
\begin{itemize}[noitemsep]
\item A string $z$ is chosen from $U^{1/3}_{p \cdot n}$.
\item We define $x=\Fill_{\rho}(z)$.
\item We define $w=\Enc(m) \oplus x$.
\end{itemize}
We use $(m,\rho,z,w) \from \RNSY$ to denote $m,\rho,z,w$ which are sampled by the two steps of this experiment.
\end{definition}

The next lemma fixes the coins of the local-decoder, making it deterministic. It is similar in spirit to Proposition 3.1 in \cite{RSV21}, except that the experiment $\RNSY$ that we use is different and more complicated than the one used in \cite{RSV21}.

\begin{lemma}
\label{lem:LLD to ARLLD}
There exists a constant $c>1$ such that if $n \ge \frac{c}{\eps^2}$ and $\Dec$ is a
$(\half-\eps,L,q,\delta)$-local list-decoder for a function $\Enc:\B^k \ar \B^n$ then there exists an oracle procedure $\overline{Dec}^{(\cdot)}(i,j)$ receiving $i \in [k]$, $j \in [L]$, and making $q$ queries to a string $w \in \B^n$, such that with probability at least $0.51$ over choosing $(m,\rho) \from \RNSY^1$, we have that with probability at least $0.01$ over choosing $(z,w) \from \RNSY^2(m,\rho)$,
there exists $j \in [L]$ such that
\[ \Pr_{i \from [k]}[\overline{Dec}^w(i,j)=m_i] \ge 1-10 \cdot \delta. \]
\end{lemma}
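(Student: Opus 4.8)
This is a coin‑fixing (averaging) argument in the spirit of the analogue of Proposition~3.1 of \cite{RSV21}, adapted to the more elaborate noise experiment $\RNSY$, in which the noise string $\Fill_\rho(z)$ is built from the uniformly chosen fixed bits of $\rho\from\Rest^n_p$ together with the free bits $z\from U^{1/3}_{p\cdot n}$; the only genuinely new ingredient is the two‑part concentration estimate. The plan is: (i) show that in the experiment the received word $w$ is almost always close enough to $\Enc(m)$ that $m$ lies in the decoding list; (ii) invoke the list‑decoder's guarantee and then fix, one at a time, the shared coins $r^{\shared}$ and the per‑index coins, paying a Markov loss that blows $\delta$ up to $10\delta$; and (iii) convert the resulting one‑level success probability into the two‑level $(0.51,0.01)$ statement asked for.

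\emph{Step (i).} For $(m,\rho,z,w)\from\RNSY$ let $\mathsf{Close}$ be the event $\mathsf{dist}(\Enc(m),w)\le\half-\eps$, i.e.\ $\wt(\Fill_\rho(z))\le(\half-\eps)n$. A Chernoff bound over the fixed coordinates of $\rho$ together with $z\from U^{1/3}_{p\cdot n}$ --- exactly as in the proof of Lemma~\ref{lem:starting}, using the choice of $p$ and the hypothesis $n\ge c/\eps^2$ --- gives $\Pr[\mathsf{Close}]\ge0.99$. On $\mathsf{Close}$ we have $m\in\List^{\Enc}_{\half-\eps}(w)$, so Definition~\ref{dfn:LLD} applies: conditioned on $\mathsf{Close}$, with probability $\ge\tfrac23$ over $r^{\shared}$ there is a $j\in[L]$ with $\Pr_r[\Dec^w(i,j,r^{\shared},r)=m_i]\ge1-\delta$ for every $i\in[k]$, and averaging over a uniform $i\from[k]$ this yields $\Pr_{i\from[k],\,r}[\Dec^w(i,j,r^{\shared},r)=m_i]\ge1-\delta$. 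Since $r^{\shared}$ is independent of $(m,\rho,z,w)$,
\[ \Pr_{(m,\rho,z,w),\,r^{\shared}}\!\big[\mathsf{Close}\ \wedge\ \exists j:\ \Pr_{i,r}[\Dec^w(i,j,r^{\shared},r)=m_i]\ge1-\delta\big]\ \ge\ 0.99\cdot\tfrac23\ \ge\ 0.66. \]

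\emph{Step (ii).} Fix, by averaging over $r^{\shared}$, a value $r^{\shared}_0$ attaining inner probability $\ge0.66$ over $(m,\rho,z,w)$ in the displayed event. For each $j\in[L]$ draw a non‑shared coin string $\tilde r_j$ uniformly and independently, and define the deterministic oracle procedure $\overline{Dec}^w(i,j):=\Dec^w(i,j,r^{\shared}_0,\tilde r_j)$, which takes $i\in[k],j\in[L]$ and makes at most $q$ queries. Fix any instance $(m,\rho,z,w)$ and any ``good'' index $j$ (one with $\Pr_{i,r}[\Dec^w(i,j,r^{\shared}_0,r)=m_i]\ge1-\delta$); then Markov's inequality gives $\Pr_r[\Pr_i[\Dec^w(i,j,r^{\shared}_0,r)\ne m_i]>10\delta]<\tfrac1{10}$, so a uniform $\tilde r_j$ makes $\Pr_{i\from[k]}[\overline{Dec}^w(i,j)=m_i]\ge1-10\delta$ with probability $>\tfrac9{10}$, and this event (for the fixed good $j$) already witnesses the existential statement over $j$. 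Averaging over the tuple $(\tilde r_1,\dots,\tilde r_L)$ therefore fixes it so that
\[ \Pr_{(m,\rho,z,w)}\!\big[\exists j:\ \Pr_{i\from[k]}[\overline{Dec}^w(i,j)=m_i]\ge1-10\delta\big]\ \ge\ 0.66\cdot\tfrac9{10}\ \ge\ 0.59, \]
and from now on $\overline{Dec}$ uses the hard‑wired coins $r^{\shared}_0,(\tilde r_1,\dots,\tilde r_L)$, so it depends only on its inputs and its oracle.

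\emph{Step (iii).} Put $Y_{m,\rho}:=\Pr_{(z,w)\from\RNSY^2(m,\rho)}[\exists j:\ \Pr_i[\overline{Dec}^w(i,j)=m_i]\ge1-10\delta]$. Step (ii) says $\Exp_{(m,\rho)\from\RNSY^1}[Y_{m,\rho}]\ge0.59$, and since $Y_{m,\rho}\le1$ always,
\[ 0.59\ \le\ \Pr_{(m,\rho)}[Y_{m,\rho}\ge0.01]+0.01,\qquad\text{hence}\qquad \Pr_{(m,\rho)}[Y_{m,\rho}\ge0.01]\ \ge\ 0.58\ \ge\ 0.51, \]
which is exactly the conclusion of the lemma. \emph{Main obstacle.} Nothing here is deep; the care is entirely in the order of the coin‑fixing and in verifying that $\overline{Dec}$ ends up independent of $(m,\rho,z,w)$ while the ``good'' list index $j$ is still allowed to depend on the instance (acceptable, since $j$ is an input of $\overline{Dec}$ and the lemma only requires $\exists j$). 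The single place where the structure of $\RNSY$ really enters is the estimate $\Pr[\mathsf{Close}]\ge0.99$, which needs the two‑part Chernoff bound and the relation between $p$, $\eps$, and $n$; all remaining constants ($0.99\to0.66\to0.59\to0.58$) are chosen with enough slack that only $\Pr[\mathsf{Close}]\gtrsim0.9$ is actually used.
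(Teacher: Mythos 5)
Your proof is correct and follows essentially the same route as the paper's: the same two-part Chernoff estimate showing $m \in \List^{\Enc}_{\half-\eps}(w)$ with high probability, an averaging step that fixes the shared coins, a Markov step that degrades $\delta$ to $10\delta$ while fixing the non-shared coins, and a final two-level averaging that converts the overall success probability into the $(0.51,0.01)$ form. The only (immaterial) difference is that you hard-wire an independent non-shared coin string $\tilde r_j$ for each $j\in[L]$, whereas the paper fixes a single string $\hat r$ for all $j$; both work since only the instance's designated good index $j$ needs to succeed.
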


\begin{proof}
Let $\Dec$ denote an LLD for $\Enc$.
When choosing $\rho \from \Rest^n_p$ we expect that $\frac{1}{n} \cdot |\set{i:\rho(i)=1}|=\half-\frac{p}{2}$. Therefore, by a Chernoff bound:
\[ \Pr_{\rho \from \Rest^n_p}[ \frac{1}{n} \cdot|\set{i:\rho(i)=1}| \ge \half-0.49 \cdot p] \cdot  \le 2^{-\Omega(p^2 \cdot n)},\]
which is smaller than $0.01$, by our assumption that $n$ is sufficiently larger than $1/\eps^2$.
If this event occurs, then for every $z \in \B^{n \cdot p}$,
\[\wt(\Fill_{\rho}(z)) \le (\half-0.49 \cdot p) \cdot n + \wt(z). \]
This means that if $\wt(z) \le 0.34 \cdot p \cdot n$, then
\[ \wt(\Fill_{\rho}(z)) \le n \cdot (\half- (0.49-0.34) \cdot p) \le n \cdot(\half- \eps), \]
By our assumption that $p \le \frac{\eps}{10}$. Applying a Chernoff bound, we conclude that:
\[ \Pr_{z \from U^{1/3}_{n \cdot p}}[\wt(z) \ge 0.34 \cdot p \cdot n] \le 2^{-\Omega(pn)}, \]
which is smaller than $0.01$ by the assumption that $n$ is sufficiently larger than $1/\eps^2$.

Overall, this means that for $\gamma=0.02$, with probability at least $1-\gamma$ over choosing $(m,\rho,z,w) \from \RNSY$, we have that $\mathsf{dist}(\Enc(m),w) \le \half-\eps$, meaning that $m \in \List^{\Enc}_{\half-\eps}(w)$. By the definition of LLD, this gives that whenever this occurs, with probability at least $2/3$ over the choice of $r^{\shared}$, there exists $j \in [L]$ such that the procedure $P_{w,j,r^{\shared}}(i,r)=\Dec^w(i,j,r^{\shared},r)$ locally computes $m$ with error~$\delta$.

Let $E_1$ be the experiment in which $(m,\rho,z,w) \from \RNSY$ and $r^{\shared}$ be an independent uniform string. It follows that:

\[\Pr_{E_1}[\exists j \in [L]: \mbox{$P_{w,j,r^{shared}}$ locally computes $m$ with error $\delta$}] \ge \frac{2}{3} - \gamma. \]

By averaging, there exists a fixed string $\hat{r}^{\shared}$ such that:
\[\Pr_{\RNSY}[\exists j \in [L]: \mbox{$P_{w,j,\hat{r}^{shared}}$ locally computes $m$ with error $\delta$}] \ge \frac{2}{3} - \gamma. \]

Let $S$ denote the set of quadruples $(m,\rho,z,w)$ in the support of $\RNSY$ for which the event above occurs. For every such quadruple, we have that there exists a $j \in [L]$ for which $P_{w,j,\hat{r}^{shared}}$ locally computes $m$ with error $\delta$. Let $f$ be a mapping that given a quadruple $(m,\rho,z,w) \in S$, produces such a $j \in [L]$. This means that:

\[\Pr_{\RNSY}[\mbox{$P_{w,f(m,\rho,z,w),\hat{r}^{shared}}$ locally computes $m$ with error $\delta$}] \ge \frac{2}{3} - \gamma. \]

Let $\RNSY'$ be the experiment in which $(m,\rho,z,w) \from (\RNSY|(m,\rho,z,w) \in S)$. Namely, we choose $(m,\rho,z,w)$ from the experiment $\RNSY$, conditioned on the event that $(m,\rho,z,w) \in S$.

Let $E_2$ be the experiment in which we choose independently a random string $r$, $i \from [k]$ and $(m,\rho,z,w) \from \RNSY'$.
We obtain that:

\[\Pr_{E_2}[\Dec^w(i,f(m,\rho,z,w),\hat{r}^{shared},r)=m_i] \ge 1-\delta, \]
since $P_{w,f(m,\rho,z,w),\hat{r}^{shared}}$ computes correctly each coordinate $m_i$ with probability at least $1-\delta$ over the choice of $r$.

By averaging, there exists a fixed string $\hat{r}$ such that:


\[\Pr_{(m,\rho,z,w) \from \RNSY', i \from [k]}[\Dec^w(i,f(m,\rho,z,w),\hat{r}^{shared},\hat{r})=m_i] \ge 1-\delta. \]

By Markov's inequality:

\[\Pr_{(m,\rho,z,w) \from \RNSY'}\left[\Pr_{i \from [k]}[\Dec^w(i,f(m,\rho,z,w),\hat{r}^{shared},\hat{r}) \ne m_i] \ge 10\delta\right] \le \frac{1}{10}. \]

Let $\overline{\Dec}^w(i,j)=\Dec^w(i,j,\hat{r}^{shared},\hat{r})$. We obtain that:

\[\Pr_{(m,\rho,z,w) \from \RNSY'}[\Pr_{i \from [k]}[\overline{\Dec}^w(i,f(m,\rho,z,w)) = m_i] > 1-10\delta] >\frac{9}{10}. \]

Which gives that:
\[\Pr_{(m,\rho,z,w) \from \RNSY}\left[\Pr_{i \from [k]}[\overline{\Dec}^w(i,f(m,\rho,z,w)) = m_i] > 1-10\delta\right] > \left(\frac{2}{3} - \gamma\right) \cdot \frac{9}{10}  > 0.55. \]

Let $A$ denote the event:
\[ A= \set{\Pr_{i \from [k]}[\overline{\Dec}^w(i,f(m,\rho,z,w)) = m_i] > 1-10\delta},  \]
so that we have that:
\[\Pr_{(m,\rho,z,w) \from \RNSY}[A] > 0.55. \]
By the definition of the two steps of $\RNSY$, and an averaging argument this gives that:
\[\Pr_{(m,\rho) \from \RNSY^1}\left[\Pr_{(z,w) \from \RNSY^2(m,w)}[A] > 0.51] > 0.01\right]. \]
This concludes the proof.
\end{proof}

Applying Lemma \ref{lem:LLD to ARLLD} on the $(\half - \eps, q'=O(q \log k), L, 1/20k)$-LLD that we have previously obtained, gives the following corollary:

\begin{claim}
\label{clm:ARLLD}
There exists an oracle procedure $\overline{Dec}^{(\cdot)}(i,j)$ receiving $i \in [k]$, $j \in [L]$, and making $q'=O(q \cdot \log k)$ queries to a string $w \in \B^n$, such that with probability at least $0.51$ over choosing $(m,\rho) \from \RNSY^1$, we have that with probability at least $0.01$ over choosing $(z,w) \from \RNSY^2(m,\rho)$,
there exists $j \in [L]$ such that for every $i \from [k]$, $\overline{Dec}^w(i,j)=m_i$.
\end{claim}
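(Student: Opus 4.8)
The plan is to derive this claim directly by feeding the already-described error-amplified decoder into Lemma \ref{lem:LLD to ARLLD}, and then to observe that once the per-coordinate error is pushed strictly below $1/k$, being correct on a $(1-o(1/k))$-fraction of the $k$ coordinates is the same as being correct on \emph{every} coordinate.

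Concretely: we start from the given $(\half-\eps,L,q,\delta)$-local list-decoder with $\delta<\tfrac{1}{3}$. As described above, running the non-shared phase $e=O(\log k)$ times with independent coins (and the same shared coins $r^{\shared}$) and taking a majority vote yields, by a Chernoff bound, a $(\half-\eps,\,q'=O(q\cdot\log k),\,L,\,1/(20k))$-local list-decoder for $\Enc$; the query count is multiplied by $e=O(\log k)$, and the shared/non-shared coin structure of Definition \ref{dfn:LLD} is preserved. We then invoke Lemma \ref{lem:LLD to ARLLD} on this amplified decoder, with its error parameter instantiated to $\delta'=1/(20k)$ (and using $n\ge c/\eps^2$, which follows from the hypothesis $n\ge c_2/\eps^2$ by taking $c_2\ge c$). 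The lemma outputs a deterministic oracle procedure $\overline{Dec}^{(\cdot)}(i,j)$ making $q'=O(q\cdot\log k)$ queries such that, with probability at least $0.51$ over $(m,\rho)\from\RNSY^1$, with probability at least $0.01$ over $(z,w)\from\RNSY^2(m,\rho)$, there exists $j\in[L]$ with
\[ \Pr_{i\from[k]}\!\big[\,\overline{Dec}^w(i,j)=m_i\,\big]\ \ge\ 1-10\delta'\ =\ 1-\tfrac{1}{2k}. \]

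To finish, fix such a pair $(m,w)$ (inside the two good events) together with the witnessing $j\in[L]$. The displayed bound says that the number of indices $i\in[k]$ with $\overline{Dec}^w(i,j)\ne m_i$ is at most $k\cdot\tfrac{1}{2k}=\tfrac{1}{2}$; since this count is a nonnegative integer, it is $0$, so $\overline{Dec}^w(i,j)=m_i$ holds for \emph{every} $i\in[k]$, which is exactly what the claim asserts. There is no genuine difficulty in this step; the only things to be careful about are the bookkeeping details — tracking that the query count stays $O(q\log k)$ through the amplification, checking that feeding error $1/(20k)$ into Lemma \ref{lem:LLD to ARLLD} yields the factor $10\delta'=1/(2k)$, and noting that $1/(2k)$ is strictly below the $1/k$ threshold needed to rule out any incorrectly decoded coordinate.
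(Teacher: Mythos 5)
Your proposal is correct and follows essentially the same route as the paper: amplify the error to $1/(20k)$ by $O(\log k)$-fold repetition with majority vote, apply Lemma \ref{lem:LLD to ARLLD} to get per-$j$ agreement $1-10\cdot\frac{1}{20k}=1-\frac{1}{2k}$, and conclude exact correctness on all $k$ coordinates since fewer than one coordinate can be wrong.
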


\begin{proof}
This follows directly from Lemma \ref{lem:LLD to ARLLD}, noticing that a
statement of the form:
\[ \Pr_{i \from [k]}[\overline{Dec}^w(i,j)=m_i] \ge 1-\frac{1}{2k}, \]
implies that for every $i \from [k]$, $\overline{Dec}^w(i,j)=m_i$.
\end{proof}

We can use this (in a similar way to Lemma \ref{lem:Red => C}) to argue that:

\begin{claim}
\label{clm:starting:code}
Let $a=\log L+100$ and let $\ell=\log k$.
For every $\alpha \in \B^a$, there exists a $q$-CNF $C^{\alpha}$ over $n$ variables, such that for $C=\bigvee_{\alpha \in \B^a} C^{\alpha}$, and $p=\frac{\eps}{10}$:
\begin{itemize}[noitemsep]
\item $\Pr[C(U_n)=1] \le 2^{-(2^{\ell}-a)}$.
\item $\Pr_{\rho \from \Rest^n_p}[\Pr_{z \from U^{1/3}_{n \cdot p}}[C(\Fill_{\rho}(z))=1] \ge 0.01] \ge 0.01$.
\end{itemize}
\end{claim}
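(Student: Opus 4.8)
The plan is to follow the proof of Lemma~\ref{lem:Red => C} almost verbatim, under the dictionary: the message $m\in\B^k$ plays the role of the worst-case hard function $f$ (it is a function on $\ell=\log k$ bits), the codeword $\Enc(m)\in\B^n$ plays the role of $\Con_f$, the noise string $x\in\B^n$ plays the role of $z$, and the list index $j\in[L]$ plays the role of the advice string $\alpha\in\B^a$; since $a=\log L+100$ we have $2^a\ge L$, so fix once and for all a surjection $\alpha\mapsto j_\alpha$ from $\B^a$ onto $[L]$. For a fixed $m\in\B^k$ I would define $C_{m,\alpha}\colon\B^n\to\B$ by letting $C_{m,\alpha}(x)=1$ iff $\overline{Dec}^{\Enc(m)\oplus x}(i,j_\alpha)=m_i$ for every $i\in[k]$, and set $C_m=\bigvee_{\alpha\in\B^a}C_{m,\alpha}$; the desired family $C^\alpha$ will be $C_{m^*,\alpha}$ for a suitable $m^*$. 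Since $\overline{Dec}$ reads at most $q$ bits of its oracle (throughout, $q$ denotes the query count of the error-amplified deterministic decoder $\overline{Dec}$ of Claim~\ref{clm:ARLLD}, which is a $\log k$ factor larger than that of the original LLD), for fixed $m,\alpha,i$ the predicate $x\mapsto[\overline{Dec}^{\Enc(m)\oplus x}(i,j_\alpha)=m_i]$ is a height-$q$ decision tree in the bits of $x$ (adding the fixed string $\Enc(m)$ mod $2$ costs nothing), hence a $q$-CNF with $2^q$ clauses; taking the AND over the $k=2^\ell$ choices of $i$ makes each $C_{m,\alpha}$ a $q$-CNF with $2^{q+\ell}$ clauses.

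It remains to choose $m^*$ so that $C=C_{m^*}$ satisfies both bullets. For the first bullet: if $x\from U_n$ then $w:=\Enc(m)\oplus x$ is uniform on $\B^n$ and independent of $m$, and for a fixed $w$ and $j\in[L]$ the tuple $(\overline{Dec}^{w}(i,j))_{i\in[k]}$ is a fixed element of $\B^k$, so conditioning on the worst $w$ and union-bounding over $j$ gives $\Exp_{m\from\B^k}\Pr_{x\from U_n}[C_m(x)=1]\le L\cdot 2^{-k}=2^{-100}\cdot 2^{-(2^\ell-a)}$; by Markov's inequality all but a $2^{-100}$ fraction of messages $m$ satisfy $\Pr[C_m(U_n)=1]\le 2^{-(2^\ell-a)}$ (this is the only place the slack ``$+100$'' in $a=\log L+100$ is used). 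For the second bullet: unwinding the definitions, for fixed $m,\rho$ the event ``$C_m(\Fill_\rho(z))=1$'' is exactly the event ``there is $j\in[L]$ with $\overline{Dec}^{w}(i,j)=m_i$ for all $i\in[k]$'' where $w=\Enc(m)\oplus\Fill_\rho(z)$, so Claim~\ref{clm:ARLLD} reads $\Pr_{(m,\rho)\from\RNSY^1}\big[\Pr_{z\from U^{1/3}_{p\cdot n}}[C_m(\Fill_\rho(z))=1]\ge 0.01\big]\ge 0.51$. Since $m$ and $\rho$ are independent in $\RNSY^1$, a short averaging argument (if fewer than half the messages $m$ had $\Pr_\rho[\cdot]\ge 0.01$ the overall $(m,\rho)$-probability could not exceed $\tfrac12+0.01<0.51$) shows at least half the messages $m$ satisfy $\Pr_{\rho\from\Rest^n_p}[\Pr_{z\from U^{1/3}_{p\cdot n}}[C_m(\Fill_\rho(z))=1]\ge 0.01]\ge 0.01$. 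The ``good'' message sets for the two bullets have measures $>1-2^{-100}$ and $\ge\tfrac12$, hence intersect; take any $m^*$ in the intersection and set $C^\alpha=C_{m^*,\alpha}$, $C=C_{m^*}$.

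The argument is essentially bookkeeping on top of Lemma~\ref{lem:Red => C} and Claim~\ref{clm:ARLLD}, so I do not expect a genuine obstacle. The two points that need care are: (i) the CNF width is the query count of the error-amplified, derandomized decoder $\overline{Dec}$ rather than that of the original local list-decoder — this $\log k$ blowup is precisely where the $\log k$ in the final bound of Theorem~\ref{thm:main:code:small eps} comes from; and (ii) correctly matching the two-stage structure of the experiment $\RNSY$ (the restriction $\rho$ is drawn once in $\RNSY^1$, and only the $p\cdot n$ free coordinates are re-randomized by $U^{1/3}_{p\cdot n}$ in $\RNSY^2$), so that evaluating the CNF on $\Fill_\rho(z)$ captures exactly the decoding-success event of Claim~\ref{clm:ARLLD}; this is what makes the resulting $C$ fit the hypotheses of Lemma~\ref{lem:continuing}, which is then invoked to conclude $q\ge\Omega(1/\eps)$ and hence the theorem.
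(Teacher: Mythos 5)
Your proposal is correct and follows essentially the same route as the paper: it instantiates the argument of Lemma~\ref{lem:Red => C} under the dictionary $m\leftrightarrow f$, $i\leftrightarrow x$, $j\leftrightarrow\alpha$ to get per-message circuits $C_m$, and then uses Markov/averaging over $m$ (exploiting the independence of $m$ and $\rho$ in $\RNSY^1$) to intersect the two sets of good messages and fix a single $m^*$, exactly as the paper does. Your explicit accounting of the $+100$ slack in $a=\log L+100$ and of the width being the query count of the amplified decoder $\overline{Dec}$ is consistent with (and slightly more detailed than) the paper's proof.
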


\begin{proof}
The proof is essentially identical to that of Lemma \ref{lem:Red => C}. More precisely, the message $m \in \B^k$ plays the role of $f \in \B^{2^{\ell}}$, the index $i \in [k]$ plays the role of $x \in \B^{\ell}$, and the index $j \in [L]$, plays the role of $\alpha \in \B^a$. The argument in the proof of Lemma \ref{lem:Red => C} shows that for every $m \in \B^k$, there exists a circuit $C_m$ of the required form such that:
\begin{itemize}[noitemsep]
\item $\Pr_{m \from \B^k,x \from \B^n}[C_m(x)=1] \le 2^{-(2^{\ell}-a)}$.
\item With probability at least $0.51$ over choosing $(m,\rho) \from \RNSY^1$, we have that with probability at least $0.01$ over choosing $(z,w) \from \RNSY^2(m,\rho)$, we have that $C_m(x)=1$, for $x=\Fill_{\rho}(z)$ (as in experiment $\RNSY$).
\end{itemize}
By applying Markov's inequality on each one of the two items, we can obtain that:
\begin{itemize}[noitemsep]
\item $\Pr_{m \from \B^k}\left[\Pr_{x \from \B^n}[C_m(x)=1] \le 2^{-(2^{\ell}-a-100)}\right] \ge 1-2^{-100}$.
\item With probability at least $0.1$ over choosing $m \from \B^k$, we have that with probability at least $0.01$ over choosing $\rho \from \Rest^n_p$, we have that with probability at least $0.01$ over choosing $(z,w) \from \RNSY^2(m,\rho)$, we have that $C_m(x)=1$, for $x=\Fill_{\rho}(z)$.
\end{itemize}
This allows to do a union bound, and obtain that there exists $m \in \B^k$ such that setting $C=C_m$, meets the conclusion of Claim \ref{clm:starting:code}.
\end{proof}

We are finally ready to prove Theorem \ref{thm:main:code:small eps}. Using Lemma \ref{lem:continuing} we conclude that assuming $a+100 \le \frac{2^{\ell}}{10}$ (which for sufficiently large $k$, follows by our assumption that $L \le 2^{k/20}$) we get that $q' \ge \frac{1000}{\eps}$ which gives that $q =\Omega(\frac{1}{\eps \cdot \log k})$, as required.

\section{Conclusion and Open Problems}
\label{sec:open problems}

The most interesting open problem is Open Problem \ref{open problem:extreme}. We hope that Theorem \ref{thm:fix} may help to point us to new constructions.

However, it is possible that the answer to Open Problem \ref{open problem:exist proof?} is negative, showing that black-box proofs cannot be used to solve Open Problem \ref{open problem:extreme}. Is this the case? If it is, can we show this?

Easier problems towards showing a negative answer to Open Problem \ref{open problem:exist proof?} are:
\begin{itemize}[noitemsep]
\item Is it true that in any black-box $\HFtoPRG$ proof, the number of queries $q \ge m$ or even $q \ge m^2$? A positive answer will show that the cost of the hybrid argument (in terms of the number queries used by the reduction) is unavoidable in black-box $\HFtoPRG$ proofs.
\item We don't know whether a super-constant number of queries is necessary for constant $\eps$. Can we show a super-constant lower bound on the number of queries of a reduction for a black-box $\HFtoPRG$ proof?
\item In fact, we don't even know to show a $q>1$ lower bound for black-box $\rhoHFtoepsPRG{\rho}{\eps}$ proof for constant $\eps$, and $\rho = \half+\eps$. Can we show this?
\item In all cases above, the question is open even for small values of $a$ (that do not apply in the extreme high-end).
\end{itemize}

Another approach to hardness vs. randomness was very recently suggested by Chen and Tell \cite{CT2}. They use an assumption which is less standard, and incomparable to \textsc{the extreme high-end hardness assumption} to construct ``target PRGs'' which are weaker than PRGs, but suffice for fast derandomization of randomized algorithms. It is interesting to investigate the power of this approach. For more details on this approach and exciting recent developments in this area, see the survey paper \cite{CTsurvey}.

Finally, another open problem is to further improve our lower bounds on the number of queries for reductions for black-box hardness amplification. More specifically, our improved lower bounds for hardness amplification apply in the extreme high-end, but are not tight. We only get $q \ge \max(\Omega(\ell),\Omega(\frac{1}{\eps}))$, whereas the known upper bounds give $q=O(\frac{\ell}{\eps^2})$. Can we prove a matching lower bound that applies in the extreme high-end? (namely, for $a=2^{(1-o(1)) \cdot \ell}$). Such lower bounds were given by \cite{GSV18} for the high-end, namely for $a \le 2^{\nu \cdot \ell}$ for a constant $\nu>0$.

\section*{Acknowledgements}

We are grateful to anonymous referees for very helpful comments and suggestions.

\bibliographystyle{alpha}
\bibliography{bbPRGlimitation}

\newcommand{\etalchar}[1]{$^{#1}$}
\begin{thebibliography}{BFNW93}

\bibitem[AASY16]{AASY15}
B.~Applebaum, S.~Artemenko, R.~Shaltiel, and G.~Yang.
\newblock Incompressible functions, relative-error extractors, and the power of
  nondeterministic reductions.
\newblock {\em Computational Complexity}, 25(2):349--418, 2016.

\bibitem[ABK{\etalchar{+}}06]{power}
E.~Allender, H.~Buhrman, M.~Kouck{\'{y}}, D.~van Melkebeek, and D.~Ronneburger.
\newblock Power from random strings.
\newblock {\em {SIAM} J. Comput.}, 35(6):1467--1493, 2006.

\bibitem[AIKS16]{AIKS16}
S.~Artemenko, R.~Impagliazzo, V.~Kabanets, and R.~Shaltiel.
\newblock Pseudorandomness when the odds are against you.
\newblock In {\em 31st Conference on Computational Complexity}, pages
  9:1--9:35, 2016.

\bibitem[AS14]{AS11}
S.~Artemenko and R.~Shaltiel.
\newblock Lower bounds on the query complexity of non-uniform and adaptive
  reductions showing hardness amplification.
\newblock {\em Computational Complexity}, 23(1):43--83, 2014.

\bibitem[Bea94]{BeamePrimer}
P.~Beame.
\newblock A switching lemma primer.
\newblock Technical Report UW-CSE-95-07-01, University of Washington, 1994.

\bibitem[BFNW93]{BFNW}
L.~Babai, L.~Fortnow, N.~Nisan, and A.~Wigderson.
\newblock Bpp has subexponential time simulations unless exptime has
  publishable proofs.
\newblock {\em Computational Complexity}, 3:307--318, 1993.

\bibitem[BM84]{BM}
M.~Blum and S.~Micali.
\newblock How to generate cryptographically strong sequences of pseudo-random
  bits.
\newblock {\em SIAM Journal on Computing}, 13(4):850--864, November 1984.

\bibitem[BSW03]{BSW03}
B.~Barak, R.~Shaltiel, and A.~Wigderson.
\newblock Computational analogues of entropy.
\newblock In {\em RANDOM-APPROX}, pages 200--215, 2003.

\bibitem[CGI{\etalchar{+}}16]{CarmosinoGIMPS16}
M.~L. Carmosino, J.~Gao, R.~Impagliazzo, I.~Mihajlin, R.~Paturi, and
  S.~Schneider.
\newblock Nondeterministic extensions of the strong exponential time hypothesis
  and consequences for non-reducibility.
\newblock In {\em Proceedings of the 2016 {ACM} Conference on Innovations in
  Theoretical Computer Science, {ITCS}}, pages 261--270, 2016.

\bibitem[CIKK16]{CarmosinoEtAl}
M.~L. Carmosino, R.~Impagliazzo, V.~Kabanets, and A.~Kolokolova.
\newblock Learning algorithms from natural proofs.
\newblock In Ran Raz, editor, {\em 31st Conference on Computational Complexity,
  {CCC}}, volume~50 of {\em LIPIcs}, pages 10:1--10:24. Schloss Dagstuhl -
  Leibniz-Zentrum f{\"{u}}r Informatik, 2016.

\bibitem[CT21a]{CT2}
L.~Chen and R.~Tell.
\newblock Hardness vs randomness, revised: Uniform, non-black-box, and
  instance-wise.
\newblock In {\em 62nd {IEEE} Annual Symposium on Foundations of Computer
  Science, {FOCS}}, pages 125--136, 2021.

\bibitem[CT21b]{CT1}
L.~Chen and R.~Tell.
\newblock Simple and fast derandomization from very hard functions: eliminating
  randomness at almost no cost.
\newblock In {\em {STOC} '21: 53rd Annual {ACM} {SIGACT} Symposium on Theory of
  Computing}, pages 283--291. {ACM}, 2021.

\bibitem[CT23]{CTsurvey}
L.~Chen and R.~Tell.
\newblock New ways of studying the {BPP} = {P} conjecture.
\newblock {\em Electron. Colloquium Comput. Complex.}, 2023.

\bibitem[DMOZ20]{DMOZ}
D.~Doron, D.~Moshkovitz, J.~Oh, and D.~Zuckerman.
\newblock Nearly optimal pseudorandomness from hardness.
\newblock In {\em 61st {IEEE} Annual Symposium on Foundations of Computer
  Science, {FOCS}}, pages 1057--1068. {IEEE}, 2020.

\bibitem[GM84]{GM84}
S.~Goldwasser and S.~Micali.
\newblock Probabilistic encryption.
\newblock {\em Journal of Computer and System Sciences}, 28(2):270--299, April
  1984.
\newblock Preliminary version appeared in STOC' 82.

\bibitem[GSV18]{GSV18}
A.~Grinberg, R.~Shaltiel, and E.~Viola.
\newblock Indistinguishability by adaptive procedures with advice, and lower
  bounds on hardness amplification proofs.
\newblock In {\em 59th {IEEE} Annual Symposium on Foundations of Computer
  Science, {FOCS}}, pages 956--966, 2018.

\bibitem[Gur06]{Gur-survey06}
V.~Guruswami.
\newblock Algorithmic results in list decoding.
\newblock {\em Foundations and Trends in Theoretical Computer Science}, 2(2),
  2006.

\bibitem[GW14]{GWquantified}
O.~Goldreich and A.~Wigderson.
\newblock On derandomizing algorithms that err extremely rarely.
\newblock In {\em Symposium on Theory of Computing, {STOC}}, pages 109--118.
  {ACM}, 2014.

\bibitem[H{\aa}s86]{Hastad}
J.~H{\aa}stad.
\newblock Almost optimal lower bounds for small depth circuits.
\newblock In {\em STOC}, pages 6--20, 1986.

\bibitem[HILL99]{HILL}
J.~H{\aa}stad, R.~Impagliazzo, L.~A. Levin, and M.~Luby.
\newblock A pseudorandom generator from any one-way function.
\newblock {\em SIAM J. Comput.}, 28(4):1364--1396, 1999.

\bibitem[Hir18]{Hirahara18}
S.~Hirahara.
\newblock Non-black-box worst-case to average-case reductions within {NP}.
\newblock In {\em 59th {IEEE} Annual Symposium on Foundations of Computer
  Science, {FOCS}}, pages 247--258, 2018.

\bibitem[Imp95]{I95}
R.~Impagliazzo.
\newblock Hard-core distributions for somewhat hard problems.
\newblock In {\em FOCS}, pages 538--545, 1995.

\bibitem[ISW99]{ISW99}
R.~Impagliazzo, R.~Shaltiel, and A.~Wigderson.
\newblock Near-optimal conversion of hardness into pseudo-randomness.
\newblock In {\em FOCS}, pages 181--190, 1999.

\bibitem[ISW06]{ISWJournal}
R.~Impagliazzo, R.~Shaltiel, and A.~Wigderson.
\newblock Reducing the seed length in the nisan-wigderson generator.
\newblock {\em Combinatorica}, 26(6):647--681, 2006.

\bibitem[IW97]{IW97}
R.~Impagliazzo and A.~Wigderson.
\newblock {$\mathit{P} = \mathit{BPP}$} if {$E$} requires exponential circuits:
  Derandomizing the {XOR} lemma.
\newblock In {\em STOC}, pages 220--229, 1997.

\bibitem[KvM02]{KvM}
A.~Klivans and D.~van Melkebeek.
\newblock Graph nonisomorphism has subexponential size proofs unless the
  polynomial-time hierarchy collapses.
\newblock {\em SIAM J. Comput.}, 31(5):1501--1526, 2002.

\bibitem[MV05]{MV99}
P.~Bro Miltersen and N.~V. Vinodchandran.
\newblock Derandomizing arthur-merlin games using hitting sets.
\newblock {\em Computational Complexity}, 14(3):256--279, 2005.

\bibitem[NW94]{NW}
N.~Nisan and A.~Wigderson.
\newblock Hardness vs. randomness.
\newblock {\em JCSS: Journal of Computer and System Sciences}, 49, 1994.

\bibitem[RRV02]{RRV}
R.~Raz, O.~Reingold, and S.~P. Vadhan.
\newblock Extracting all the randomness and reducing the error in trevisan's
  extractors.
\newblock {\em J. Comput. Syst. Sci.}, 65(1):97--128, 2002.

\bibitem[RSV21]{RSV21}
N.~Ron{-}Zewi, R.~Shaltiel, and Nithin Varma.
\newblock Query complexity lower bounds for local list-decoding and hard-core
  predicates (even for small rate and huge lists).
\newblock In {\em 12th Innovations in Theoretical Computer Science Conference,
  {ITCS}}, volume 185 of {\em LIPIcs}, pages 33:1--33:18. Schloss Dagstuhl -
  Leibniz-Zentrum f{\"{u}}r Informatik, 2021.

\bibitem[RTS00]{RT}
J.~Radhakrishnan and A.~Ta-Shma.
\newblock Bounds for dispersers, extractors, and depth-two superconcentrators.
\newblock {\em SIAM Journal on Discrete Mathematics}, 13(1):2--24, 2000.

\bibitem[STV01]{STV}
M.~Sudan, L.~Trevisan, and S.~P. Vadhan.
\newblock Pseudorandom generators without the xor lemma.
\newblock {\em J. Comput. Syst. Sci.}, 62(2):236--266, 2001.

\bibitem[SU05]{SU01}
R.~Shaltiel and C.~Umans.
\newblock Simple extractors for all min-entropies and a new pseudorandom
  generator.
\newblock {\em J. ACM}, 52(2):172--216, 2005.

\bibitem[SU06]{SU05}
R.~Shaltiel and C.~Umans.
\newblock Pseudorandomness for approximate counting and sampling.
\newblock {\em Computational Complexity}, 15(4):298--341, 2006.

\bibitem[SV10]{SV08}
R.~Shaltiel and E.~Viola.
\newblock Hardness amplification proofs require majority.
\newblock {\em SIAM J. Comput.}, 39(7):3122--3154, 2010.

\bibitem[Tel21]{TellSurvey}
R.~Tell.
\newblock How to find water in the ocean: {A} survey on quantified
  derandomization.
\newblock {\em Electron. Colloquium Comput. Complex.}, page 120, 2021.

\bibitem[Tre01]{Tre99}
L.~Trevisan.
\newblock Extractors and pseudorandom generators.
\newblock {\em Journal of the ACM}, 48, 2001.

\bibitem[TZ04]{TZ}
A.~Ta{-}Shma and D.~Zuckerman.
\newblock Extractor codes.
\newblock {\em {IEEE} Trans. Information Theory}, 50(12):3015--3025, 2004.

\bibitem[Uma03]{Umans02}
C.~Umans.
\newblock Pseudo-random generators for all hardnesses.
\newblock {\em Journal of Computer and System Sciences}, 67:419--440, 2003.

\bibitem[Uma09]{Umans05}
C.~Umans.
\newblock Reconstructive dispersers and hitting set generators.
\newblock {\em Algorithmica}, 55(1):134--156, 2009.

\bibitem[Vio05]{ViolaPH}
E.~Viola.
\newblock The complexity of constructing pseudorandom generators from hard
  functions.
\newblock {\em Computational Complexity}, 13(3-4):147--188, 2005.

\bibitem[Vio06]{ViolaThesis}
E.~Viola.
\newblock {\em The Complexity of Hardness Amplification and Derandomization}.
\newblock PhD thesis, Harvard University, 2006.
\newblock http://www.eccc.uni-trier.de/eccc.

\bibitem[Yao82]{Yao82}
A.~C. Yao.
\newblock Theory and applications of trapdoor functions (extended abstract).
\newblock In {\em FOCS}, pages 80--91, 1982.

\bibitem[Yek12]{Yekhanin12}
S.~Yekhanin.
\newblock Locally decodable codes.
\newblock {\em Found. Trends Theor. Comput. Sci.}, 6(3):139--255, 2012.

\end{thebibliography}

\end{document}